\documentclass[12pt]{article}
\usepackage{authblk}
\RequirePackage{amsthm,amsmath}
\RequirePackage[numbers]{natbib}
\RequirePackage[colorlinks,citecolor=blue,urlcolor=blue]{hyperref}
\usepackage{graphicx}
\usepackage[utf8]{inputenc} %
\usepackage{amsfonts}       
\usepackage{nicefrac}       
\usepackage{microtype}      
\usepackage{lipsum}
\usepackage{subfig} 
\usepackage{caption}
\usepackage{multirow}
\usepackage{multicol}
\usepackage{setspace}
\usepackage{footnote}
\usepackage{diagbox} 
\usepackage{float}
\usepackage{threeparttable}
\usepackage{booktabs} 
\usepackage{amsmath}
\usepackage{amsthm}
\usepackage{amssymb}
\usepackage{color}
\usepackage{mathtools}

\usepackage{enumitem}


\newtheorem{lemma}{Lemma}
\theoremstyle{plain}
\newtheorem{theorem}{Theorem}
\newtheorem{prop}{Proposition}
\newtheorem{remark}{Remark}
\newtheorem{corollary}{Corollary}
\newtheorem{assumption}{Assumption}

\newcommand{\R}{\mathbb R}
\def\tr{\text{tr}}
\def\t{\intercal}
\def\F{\mathcal{F}}
\def\la {\langle}
\def\ra {\rangle}
\def\ba{\boldsymbol{a}}
\def\bv{\boldsymbol{v}}
\def\bx{\boldsymbol{x}}

\def\bmu{\boldsymbol{\mu}}
\def\I{\mathbb{I}}
\def\Pihat{\hat{\Pi}}
\def\Hhat{\hat{H}}
\def\Zhat{\hat{Z}}
\def\bvhat{\hat{\bv}}

\def\bu{\boldsymbol{u}}

\def\by{\boldsymbol{y}}

\def\bdot{\boldsymbol{\cdot}}
\usepackage[toc,page]{appendix}
\linespread{1.25}
\def\by{\boldsymbol{y}}
\def\bx{\boldsymbol{x}}
\def\be{\boldsymbol{e}}

\def\Sigmahat{\hat{\Sigma}}




\author[1]{Lin Xiao}
\author[1]{Luo Xiao}
\affil[1,2]{North Carolina State University}
\date{}

\begin{document}

\title{Sparse and Integrative Principal Component Analysis for Multiview Data}

\maketitle

 \begin{abstract}
     We consider dimension reduction of multiview data, which are emerging in scientific studies. 
     Formulating multiview data as multivariate data with block structures corresponding to the different views, or views of data,
     we estimate top eigenvectors from multiview data that have two-fold sparsity, elementwise sparsity and blockwise sparsity. We propose a Fantope-based optimization criterion with multiple penalties to enforce the desired sparsity patterns and a denoising step is employed to handle potential presence of heteroskedastic noise across different data views. An alternating direction method of multipliers (ADMM) algorithm is used for optimization. We derive the
     $\ell_2$ convergence of the estimated top eigenvectors and establish their sparsity and support recovery properties. 
     Numerical studies are used to illustrate the proposed method.
\end{abstract}


\maketitle

\section{Introduction}\label{intro}

Multiview data which refers to multiple sets of features generated on the same set of subjects has become increasingly prevalent in a variety of fields. In biomedical research, multiple types of genomics
data measured by different technologies on disparate platforms have been produced to gain insight into cancer genetics and molecular biology (\cite{lock2013bayesian}). A typical example is The Cancer
Genome Atlas (TCGA, \url{https://www.cancer.gov/about-nci/organization/ccg/research/structural-genomics/tcga}) which consists of measurements of RNA gene expression (GE), DNA methylation (ME), miRNA expression (miRNA), and reverse phase protein array (RPPA) on a
common set of 348 breast cancer tumor samples and each dataset is referred to as a data
view. In the area of multimedia analytics, multiview data can be generated by representing
images/texts/videos from disparate perspectives. For instance, the UCI multiple features
dataset (\url{https://archive.ics.uci.edu/ml/datasets/Multiple+Features}) contains 2000 images for handwritten numerals from ‘0’ to ‘9’, where each image is presented
in terms of six categories of visual features, including Fourier coefficients, profile correlations, Karhunen-Loeve coefficients, pixel values, Zernike moments and morphological features. 

One major focus of multiview data analysis is to uncover the underlying associations across different data views. For example, the four data views in TCGA might represent different biological components but they are also considered biologically related since they are generated from the same set of subjects. Joint analysis of all data views makes it possible to identify this type of association. Thus, there has been growing interest and demands in 
effectively combining the strength of each data view
through an integrative statistical learning paradigm, e.g., integrative clustering (\cite{kumar2011co},\cite{ye2018new},\cite{gao2020clusterings},  \cite{wang2021integrative}), integrative classification (\cite{shu2019multi},\cite{zhang2021joint}) and integrative regression (\cite{li2019integrative}).

One fundamental tool in the joint analysis of multiview data is dimension reduction, through which  meaningful decomposition of variations across different data views is identified. Canonical correlation analysis (CCA, \cite{hotelling1936relations}) is a classic method to capture joint variation between two data views and has been extended  to handle more than two data views \citep{witten2009penalized} with a number of variants being proposed (\cite{min2020sparse}, \cite{akaho2006kernel}, \cite{becker1992self}). CCA-based methods assume that all data views are generated from a common latent subspace and thus ignore individual variation specific to each data view. With more than two data views available, it might be also of interest to identify variation shared by only a subset of data views.
To remedy the limitations of CCA, 
the joint and individual variations explained (JIVE, \citep{lock2013joint}) method proposes a decompsoition model in which joint variation shared by all data views as well as individual variation specific to each data view are taken into account. 
Several variants of JIVE (COBE \citep{zhou2015group}; AJIVE \citep{feng2018angle}) have also been proposed to provide identifiable JIVE-type decomposition for multiview data. Going further, methods that can incorporate partially shared structures have been developed as well, including deterministic matrix decompositions (SLIDE, \citep{gaynanova2019structural}; \cite{jia2010factorized}) and probabilistic approaches  \citep{klamiGFA}. Specifically, deterministic models (SLIDE, \cite{gaynanova2019structural}; \cite{jia2010factorized}) allow different types of structures (joint/partially shared/individual) by inducing structured sparsity on the loading matrix while probabilistic models \citep{klamiGFA} achieve so by introducing a structural sparsity prior. Recently, two matrix decomposition based approaches, BIDIFAC (\cite{park2020integrative}) and BIDIFAC+ (\cite{lock2020bidimensional}), have been developed for bidimensionally linked data that are matched both at the sample level and the feature level. The uni-dimensional version of BIDIFAC and BIDIFAC+ can be applied to extract different types of variations for multiview data as well. It is worth mentioning that a related but distinct area is joint analysis of linked data,
where different populations of subjects are measured on the same set of features; see, e.g., \cite{flury1984common}, \cite{wang2021semiparametric} and \cite{chen2022two}.

In this work, our goal is to develop a theoretically sound approach for the joint analysis of multiview data. Motivated by the fact that the integrative approaches based on deterministic matrix decomposition (JIVE, SLIDE, BIDIFAC, BIDIFAC+) mentioned above can be essentially regarded as generalizations of traditional principal component analysis with different sparsity patterns (\cite{lock2013joint}, \cite{gaynanova2019structural}), we aim to develop a method directly using the framework of sparse PCA, for which there is abundant body of works that can be borrowed and statistical properties are much more tractable. 



As an important tool for dimension reduction, sparse PCA has attracted lots of attention in the last two decades and we refer to \cite{zou2018selective} for an extensive literature review on sparse PCA methods. One dominant line of work formulates sparse PCA as a an optimization problem with a non-smooth objective function and non-convex constraints. Some important approaches in this field include variance-maximization based formulation (\cite{jolliffe2003modified}, \cite{journee2010generalized}), regression based formulation \citep{zou2006sparse}, singular value decomposition based formulation (\cite{shen2008sparse}, \cite{witten2009penalized}). One computational challenge associated with the non-smooth and non-convex formulation is that some existing numerical algorithms are unable to provide convergence guarantee. Recently there have been a few works that introduce new algorithms that can solve non-convex problems with guarantee of convergence to a stationary point (\cite{chen2019alternating}, \cite{erichson2020sparse}). But a good starting point is still crucial for these algorithms to work out (\cite{chen2019alternating}). The other line of sparse PCA work is based on convex relaxation. \cite{d2004direct} formulates sparse PCA as a semidefinite programming problem using convex relaxation. 
\cite{vu2013fantope} develops Fantope projection and selection (FPS) with good computational and theoretical properties \citep{lei2015sparsistency}. FPS focuses on the sparsity of the whole principal subspace and does not allow individual eigenvectors to have different support regions. 
\cite{chen2015localized} addresses this issue by proposing a deflated Fantope based method, which sequentially estimates the eigenvectors and allows different support regions across eigenvectors. 

We propose a new sparse and integrative principal component analysis (SIPCA) for multiview data.
We concatenate multiview data vectors into a single multivariate data vector with variables within each data view grouped together as blocks and consider the popular spiked covariance model.
To accommodate hetereogenity in the data as the views might measure different characteristics, we allow the variance of errors to differ across views.
We formulate a convex optimization problem through Fantope (\citep{vu2013fantope}) and impose a combination of penalties to induce two-fold sparsity on the eigenvectors: elementwise sparsity and blockwise sparsity. As shall be shown later, an eigenvector with blocks of zeros corresponding to variation shared by a subset of views and thus such sparsity patterns are desired as they reveal the associations among the views, similar to the matrix decomposition methods such as JIVE (\citep{lock2013joint}) and SLIDE (\citep{gaynanova2019structural}). The Fantope based formulation is appealing for several reasons. First, it guarantees a convergent algorithm. Indeed, it allows us to develop an alternating direction method of multipliers (ADMM, \cite{boyd2011distributed}) algorithm for estimation, which guarantees convergence under mild conditions. Second, it could naturally accommodate data heterogeneity and allows us to use a denoised sample covariance in the formulation; see Section \ref{Methods} for details. 
Third, it enables us to investigate the theoretical properties of SIPCA. We have derived the $\ell_2$ rate of convergence of the estimated eigenvectors from SIPCA and also established their sparsity and support recovery properties. To our best knowledge, this is the first work with theoretical guarantee in the multiview data analysis literature.
Moreover, the theoretical work also contributes to the PCA literature when multiple penalties are presented.

As we are preparing this manuscript, we have become aware that \cite{zhang2021interpretable} proposes a similar Fantope based method for multivariate functional data. While our focus is on multiview data, the methods are similar. One major methodological difference is that we denoise the sample covariance to accommodate data heterogeneity which is necessary for multiview data. However, such a step might not be needed for functional data as smoothing can be used to remove errors. Another important difference is that \cite{zhang2021interpretable} does not have a theoretical study to validate their method, while  theoretical validation is a major focus of our work. It is  worth noting that
the theoretical properties established in our work can be applied to multivariate functional data, thus providing a theoretical support of \cite{zhang2021interpretable} for multivariate functional data.

The rest of this manuscript is structured as follows. In Section \ref{Methods}, we formulate the SIPCA and describe the estimation procedure and the optimization algorithm in detail. In Section \ref{theory}, we provide a theoretical analysis for the proposed estimate. Section \ref{simulation} presents extensive simulations comparing SIPCA with existing methods. In section \ref{application}, we apply SIPCA to the aforementioned TCGA data. We conclude with a discussion in Section \ref{discussion}. Technical details and additional materials are provided in the Appendix.

\subsection{Notation}
For a vector or matrix, we use $^{\t}$ to denote its transpose.
For any vector $\ba = (a_i)$, $\|\ba\|_\infty = \max_i |a_i|$,
$\|\ba\|_1 = \sum_i |a_i|$, and $\|\ba\|_2 = \sqrt{\sum_i a_i^2}$.
For any matrix $A = (A_{ij})$, 
$\|A\|_{\infty,\infty} = \max_{i, j} |A_{ij}|$,
$\|A\|_{1,1} = \sum_{i, j} |A_{ij}|$,
and $\|A\|_2 = \sqrt{\sum_{i, j} A_{ij}^2}$.
For a square matrix $A = (A_{ij})$, $\tr(A) = \sum_i A_{ii}$.
For two matrices $A = (A_{ij})$ and $B= (B_{ij})$ of the same dimensions,
 $\langle A, B\rangle = \tr(AB^{\t})$. For any set $J$, let $J^c$ be the complement of $J$. For two sets $J_1$ and $J_2$, $J_1\setminus J_2 = J_1\cap J_2^c$. For any real value $x$, let $x_{+}=\max(0,x)$.
 Let $\I(\cdot)$ be the indicator function. For any nonnegative number $a$, it is assumed that $a/0 = +\infty$.

\section{Sparse spiked covariance model for multiview data}\label{Methods}

Consider multiview data with $I$ views.
We concatenate the data vector from all data views for one sample and consider the popular spiked covariance model (\cite{johnstone2001distribution}, \cite{baik2005phase}, \cite{baik2006eigenvalues}, \cite{paul2007asymptotics}, \cite{donoho2018optimal})
for the concatenated sample:
\begin{equation}
\label{spiked:general}
\begin{split}
\by &= \bx + \be,\\
\bx &= (\bx_1^{\t},\ldots, \bx_I^{\t})^{\t}, \,\, \mathbb{E} \bx = \bmu,\,\, \text{Cov}(\bx) = \Sigma_1,\\
\be & = (\be_1^{\t},\ldots, \be_I^{\t})^{\t},\,\, \mathbb{E} \be = 0, \,\,\text{Cov}(\be) = \Sigma_e,\\
\bx, &\, \be_1,\ldots, \be_I \text{ are mutually independent}.
\end{split}
\end{equation}
In the model, $\bx\in \R^p$ is the signal vector, $\bx_i\in \R^{p_i}$ is the sub vector corresponding to the $i$th data view, and $p = \sum_i p_i$. We shall discuss the signal covariance matrix $\Sigma_1$ corresponding to the signal $\bx$ later.
The noise vector $\be$ is similarly partitioned into $I$ sub noise vectors. 
For the noise covariance $\Sigma_e$, we allow the noise variances to vary across different data views. Specifically,
\begin{equation}
\label{heter_noise}
    \Sigma_e = \text{blockdiag}(\sigma_1^2I_{p_1}, \cdots, \sigma_I^2I_{p_I}),
\end{equation}
where $\sigma_i^2, 1\leq i \leq I$, might be distinct. Such an assumption seems reasonable or even necessary to accommodate data heterogeneity as the different data views may be measured differently. Let $\Sigma$ be the covariance of $\by$, then $\Sigma = \Sigma_1 + \Sigma_e$.

We assume that the signal covariance matrix $\Sigma_1$ is of  rank $r$ and has the eigendecomposition $\Sigma_1 = \sum_{j=1}^r \gamma_j \bv_j \bv_j^{\t}$,
where $\lambda_1\geq \lambda_2\geq\cdots\geq\lambda_r> 0$ are  eigenvalues
and $\bv_j\in \R^p$ are eigenvectors satisfying $\la \bv_j, \bv_k\ra = \I(j=k)$. To ensure identifiability of the eigenvectors, we shall assume that the eigenvalues are distinct.
Similar to the blockwise partition of the signal vector $\bx$, the eigenvector $\bv_j$ can be partitioned as $\bv_j = ((\bv_{j}^1)^{\t},\ldots, (\bv_{j}^I)^{\t})^{\t}$ with
$\bv_{j}^i\in \R^{p_i}$ corresponding to the $i$th data view.
The vector $\bv_{j}^{i}$ quantifies the contribution of the $i$th data view to the $j$th principal component. Indeed, if $\|\bv_{j}^i\|_2 = 0$, then the $i$th data view does not contribute to the variation in the signal along the direction of the $j$th principal component. 
Likewise, $\Sigma_1$ can be partitioned into  blocks of submatrices $\left\{ (\Sigma_1)^{k\ell} \right\}_{1\leq k, \ell \leq I }$ with the $(k,\ell)th$ block $(\Sigma_1)^{k\ell}$ being defined as follows:
\begin{equation*}
    (\Sigma_1)^{k\ell} = \sum_{j=1}^r \lambda_j \bv_{j}^{k}(\bv_{j}^{\ell})^{\t} \in \R^{p_k\times p_\ell}.
\end{equation*}
The matrix $(\Sigma_1)^{k\ell}$ quantifies the covariance in the signal vector between the $k$th and $\ell$th data views. Notice that for an arbitrary $p\times p$ matrix, we can always partition it into blocks of submatrices.


In previous works (BIDIFAC+ \cite{lock2020bidimensional}, SLIDE \cite{gaynanova2019structural}), the variation of the multiview data is decomposed into three types: (i) joint variation shared by all data views; (ii) partially-shared variation shared by only a subset of data views; and (iii) individual variation unique to one data view. To that end, we impose blockwise sparsity structures on the eigenvectors. Take $I=3$ data views for instance. For an arbitrary eigenvector $\bv = \left((\bv^1)^{\t}, (\bv^2)^{\t}, (\bv^3)^{\t}\right)^{\t}$ which can be partitioned into three blocks, the potential blockwise sparsity structures it can possess include three types as listed  below:
\begin{itemize}
\item[(1)] Type I (joint structure): $\left((\bv^1)^{\t}, (\bv^2)^{\t}, (\bv^3)^{\t}\right)^{\t}$.
\item[(2)] Type II (partially shared structure): $\left(0^{\t}, (\bv^2)^{\t}, (\bv^3)^{\t}\right)^{\t}$, $\left((\bv^1)^{\t}, 0^{\t}, (\bv^3)^{\t}\right)^{\t}$, $\left((\bv^1)^{\t}, (\bv^2)^{\t}, 0^{\t}\right)^{\t}$.
\item[(3)] Type III (individual structure): $\left((\bv^1)^{\t}, 0^{\t}, 0^{\t}\right)^{\t}$, $\left(0^{\t}, (\bv^2)^{\t}, 0^{\t}\right)^{\t}$, $\left(0^{\t}, 0^{\t}, (\bv^3)^{\t}\right)^{\t}$.
\end{itemize}
The  variation along  an eigenvector with type I structure is contributed by all three data views and hence the variation is called joint variation; and type I structure of the eigenvector is referred to as a joint structure. 
As for type II structure, two views share a principal component score that is not present in the third view. Hence, an eigenvector of type II structure characterizes variation partially shared by two out of three views and
type II structure is referred to as a partially shared structure.
Lastly, eigenvectors with individual structures only capture variation specific to a single data view. Such a definition is straightforward to be generalized to any number of data views. Notice with $I\geq 4$ data views, a type II eigenvector may have $1$ to $I-1$  blocks of zeros.

Identification of the above three types of structures improves the interpretation of the principal component scores and help us understand
how the data views relate to each other. In particular, it enables us to quantify how much variation of a data view is shared with other data views and how much variation is exclusive to this data view.
In addition to the blockwise sparsity structure that are described above, we also would like to impose within-block variable sparsity. More precisely, we aim to identify relevant variables in each data view that contribute to the variation explained along a particular eigenvector.

We now introduce two types of sparsity index  to measure the elementwise and blockwise sparsity level of a vector of dimension $p$. First, for an arbitrary vector $\bv \in \R^p$, we define two index sets as below:
  \begin{equation*}
  \begin{split}
 &J_1(\bv)=\left\{i: v_i\neq 0\right\},\\
 &J_2(\bv)=\left\{k: \|\bv^{k} \|_2\neq 0\right\},
 \end{split}
 \end{equation*}
where $J_1(\bv)$ and $J_2(\bv)$ refer to the collection of indices corresponding to the elementwise and blockwise nonzero entries of $\bv$ respectively. Further, the cardinalities of $J_1(\bv)$ and $J_2(\bv)$ are denoted as $|J_1(\bv)|$ and $|J_2(\bv)|$ respectively.

Given $n$ independent and identically distributed copies of $\by$
from the spiked covariance model \eqref{spiked:general}, 
the aim is to estimate the eigenvectors of the signal covariance $\Sigma_1$
and also to identify their two levels of sparsity structures: blockwise sparsity and elementwise sparsity.


\subsection{Model estimation}
\label{sec:hetero}
The heteroskedasticity in noises across data views in \eqref{heter_noise} leads to one major challenge in the model estimation: the eigenvectors of covariance of the observed data, i.e., $\Sigma$ and those of the signal covariance $\Sigma_1$ are not necessarily the same. Thus, performing standard  PCA or sparsity-inducing PCA directly on the sample covariance matrix might give inconsistent estimates even when the dimension $p$ is small compared to the sample size $n$. As a remedy, \cite{florescu2016spectral} propose to replace the diagonal entries of the sample covariance matrix with zeros before performing PCA, which however again might change the targeted principal subspace, i.e., the space spanned by the eigenvectors of the signal covariances. 
\cite{zhang2022aos} develop HeteroPCA which estimates the signal principal subspace iteratively and in each iteration
the diagonal entries of the sample covariance matrix are replaced by estimates from the off-diagonals. However, HeteroPCA has two key drawbacks. First, it implicitly requires prior knowledge of the rank of the signal covariance matrix, which is often unknown. Second, HeteroPCA is an iterative algorithm without convergence guarantee. 

Our idea is to first denoise the sample covariance matrix, denoted as $\Sigmahat$. Note that the  covariance matrix of the $i$th  data view is a standard spiked covariance model with homoskedastic noise variance $\sigma_i^2$.
We adopt the method of bulk eigenvalue matching analysis (BEMA, \cite{ke2021estimation}) to each individual data view to obtain an estimate of $\sigma_i^2$, denoted as $\hat{\sigma}_i^2$. Briefly, for a spiked covariance model, BEMA utilizes a number of bulk empirical eigenvalues of the sample covariance in the middle range to estimate the theoretical distribution of sample eigenvalues of the signal covariance and then 
the noise estimate $\hat{\sigma}_i^2$ can  be found by minimizing the sum of squared difference between the bulk empirical eigenvalues and their theoretical limit. Compared to other related approaches (e.g., \cite{gavish2014optimal}, \cite{kritchman2009non}), BEMA is appealing because it seems to provide more accurate estimation empirically and its theoretical validity has also been established.

Given $\hat{\sigma}_i^2, 1\leq i \leq I$, from BEMA,  the additive structure in the spiked covariance model \eqref{spiked:general} motivates a straightforward pointwise estimate of $\Sigma_1$ shown as below,
\begin{equation*}
    S = \Sigmahat - \text{blockdiag}(\hat{\sigma}_1^2I_p, \cdots, \hat{\sigma}_I^2I_p).
\end{equation*}
The estimate $S$ can be viewed as a denoised version of the sample covariance $\Sigmahat$ and based on $S$, we consider estimating the eigenvectors $\bv_j (1\leq j \leq r$) associated with the signal covariance $\Sigma_1$. 

We start with estimation of the first  eigenvector $\bv_1$. Recall that we aim to induce two levels of sparsity in $\bv_1$, which can be accommodated by proposing a variance maximization based formulation with a combination of a lasso penalty and a group lasso penalty \citep{simon2013sparse}. Such a straightforward formulation might not be computationally appealing due to the non-convexity nature of the problem and it would become even more complex for higher-level eigenvector estimation problem which has an additional orthogonality constraint. So we follow the Fantope-based sparse PCA framework (\cite{vu2013fantope}, \cite{chen2015localized}, \cite{zhang2021interpretable}), and propose the following optimization problem:
\begin{equation*}
   \max_{H \in \F^{1}} \la S,H\ra-\lambda \beta \| H \|_{1,1} -\lambda (1-\beta) \|H\|_{1,1}^{\ast},
\end{equation*}
where $\F^1 = \{H\in \R^{p\times p}: 0\preceq H\preceq I_p \text{ and } \tr(H) = 1\}$ denotes the Fantope of degree one. Moreover, the first penalty  $\|H\|_{1,1}=\sum_{i=1}^{p}\sum_{j=1}^{p}|H_{ij}|$ is the matrix version of Lasso penalty, and the second penalty $\|H \|_{1,1}^{\ast}=\sum_{k=1}^{I}\sum_{\ell=1}^{I}w_{k\ell}\| H^{k\ell}\|_2$ is the matrix version of group lasso penalty.
Note that $H^{k\ell}$ denotes the $(k,\ell)$th sub block of $H$ and $w_{k\ell}$ is the weight that is used to adjust for the size of the submatrix $H^{k\ell}$.
The overall penalty is controlled by $\lambda\geq 0$ and 
$\beta$ is a number between 0 and 1 and balances the two types of penalties.
When it comes to estimation of higher level eigenvectors, i.e., $\bv_j \ (j\geq 2)$, we use the deflated Fantope constraint \citep{chen2015localized} to enforce strict orthogonality between different eigenvectors and the optimization problem is given as follows:
\begin{equation}
  \label{eq:model}
   \max_{H \in \F_{\hat{\Pi}_{j-1}}} \la S,H\ra-\lambda \beta \| H \|_{1,1} -\lambda (1-\beta) \|H\|_{1,1}^{\ast},
\end{equation}
where
$\F_{\hat{\Pi}_{j-1}} = \left\{H\in \R^{p\times p}:\, H\in \F^1 \text{ and } 
\la H, \hat{\Pi}_{j-1}\ra = 0\right\}$ is the deflated Fantope and $\hat{\Pi}_{j-1}= \sum_{k=1}^{j-1} \bvhat_k\bvhat_k^{\top}$ is the projection matrix associated with the principal subspace spanned by the first $j-1$ eigenvector estimates.
To summarize, the top $r$ eigenvector estimates $\bvhat_j, 1\leq j \leq r$ are obtained sequentially: 
\begin{equation*}
\begin{split}
\Hhat_j&=\operatorname{argmax}_{H\in\F_{\hat{\Pi}_{j-1}}}\left\{
\langle S,H \rangle-\lambda  \beta\|H \|_{1,1}-\lambda (1-\beta)\|H \|_{1,1}^{\ast}\right\},\\
\bvhat_j&=\text{the first eigenvector of } \Hhat_j,\\
\Pihat_j&=\Pihat_{j-1}+\bvhat_j\bvhat_j^{\t},
\end{split}
\end{equation*}
where $\Pihat_0= 0\in \R^{p\times p}$. Note that the tuning parameters $\lambda$ and $\beta$ don't have to be the same for all $j$. The optimization problem \eqref{eq:model} is convex and hence can be solved by existing algorithms with convergence guarantee. We shall describe an algorithm in detail in Section \ref{model:est}.  

This sequential formulation is critical because it leads to  mutually
orthogonal eigenvector estimates and also allows for different sparsity patterns in each individual eigenvector estimate. In contrast, the principal subspace approach in \cite{vu2013minimax} restricts all eigenvector estimates to have the same sparsity pattern and hence is not suitable here.

\subsection{Estimation algorithm} \label{model:est}

We use the alternating direction method of multipliers (ADMM, \cite{boyd2011distributed}) to solve the optimization problem \eqref{eq:model}. First, we rewrite \eqref{eq:model} by moving the deflated Fantope constraint to the objective function:
\begin{equation*}
    -\langle S,H \rangle+\lambda \beta \| H\|_{1,1}+\lambda (1-\beta)\| H \|_{1,1}^{\ast}+\infty \cdot (1-\I_{\mathcal{F}_{\Pi}}(H)),
\end{equation*}
where $\Pi$ is an arbitrary but fixed projection matrix, and $\I_{\mathcal{F}_{\Pi}}(H) = 1$ if $H\in \F_{\Pi}$ and 0 otherwise. The above problem can be reformulated using  variable splitting:
\begin{equation}
    \label{eq:model2}
\begin{split}
 &\min_{H_1,H_2}-\langle S,H_1\rangle+\infty \cdot (1-\I_{\mathcal{F}_{\Pi}}(H_1))+\lambda \beta\| H_2 \|_{1,1}+\lambda (1-\beta) \| H_2\|_{1,1}^{\ast},\\ &\text{ s.t. }  H_1-H_2=0.
\end{split}
\end{equation}
The augmented Lagrangian objective function for problem \eqref{eq:model2} is
\begin{equation*}
\begin{split}
Q(S;H_1,H_2,W)&=-\langle S,H_1\rangle+\infty \cdot (1-\I_{\mathcal{F}_{\Pi}}(H_1))+\lambda \beta \| H_2 \|_{1,1}+\lambda (1-\beta) \| H_2\|_{1,1}^{\ast}\\
& +\langle W,H_1-H_2\rangle+\frac{\rho_0}{2}\|H_1-H_2 \|_2^2,
\end{split}
\end{equation*}
where $W$ is matrix of dual variables and $\rho_0>0$ is called the penalty parameter. The ADMM algorithm alternates between the primal step, where $Q(S;H_1,H_2,W)$ is minimized with respect to $H_i$  $(i=1,2)$, and the dual step where the dual variable $W$ is updated. We denote the estimates from the previous iteration as $\tilde{H}_j$ $(j=1,2)$ and $\tilde{W}$. 

In the primal step, since the objective function $Q(S;H_1,H_2,W)$ is decomposable with respect to $H_1$ and $H_2$, they can be updated separately. The update for $H_1$ is given by
\begin{equation}\label{ag:step1}
\Hhat_1=\mathcal{P}_{\mathcal{F}_{\Pi}}\left\{\tilde{H}_2-\frac{1}{\rho_0}(\tilde{W}-S)\right\},
\end{equation}
where $\mathcal{P}_{\mathcal{F}_{\Pi}}$ is the deflated-Fantope-projection operator, of which the definition and the explicit expression are shown in Lemma \ref{df_ftp}. To update $H_2$, we need to solve the following subproblem 
$$
\min_{H_2}\frac{\rho_0}{2}\left\|\frac{1}{\rho_0}\tilde{W}+\hat{H}_1-H_2 \right\|_2^2 +\lambda \beta \|H_2 \|_{1,1}+\lambda (1-\beta)\|H_2 \|_{1,1}^{\ast},
$$
which also has a closed-form solution due to the fact that these two penalty terms possess a hierarchical structure. Specifically, denote the solution as $\hat{H}_2=\text{Prox}\left\{\frac{1}{\rho_0}\tilde{W}+\hat{H}_1\right\}$, and $\hat{H}_2$ can be obtained via two steps \cite{bach2012structured}:
\begin{equation}\label{ag:step2}
\begin{split}
    A&=\mathcal{S}_{\frac{\lambda \beta}{\rho_0}}\left\{\frac{1}{\rho_0}\tilde{W}+\hat{H}_1\right\},\\
   (\Hhat_2)^{k\ell}&=\left(1-\frac{\lambda (1-\beta)w_{k\ell}}{\rho_0\|A^{k\ell} \|_2} \right)_{+} A^{k\ell},
\end{split}
\end{equation}
where $\mathcal{S}_{\lambda}(x)=\operatorname{sgn}(x)(|x|-\lambda)_{+}$. In the dual step, the Lagrangian multiplier $W$ is updated as $\hat{W}=\tilde{W}+\rho_0(\Hhat_1-\Hhat_2)$. 

Although the global convergence of ADMM algorithm is guaranteed for convex problems, the performance of ADMM is often sensitive to the tuning parameter $\rho_0$ and
the convergence rate can be slow with a fixed $\rho_0$. A locally adaptive ADMM (LA-ADMM) proposed in
\cite{xu2017no} runs multiple stages of ADMM in which
at each stage the penalty parameter $\rho_0$ is dynamically increased by a constant factor $\alpha_0>1$ and the ADMM is run for a constant number of iterations.
It has been proved theoretically that LA-ADMM improves the computational speed of ADMM. Empirically, we did a small-scale simulation study, which shows that LA-ADMM is 2-8 times faster than ADMM. Hence, LA-ADMM is adopted to accelerate  convergence in our simulations and data analysis.

\medskip

\subsection{Tuning parameter selection}
In the optimization problem \eqref{eq:model}, the two tuning parameters $\lambda$ and $\beta$ are associated with the sparsity penalty and need to be selected for each $j$ separately. 

For estimation of the $j$th eigenvector, a five-fold cross-validation is adopted to determine the best tuning parameters $\lambda^*$ and $\beta^*$. The data is divided into five folds, denoted by $\mathcal{P}_1,\mathcal{P}_2,...,\mathcal{P}_5$. Let $\Hhat^{(-v)}(\lambda, \beta)$ be the solution to \eqref{eq:model} with tuning parameter $(\lambda, \beta)$ and the sample covariance is calculated on data other than $\mathcal{P}_v$. Let $S^{v}$ be the sample covariance estimated from $\mathcal{P}_v$. Then the optimal tuning parameters $(\lambda^*, \beta^*)$ are chosen by maximizing the cross-validated inner product shown as below:
$$
(\lambda^*, \beta^*)=\operatorname{argmax}_{\lambda\in\mathcal{A}_{1}, \beta\in \mathcal{A}_{2}}\sum_{v=1}^{5} \hat{e}^{(-v)}(\lambda, \beta)^{\t} S^v \hat{e}^{(-v)}(\lambda, \beta),
$$
where $\hat{e}^{(-v)}(\lambda, \beta)$ is the top eigenvector of $\Hhat^{(-v)}$ and $\mathcal{A}_{1}$, $\mathcal{A}_{2}$ are candidate sets for  $\lambda$ and $\beta$ respectively. We set $\mathcal{A}_{2}=\left\{0,0.25,0.5,0.75,1\right\}$, and a sequence between 0 and the 95\% quantile of absolute values of off-diagonal elements of $S_j=(I-\hat{\Pi}_{j-1})S(I-\hat{\Pi}_{j-1})$ is used as $\mathcal{A}_{1}$, following \cite{chen2015localized}.  The upper bound imposed on $\mathcal{A}_2$ is sensible as it can be shown a too large $\lambda$ will lead to trivial solutions.

\section{Theoretical properties}\label{theory}
We shall derive the $\ell_2$ rate of the estimated eigenvectors from SIPCA under appropriate conditions. In addition, we shall establish the sparsity and support recovery properties of the estimated eigenvectors.

\subsection{Deterministic theoretical results} \label{sec:general}
\def\B{\mathbb B}
\def\Zhat{\hat{Z}}
\def\Htilde{\tilde{H}}
\def\Ztilde{\tilde{Z}}
\def\Utilde{\tilde{U}}
\def\Uhat{\hat{U}}
\def\supp{\textnormal{supp}}
\def\diag{\textnormal{diag}}
\def\bu{\mathbf{u}}
\def\Sigmahat{\hat{\Sigma}}
\def\Sigmatilde{\tilde{\Sigma}}
\def\bVhat{\hat{V}}

In this subsection, we derive 
deterministic conditions that will be useful to  establish the sparsity of the estimates and derive the upper bounds of the $\ell_2$ error. The low rank property of the signal matrix, as shown in Assumption \ref{as:eta0}, is the key assumption imposed throughout the section.  All the propositions in this section imposes neither distributional assumptions on the data  nor  assumptions  on the eigenvalues or sparsity of eigenvectors of the signal matrix, thus the derived conditions and  bounds are in the forms of inequalities or functions involving the estimates.  In the next subsection, statistical assumptions are imposed, which lead to validation of those deterministic conditions.

We first introduce some assumptions and notations.

\begin{assumption}\label{as:eta0}
For the spiked covariance model in \eqref{spiked:general},
 the signal covariance matrix $\Sigma_1$
has the eigendecomposition $\Sigma_1 = \sum_{j=1}^r \gamma_j \bv_j\bv_j^{\intercal}$, where $r$ is a fixed integer,  $\gamma_1 > \gamma_2 >\cdots > \gamma_r> 0$
are eigenvalues  and $\bv_j$ is the associated eigenvector for $\gamma_j$.
\end{assumption}

For the optimization problem \eqref{eq:model},
let $w_0=\min_{k,\ell} w_{k\ell}$,  $w_1=\max_{k,\ell} w_{k\ell}$.
Define for $j=1,\ldots, r-1$, the projection matrices $\Pi_{j} = \sum_{k=1}^{j} \bv_k \bv_k^{\intercal}$
and $\Pihat_j = \sum_{k=1}^j \bvhat_k \bvhat_k^{\intercal}$, where $\bvhat_k$ is the estimated $k$th eigenvector from \eqref{eq:model}. Also let $\Pi_0 = \Pihat_0 = 0_{p \times p}$. Then for $j=1,\ldots, r-1$, let $\epsilon_j =\max(\|\Pi_{j} - \Pihat_{j}\|_2, \|\bvhat_j - \bv_j\|_2)$, which quantifies the estimation error from the first $j$ eigenvectors.
Let $E = S - \Sigma_1$ be the error matrix, where $S$ is the denoised sample covariance in Section \ref{sec:hetero}.
We introduce two useful matrix norms $\|\cdot\|_{\infty, \infty}$ and $\|\cdot\|_{\infty, \infty}^{\ast}$:
\begin{equation*}
\|E \|_{\infty, \infty} = \max_{s, t} |E_{st}|,\quad  \|E \|_{\infty, \infty}^{\ast} = \max_{k,\ell} \frac{\|(E)^{k\ell} \|_2}{w_{k\ell}}.
\end{equation*}
Note that $\|\cdot\|_{\infty, \infty}$ and $\|\cdot\|_{\infty, \infty}^{\ast}$ are conjugate norms for $\|\cdot\|_{1, 1}$ and $\|\cdot\|_{1, 1}^{\ast}$, respectively.

 \begin{prop}[Deterministic error bound] \label{prop:1}
Suppose that Assumption \ref{as:eta0} holds, and
\begin{equation}\label{eq:cond6}
    \min\left( \frac{\| E\|_{\infty,\infty}}{\beta},  \frac{\|E\|_{\infty,\infty}^{\ast}}{1-\beta} \right) \leq \lambda.
\end{equation}
Define, for $j=1,\ldots, r$,
$$
\tilde{a}_j = 2\left(\lambda \beta|J_1(\bv_j)| +\lambda (1-\beta ) w_1 |J_2(\bv_j)|
 +   \sum_{k=1}^{j-1}\gamma_k \epsilon_k\right),
$$
and 
\begin{equation*}
    \begin{split}
 \tilde{b}_j = 4 \sqrt{2} \left\{\lambda \beta \left(2|J_1(\bv_j)|+\sum_{k=1}^{j-1}|J_1(\bvhat_k)|\right) +\lambda (1-\beta) w_1 \left(2|J_2(\bv_j)|+\sum_{k=1}^{j-1}|J_2(\bvhat_k)|\right)
+ \frac{1}{2} \sum_{k=1}^{j-1}\gamma_k
\epsilon_k \right\}
\epsilon_{j-1}+4\gamma_j \epsilon_{j-1}^2.
\end{split}
\end{equation*}
Then, for $j=1,\ldots, r$,
\begin{equation}
    \|\Hhat_j-\bv_j\bv_j^{\intercal} \|_2\leq \frac{2\tilde{a}_j}{\gamma_j - \gamma_{j+1}}+\sqrt{\frac{2\tilde{b}_j}{\gamma_j - \gamma_{j+1}}},
    \label{eq:Herror}
\end{equation}
and
\begin{equation}\label{d_eb}
  \|\bvhat_j -\bv_j \|_2\leq 2\sqrt{2}\left(\frac{2\tilde{a}_j}{\gamma_j - \gamma_{j+1}}+\sqrt{\frac{2\tilde{b}_j}{\gamma_j - \gamma_{j+1}}}\right).
\end{equation}
\end{prop}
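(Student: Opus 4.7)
The plan is to follow the now-standard Fantope/deflation analysis (Vu--Lei, Chen--Lei), but carefully track the extra error introduced by using the \emph{estimated} projection $\Pihat_{j-1}$ rather than the true $\Pi_{j-1}$ in the constraint set, and to combine two sparsity penalties rather than one.

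First I would construct a feasible comparator. Because $\bv_j\bv_j^\intercal$ need not satisfy $\langle H,\Pihat_{j-1}\rangle=0$, I set $P_{j-1}=I-\Pihat_{j-1}$ and define $\Htilde_j = P_{j-1}\bv_j\bv_j^\intercal P_{j-1}$ (after, if needed, a rank-one rescaling so $\tr=1$). Since $\|\Pi_{j-1}-\Pihat_{j-1}\|_2 \le \sum_{k<j}\epsilon_k$, the matrix $\Htilde_j$ differs from $\bv_j\bv_j^\intercal$ by $O(\sum_{k<j}\epsilon_k)$ in Frobenius norm, and inherits the same sparsity supports $J_1(\bv_j), J_2(\bv_j)$ up to controllable slack. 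Then by optimality of $\Hhat_j$,
\begin{equation*}
\langle S,\Htilde_j-\Hhat_j\rangle \;\le\; \lambda\beta\bigl(\|\Htilde_j\|_{1,1}-\|\Hhat_j\|_{1,1}\bigr)+\lambda(1-\beta)\bigl(\|\Htilde_j\|_{1,1}^{\ast}-\|\Hhat_j\|_{1,1}^{\ast}\bigr).
\end{equation*}

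Second, I split $S=\Sigma_1+E$. The noise piece is controlled by the dual-norm inequalities $|\langle E,\Delta\rangle|\le \|E\|_{\infty,\infty}\|\Delta\|_{1,1}$ and $|\langle E,\Delta\rangle|\le \|E\|_{\infty,\infty}^{\ast}\|\Delta\|_{1,1}^{\ast}$, applied to $\Delta=\Htilde_j-\Hhat_j$. The hypothesis \eqref{eq:cond6} ensures that for at least one of the two norms, the noise cost is dominated by the corresponding penalty, so after restricting the penalty differences to the supports of $\bv_j$ (using $\|H_{J^c}\|_{1,1}\ge \|H_{J^c}\|_{1,1}$ style cone arguments on $J_1(\bv_j)$ and $J_2(\bv_j)$), both the $\ell_1$ and grouped terms collapse into contributions proportional to $|J_1(\bv_j)|$ and $w_1|J_2(\bv_j)|$, matching the leading terms in $\tilde{a}_j$.

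Third, I invoke the curvature of $\langle\Sigma_1,\cdot\rangle$ on the deflated Fantope. The key identity is
\begin{equation*}
\langle \Sigma_1,\bv_j\bv_j^\intercal-\Hhat_j\rangle \;\ge\; \tfrac{\gamma_j-\gamma_{j+1}}{2}\,\|\Hhat_j-\bv_j\bv_j^\intercal\|_2^2 \;-\; R_j,
\end{equation*}
where $R_j$ gathers two correction types: (i) a linear-in-$\epsilon_{j-1}$ cross term from replacing $\Htilde_j$ by $\bv_j\bv_j^\intercal$, and (ii) a $\sum_{k<j}\gamma_k\epsilon_k$ term from $\Pihat_{j-1}\neq\Pi_{j-1}$ shifting the ``excluded'' subspace. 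These two contributions feed into $\tilde{a}_j$ and the $\epsilon_{j-1}$-coefficient of $\tilde{b}_j$ respectively; the $4\gamma_j\epsilon_{j-1}^2$ term in $\tilde{b}_j$ absorbs the second-order deflation error.

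Putting steps two and three together yields a quadratic inequality of the shape
\begin{equation*}
\tfrac{\gamma_j-\gamma_{j+1}}{2}\,x^2 \;\le\; \tilde{a}_j\,x + \tilde{b}_j,\qquad x=\|\Hhat_j-\bv_j\bv_j^\intercal\|_2,
\end{equation*}
whose solution by the quadratic formula gives \eqref{eq:Herror}. Finally, \eqref{d_eb} follows from a Davis--Kahan style bound: since $\bvhat_j$ is the top eigenvector of $\Hhat_j$ and $\bv_j$ is the unique eigenvector of $\bv_j\bv_j^\intercal$ with eigenvalue $1$, one has $\|\bvhat_j-\bv_j\|_2\le 2\sqrt{2}\|\Hhat_j-\bv_j\bv_j^\intercal\|_2$ (after a sign choice), which gives the factor $2\sqrt{2}$. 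The main obstacle is step three: carefully separating the Fantope-curvature argument for the \emph{true} gap $\gamma_j-\gamma_{j+1}$ from the drift caused by the estimated deflation $\Pihat_{j-1}$, so that the correction terms decompose exactly as they appear in $\tilde{a}_j$ and $\tilde{b}_j$ rather than polluting the quadratic coefficient.
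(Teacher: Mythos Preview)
Your proposal is correct and follows essentially the same route as the paper: the paper's feasible comparator is exactly your rescaled $\Htilde_j$, written as $\tilde{\bv}_j\tilde{\bv}_j^{\intercal}$ with $\tilde{\bv}_j=(I-\Pihat_{j-1})\bv_j/\|(I-\Pihat_{j-1})\bv_j\|_2$, and the decomposition into curvature, noise (your step two), and deflation drift (your $R_j$) matches the paper's $T_1$--$T_5$ split. Two small sharpenings worth noting: the curvature lemma is applied to $\Sigma_j=\sum_{k\ge j}\gamma_k\bv_k\bv_k^{\intercal}$ (for which $\bv_j$ is the leading eigenvector) rather than to $\Sigma_1$ directly, and the ``controllable slack'' in the comparator's support is precisely $J_1(\tilde{\bv}_j)\subseteq J_1(\bv_j)\cup\bigcup_{k<j}J_1(\bvhat_k)$, which is the source of the $\sum_{k<j}|J_1(\bvhat_k)|$ and $\sum_{k<j}|J_2(\bvhat_k)|$ terms in $\tilde{b}_j$.
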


For the leading eigenvector, i.e., $j=1$, we can make $\epsilon_0 = 0$, thus $\tilde{b}_1 = 0$. Also $\tilde{a}_1 = 2(\lambda \beta |J_1(\bv_1)|+\lambda (1-\beta) w_1 |J_2(\bv_1)|)$. Thus,
inequality \eqref{eq:Herror} simplifies to
\begin{equation*}
\frac{\gamma_1 - \gamma_{2}}{2}\lVert \hat{H}_1 -\bv_1\bv_1^{\t}  \rVert_2\\
\leq 2\left\{\lambda \beta |J_1(\bv_1)|+\lambda (1-\beta) w_1 |J_2(\bv_1)|\right\}.
\end{equation*}
Notice that this error bound shows that the $\ell_2$ error bound of $\bvhat_1$ does not rely on the sparsity of $\bvhat_1$.
In particular, if we consider problem \eqref{eq:model} with $\beta=1$, we have
\begin{equation*}
\| \hat{H}_1 -\bv_1\bv_1^{\t}  \|_2
\leq \frac{4\lambda |J_1(\bv_1)|}{\gamma_1-\gamma_2},
\end{equation*}
which is the deterministic error bound derived in \cite{vu2013fantope}. Notice that condition~\eqref{eq:cond6} can still be used for $\beta = 0$ or 1 as we use the convention that for any nonnegative number $a$, $a/0 = +\infty$.

For the other eigenvectors, the forms of $\tilde{a}_j$
and $\tilde{b}_j$ in Proposition \ref{prop:1} involves $\epsilon_1,\ldots, \epsilon_{j-1}$, which quantify the estimation errors from the first $(j-1)$ eigenvectors. This is because of the deflation constraint in the optimization problem \eqref{eq:model}, which leads to the accumulation of estimation errors.

Next, we establish in Proposition \ref{prop:sparse} below that the estimated eigenvectors can be sparse,
a property that can be crucial for deriving the $\ell_2$ rate of convergence of the estimated eigenvectors.
Moreover, we provide deterministic conditions that guarantee the uniqueness of the solution to the optimization problem in \eqref{eq:model}, a non-trival result as the objective functions in \eqref{eq:model} is only convex but not strongly convex.



\begin{prop}[Deterministic sparsity control] \label{prop:sparse}
Suppose that Assumption \ref{as:eta0} holds.
For any fixed $r_1\leq r$, let $J_1 = \cup_{j=1}^{r_1} J_1(\bv_j)$ and $J_2 = \cup_{j=1}^{r_1} J_2(\bv_j)$.
Define
\begin{equation*}
b_j = \sum_{k=1}^{j} \frac{4[2(\lambda\beta  s_1+\lambda (1-\beta) s_2 w_1) + 2\gamma_k \epsilon_{k-1} + \gamma_1 \epsilon_{k-1}^2]}{\gamma_k-\gamma_{k+1}}, \text{ for } j = 1, \ldots, r_1,
\end{equation*}
where $s_1 = |J_1|$ and $s_2 = |J_2|$. Suppose that $\lambda$ and $0 < \beta \leq 1$ satisfy that,
for $1\leq j\leq r_1$,
\begin{equation}\label{eq:cond3}
\I(\beta\neq 1)\left\{
\|E\|_{\infty,\infty}^{\ast} + b_j\left\|(\Sigma_1)^{J_2^c J_2}\right\|_{2,\infty}^{\ast}\right\}
\leq \frac{w_0}{w_1}\lambda,
\end{equation}
\begin{equation}\label{eq:cond4}
\frac{1}{\beta}\left\{\| E \|_{\infty,\infty} + b_j\left\|(\Sigma_1)_{J_1^c J_1}\right\|_{2,\infty}\right\} \leq \lambda,
\end{equation}
\begin{equation}\label{eq:cond5}
0 <\gamma_j - \gamma_{j+1} - 
4\left(\lambda \beta s_1+\lambda (1-\beta) s_2 w_1 +\gamma_1 b_j \right),
\end{equation}
where 
\begin{equation*}
\begin{split}
    \|(\Sigma_1)^{J_2^cJ_2}\|_{2,\infty}^{\ast}&=\max_{\ell\in J_2^c} \sqrt{\sum_{k\in J_2}\frac{\|(\Sigma_1)^{k\ell} \|_2^2}{w_{k\ell}^2}}, \quad \|(\Sigma_1)_{J_1^cJ_1} \|_{2,\infty}=\max_{t\in J_1^c}\|(\Sigma_1)_{J_1t} \|_2.
\end{split}
\end{equation*}
Then  $\bvhat_j$ uniquely exists  and satisfies $J_1(\bvhat_j)\subseteq J_1$,\, $J_2(\bvhat_j)\subseteq J_2$,  for $j=1,\ldots, r_1$.
\end{prop}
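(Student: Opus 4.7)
The plan is an induction on $j$ combined with a primal--dual witness (PDW) construction. At stage $j$, assume that $\bvhat_1, \ldots, \bvhat_{j-1}$ have already been shown to satisfy the claimed support containments, so that $\hat\Pi_{j-1}$ is supported on $J_1 \times J_1$ with its nonzero blocks inside $J_2 \times J_2$. Combining Proposition \ref{prop:1} with the inductive sparsity gives explicit bounds on $\epsilon_1, \ldots, \epsilon_{j-1}$; summing these after applying the eigen-gap $\gamma_k - \gamma_{k+1}$ factors yields exactly the quantity $b_j$ in the statement, which plays the role of a cumulative upper bound on $\sum_{k \leq j} \lVert \Hhat_k - \bv_k \bv_k^{\t} \rVert_2$ available at stage $j$.

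For the PDW step, introduce the restricted maximizer $\tilde H_j$ of \eqref{eq:model} subject to the extra constraints $(\tilde H_j)_{st} = 0$ whenever $s \in J_1^c$ or $t \in J_1^c$, and $(\tilde H_j)^{k\ell} = 0$ whenever $k \in J_2^c$ or $\ell \in J_2^c$. By construction $\tilde H_j$ satisfies the desired support conditions. The task is then to certify that $\tilde H_j$ is in fact the unique maximizer of the unrestricted problem \eqref{eq:model}. The KKT system of the restricted problem produces, on the support, matrix subgradients $(\tilde G_1, \tilde G_2) \in \partial \lVert \tilde H_j \rVert_{1,1} \times \partial \lVert \tilde H_j \rVert_{1,1}^{\ast}$ together with a multiplier $\tilde \Lambda$ for the Fantope constraint and a multiplier enforcing the deflation orthogonality; one then extends $\tilde G_1$ and $\tilde G_2$ to the full index set by choosing filler entries and blocks off the support. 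Substituting $S = \Sigma_1 + E$ into the stationarity identity $S - \tilde\Lambda - \lambda\beta\tilde G_1 - \lambda(1-\beta)\tilde G_2 = 0$ and restricting to indices or blocks off the supports, the only residual part of $\Sigma_1$ is the ``leakage'' $(\Sigma_1)_{J_1^c J_1}$ for the elementwise side and $(\Sigma_1)^{J_2^c J_2}$ for the blockwise side, arising from the non-target eigenvectors $\bv_{r_1+1},\ldots,\bv_r$; the discrepancy between $\tilde\Lambda$ and the full-problem Fantope certificate is controlled by $\lVert \tilde H_j - \bv_j \bv_j^{\t} \rVert_2$, which Proposition \ref{prop:1} bounds in terms of $b_j$.

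Conditions \eqref{eq:cond3}--\eqref{eq:cond5} then close the argument. Condition \eqref{eq:cond3} yields $\lVert (\tilde G_2)^{k\ell} \rVert_2 / w_{k\ell} < 1$ for blocks outside $J_2 \times J_2$ (the $w_0/w_1$ ratio appearing because off-support blocks may carry larger weights than those driving the on-support subgradient), while \eqref{eq:cond4} yields $|(\tilde G_1)_{st}| < 1$ for entries outside $J_1 \times J_1$; together these provide strict dual feasibility off support, giving both the claimed support containment for $\bvhat_j$ and uniqueness of $\tilde H_j$ as the unrestricted maximizer. Condition \eqref{eq:cond5} ensures the top eigenvalue of $\tilde H_j = \Hhat_j$ is simple---so $\bvhat_j$ is well-defined up to sign and inherits the support of $\Hhat_j$---and it also keeps $\epsilon_j$ small enough for the induction to continue to stage $j+1$. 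The principal technical obstacle is handling two non-smooth penalties with mismatched geometry, entrywise $\ell_1$ versus grouped $\ell_1$, whose subdifferentials do not split as a simple product, so the off-support extensions of $\tilde G_1$ and $\tilde G_2$ must be chosen consistently on overlapping index sets; this is compounded by the deflation constraint, which couples the analysis at stage $j$ to the accumulated errors $\epsilon_1,\ldots,\epsilon_{j-1}$ and is responsible for the quadratic term $\gamma_1 \epsilon_{k-1}^2$ appearing in $b_j$ through the bilinear form $\langle \Pihat_{j-1} - \Pi_{j-1}, \cdot \rangle$.
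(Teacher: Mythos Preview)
Your overall architecture---induction on $j$ together with a primal--dual witness built from a support-restricted maximizer $\tilde H_j$---matches the paper's strategy. However, two concrete ingredients are missing from your plan, and without them the argument does not close.

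First, the off-support dual extension is not as you describe. You propose to control the stationarity residual by $\lVert \tilde H_j - \bv_j\bv_j^{\t}\rVert_2$ via Proposition~\ref{prop:1}, but the paper does not proceed this way. Instead it invokes Lemma~\ref{lem:Q}: the restricted solution $\tilde H_j$ is rank one with eigenvector $\bu$, and there is an $s_1\times s_1$ orthonormal matrix $Q_j$ with $[\bvhat_1,\ldots,\bvhat_{j-1},\bu]=\mathrm{blockdiag}(Q_j,0)[\bv_1,\ldots,\bv_j]$ and $\lVert Q_j - I\rVert_2 \le b_j$. The dual variables off support are then defined \emph{explicitly} in terms of $Q_j$ (equation~\eqref{eq:mod}), e.g.\ $(\Zhat_1)_{st}=\tfrac{1}{\lambda\beta}\{S_{st}-\langle (Q_j)_{s\cdot},(\Sigma_1)_{J_1 t}\rangle\}$, and it is the bound $\lVert Q_j-I\rVert_2\le b_j$---not a bound on $\lVert \tilde H_j-\bv_j\bv_j^{\t}\rVert_2$---that produces the term $b_j\lVert(\Sigma_1)_{J_1^cJ_1}\rVert_{2,\infty}$ in condition~\eqref{eq:cond4}. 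The $Q_j$ device is also what makes the last KKT condition (that $\Hhat$ maximizes $\langle S-\lambda\beta\Zhat_1-\lambda(1-\beta)\Zhat_2,H\rangle$ over $\F_{\hat\Pi_{j-1}}$) verifiable: one rewrites the deflated matrix as ``signal'' $D\tilde Q_j\Sigma_1\tilde Q_j^{\t}D$ plus ``noise'', shows $\bu$ is an \emph{exact} eigenvector of both pieces, and uses condition~\eqref{eq:cond5} to bound the noise operator norm below $(\gamma_j-\gamma_{j+1})/2$. Your description via a Fantope multiplier $\tilde\Lambda$ does not provide a route to this exact eigenvector property.

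Second, strict dual feasibility off support does not by itself yield uniqueness here, because the objective is merely concave (not strongly concave) over the Fantope. The paper handles uniqueness separately by an elastic-net perturbation: for small $\tau>0$ the same triplet $(\Hhat,\Zhat_1,\Zhat_2)$ solves the strongly concave problem $\langle S,H\rangle-\lambda\beta\lVert H\rVert_{1,1}-\lambda(1-\beta)\lVert H\rVert_{1,1}^{\ast}-\tfrac{\tau}{2}\lVert H\rVert_2^2$ (this uses part~3 of Lemma~\ref{lem:ftp_prop} together with the positive eigengap guaranteed by~\eqref{eq:cond5}), and any alternative maximizer $\Hhat'$ of the original problem would then have to satisfy $\lVert \Hhat'\rVert_2^2>\lVert \Hhat\rVert_2^2=1=\tr(\Hhat')$, a contradiction. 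This step is absent from your plan.
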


\begin{remark}
The proof extends that in  \cite{lei2015sparsistency}. Notice that to prove $J_1(\bvhat_j) \subseteq J_1$ and $J_2(\bvhat_j) \subseteq J_2$, we may assume that $J_1(\bvhat_k) \subseteq J_1$ and $J_1(\bvhat_k) \subseteq J_2$ for $k=1,\ldots, (j-1)$.
\end{remark}

If $r_1 = r$, then $J_1$ contains all indices of non-zero elements in the eigenvectors $\{\bv_1,\ldots, \bv_r\}$ while $J_2$ contains all indices of non-zero blocks in these eigenvectors. It follows that
$(\Sigma_1)^{J_2^c J_2}$ consists of only zero elements, i.e., $\left\|(\Sigma_1)^{J_2^c J_2}\right\|_{2,\infty}^{\ast} =0$ and also 
$\left\|(\Sigma_1)_{J_1^c J_1}\right\|_{2,\infty}=0$, which simplify 
conditions \eqref{eq:cond3} and \eqref{eq:cond4} in Proposition \ref{prop:sparse}.

Next we consider a special case of problem \eqref{eq:model} with $\beta = 1$, i.e., only elementwise penalty is employed. 

\begin{corollary}\label{coro:elem}
Consider the optimization problem in \eqref{eq:model} with $\beta = 1$.
Suppose that Assumption \ref{as:eta0} holds.
For any fixed $r_1\leq r$, let $J_1 = \cup_{j=1}^{r_1} J_1(\bv_j)$.
 Suppose that  $\lambda$ satisfies, for $1\leq j\leq r_1$,
\begin{equation*}
\| E \|_{\infty,\infty} + b_j  \|(\Sigma_1)_{J_1^cJ_1} \|_{2,\infty}\leq \lambda,
\end{equation*}
\begin{equation*}
0 <\gamma_j - \gamma_{j+1} - 
4\left(\lambda s_1 + \gamma_1 b_j \right).
\end{equation*}
Then $\hat{\bv}_j$ uniquely exists and satisfies $J_1(\hat{\bv}_j)\subseteq J_1$, for $j=1,\ldots,r_1$.
\end{corollary}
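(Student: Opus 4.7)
The strategy is to deduce Corollary \ref{coro:elem} as a direct specialization of Proposition \ref{prop:sparse} at $\beta = 1$, by tracking how each hypothesis and the quantity $b_j$ simplify in that regime. Since the objective in \eqref{eq:model} at $\beta = 1$ contains only the elementwise $\ell_1$ penalty, all block-structural ingredients appearing in Proposition \ref{prop:sparse} should either vanish or reduce to trivial statements, and the corollary can be obtained by verification rather than by a fresh argument.

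First I would substitute $\beta = 1$ into the definition of $b_j$ from Proposition \ref{prop:sparse}: the summand $\lambda(1-\beta) s_2 w_1$ drops out, yielding exactly the $b_j$ implicit in the corollary's hypotheses. Next I would verify the three conditions of Proposition \ref{prop:sparse} one by one. Condition \eqref{eq:cond3} carries the explicit factor $\I(\beta \neq 1)$, so its left-hand side equals zero when $\beta = 1$, and the inequality $0 \leq (w_0/w_1)\lambda$ holds trivially for any $\lambda > 0$. This allows me to take $J_2 = \{1,\ldots,I\}$ in the application of Proposition \ref{prop:sparse}, so that the conclusion $J_2(\hat{\bv}_j) \subseteq J_2$ becomes vacuous and may be dropped from the statement. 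Condition \eqref{eq:cond4} at $\beta = 1$ reads $\|E\|_{\infty,\infty} + b_j \|(\Sigma_1)_{J_1^c J_1}\|_{2,\infty} \leq \lambda$, matching the corollary verbatim. Condition \eqref{eq:cond5} at $\beta = 1$ collapses (because the $\lambda(1-\beta) s_2 w_1$ summand vanishes) to $0 < \gamma_j - \gamma_{j+1} - 4(\lambda s_1 + \gamma_1 b_j)$, also matching the corollary verbatim.

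With the hypotheses of Proposition \ref{prop:sparse} verified for each $j = 1,\ldots, r_1$ with $J_2 = \{1,\ldots,I\}$, the proposition immediately delivers both the unique existence of $\hat{\bv}_j$ and the inclusion $J_1(\hat{\bv}_j) \subseteq J_1$, which is the full content of the corollary.

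The main obstacle is essentially just bookkeeping: one must confirm that each block-structural quantity in Proposition \ref{prop:sparse} (the cardinality $s_2$, the weights $w_0, w_1$ via the ratio $w_0/w_1$, and the conjugate norm $\|(\Sigma_1)^{J_2^c J_2}\|_{2,\infty}^{\ast}$) is either eliminated by the indicator $\I(\beta \neq 1)$ or trivialized by the choice $J_2 = \{1,\ldots,I\}$, leaving no residual hypothesis beyond those stated in the corollary. Because Proposition \ref{prop:sparse} was formulated so that the $\beta = 1$ case recovers the pure elementwise-sparse setting treated earlier by \cite{lei2015sparsistency}, no additional analytical content is needed beyond this verification.
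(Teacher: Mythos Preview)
Your proposal is correct and matches the paper's approach: the paper presents Corollary~\ref{coro:elem} as an immediate specialization of Proposition~\ref{prop:sparse} at $\beta=1$ without a separate proof, and your verification that conditions \eqref{eq:cond3}--\eqref{eq:cond5} collapse to exactly the corollary's hypotheses is the intended argument. One minor point of phrasing: in Proposition~\ref{prop:sparse} the set $J_2$ is \emph{defined} as $\cup_{j=1}^{r_1} J_2(\bv_j)$ rather than chosen, so you cannot literally ``take $J_2=\{1,\ldots,I\}$''; but this is harmless, since at $\beta=1$ neither the hypotheses nor $b_j$ depend on $J_2$ or $s_2$, and the extra conclusion $J_2(\hat{\bv}_j)\subseteq J_2$ from Proposition~\ref{prop:sparse} is simply dropped rather than rendered vacuous.
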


If we consider only the leading eigenvector  $\bv_1$,
the conditions in Corollary \ref{coro:elem} will be  reduced to:
\begin{equation*}
\frac{1}{\lambda} \left(\|E \|_{\infty,\infty} + \frac{8s_1}{\gamma_1 - \gamma_2}  \|(\Sigma_1)_{J_1^cJ_1} \|_{2,\infty}\right)\leq 1,
\end{equation*}
\begin{equation*}
0 <\gamma_1 - \gamma_{2} - 
4\lambda s_1\left( 1+ \frac{8\gamma_1}{\gamma_1-\gamma_2} \right),
\end{equation*}
which are similar to the conditions in \cite{lei2015sparsistency} for proving sparsity of estimated principal spaces.

Proposition \ref{prop:sparse} does not cover the case of problem \eqref{eq:model} with $\beta=0$, but similar blockwise sparsity results can be obtained by a slight modification of our proof and the detailed proofs are omitted. 

\begin{corollary}\label{coro:block}
Consider the optimization problem in \eqref{eq:model} with $\beta = 0$.
Suppose that Assumption \ref{as:eta0} holds.
For any fixed $r_1\leq r$, let $J_2 = \cup_{j=1}^{r_1} J_2(\bv_j)$.
 Suppose that $\lambda$ satisfies, for $1\leq j\leq r_1$,
\begin{equation*}
\|E\|_{\infty,\infty}^{\ast}+ b_j \left\|(\Sigma_1)^{J_2^cJ_2}\right\|_{2,\infty}^{\ast}\leq \frac{w_0}{w_1}\lambda,
\end{equation*}
\begin{equation*}
0 <\gamma_j - \gamma_{j+1} - 
4\left(\lambda s_2 w_1 + \gamma_1 b_j\right).
\end{equation*}
Then $\hat{\bv}_j$ uniquely exists and satisfies $J_2(\hat{\bv}_j)\subseteq J_2$, for $j=1,\ldots,r_1$.
\end{corollary}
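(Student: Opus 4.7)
The plan is to adapt the proof of Proposition \ref{prop:sparse} to the special case $\beta=0$. Since only the blockwise group-lasso penalty $\|\cdot\|_{1,1}^{\ast}$ is active, the elementwise dual-feasibility certificate \eqref{eq:cond4} is redundant and the $\lambda\beta s_1$ contribution inside $b_j$ vanishes; only the blockwise dual condition analogous to \eqref{eq:cond3} and an eigengap condition analogous to \eqref{eq:cond5} need to be verified. As the remark preceding the corollary indicates, no genuinely new machinery is needed beyond what already appears in the proof of Proposition \ref{prop:sparse}.

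The proof proceeds by induction on $j$ combined with a primal-dual witness argument. For $j=1$, $\epsilon_0=0$ so $b_1=8\lambda s_2 w_1/(\gamma_1-\gamma_2)$ and the corollary's hypotheses directly give the required inequalities. For $j\geq 2$, the inductive hypothesis $J_2(\hat{\bv}_k)\subseteq J_2$ for $k<j$ combined with Proposition \ref{prop:1} (specialized to $\beta=0$) controls each $\epsilon_k$ entering $b_j$. I would then let $\tilde{H}_j$ solve \eqref{eq:model} with $\beta=0$ subject to the additional constraint $H^{k\ell}=0$ whenever $k\in J_2^c$ or $\ell\in J_2^c$, and use the KKT conditions to verify that $\tilde{H}_j=\hat{H}_j$. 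Specifically, there must exist $\mu\geq 0$, $\nu\in\R$, and a subgradient $\tilde{Z}$ of $\|\cdot\|_{1,1}^{\ast}$ at $\tilde{H}_j$ satisfying
\begin{equation*}
S-\lambda\tilde{Z}-\mu\hat{\Pi}_{j-1}-\nu I_p \in N_{\mathcal{F}^1}(\tilde{H}_j),
\end{equation*}
where on blocks $(k,\ell)$ with $\tilde{H}_j^{k\ell}\neq 0$ the subgradient is pinned as $w_{k\ell}\tilde{H}_j^{k\ell}/\|\tilde{H}_j^{k\ell}\|_2$, and on other blocks one needs only $\|\tilde{Z}^{k\ell}\|_2\leq w_{k\ell}$.

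The critical step is establishing \emph{strict} dual feasibility on blocks with $k\in J_2^c$ or $\ell\in J_2^c$. On these blocks, the true signal satisfies $\|(\Sigma_1)^{k\ell}\|_2/w_{k\ell}\leq\|(\Sigma_1)^{J_2^c J_2}\|_{2,\infty}^{\ast}$, which vanishes when both indices lie in $J_2^c$; the perturbation from the deflation projector $\hat{\Pi}_{j-1}-\Pi_{j-1}$ contributes through the $b_j$ term via Proposition \ref{prop:1}; and the sampling error is bounded by $\|E\|_{\infty,\infty}^{\ast}$. Summing these and invoking the corollary's first hypothesis yields the strict inequality $\|S^{k\ell}\|_2/w_{k\ell}<\lambda$, which produces the required subgradient. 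Uniqueness of $\hat{H}_j$, and hence of $\hat{\bv}_j$, then follows from the spectral-gap condition $\gamma_j-\gamma_{j+1}-4(\lambda s_2 w_1+\gamma_1 b_j)>0$, which isolates the top eigenvector.

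The main obstacle will be careful bookkeeping of the accumulated estimation errors $\epsilon_1,\ldots,\epsilon_{j-1}$ through the deflation projector $\hat{\Pi}_{j-1}$: because the normal-cone geometry of the deflated Fantope is governed by the perturbed projector rather than by the true $\Pi_{j-1}$, these errors must enter the dual-feasibility bound through the $b_j$ term, and the spectral-gap hypothesis must be large enough to absorb them. Without the stabilizing contribution of the elementwise penalty, all of the slack on the complement blocks must be supplied by the blockwise norm alone, which is why the ratio $w_0/w_1$ appears on the right-hand side of the first hypothesis of the corollary.
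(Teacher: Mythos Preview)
Your overall architecture---primal-dual witness on the block-constrained problem, induction on $j$, verify dual feasibility on complement blocks, then appeal to the eigengap for uniqueness---matches what the paper (which omits the proof and refers back to Proposition~\ref{prop:sparse}) intends. But the dual-feasibility step, as you have written it, contains a genuine error.

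You assert that on blocks with $k,\ell\in J_2^c$ the signal $(\Sigma_1)^{k\ell}$ vanishes, and hence that the naive choice $\hat Z_2^{k\ell}=S^{k\ell}/\lambda$ is feasible once $\|S^{k\ell}\|_2/w_{k\ell}<\lambda$. This is false when $r_1<r$: the later eigenvectors $\bv_{r_1+1},\ldots,\bv_r$ may well be supported on $J_2^c$, so $(\Sigma_1)^{J_2^c J_2^c}$ can be of order $\gamma_{r_1+1}$, which the corollary's hypotheses place no control on. Your certificate would then require $\gamma_{r_1+1}\lesssim\lambda$, a condition absent from the statement. The paper's construction (see the proof of Proposition~\ref{prop:sparse} and Lemma~\ref{lem:Q}) avoids this by \emph{not} zeroing out the $J_2^c\times J_2^c$ signal. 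Instead, on complement blocks one sets $\hat Z_2^{k\ell}=\lambda^{-1}E^{k\ell}$ (on $J_2^c\times J_2^c$) and $\hat Z_2^{k\ell}=\lambda^{-1}\{S^{k\ell}-(\bar Q_j)^{k\bdot}(\Sigma_1)^{J_2\ell}\}$ (on $J_2\times J_2^c$), where $\bar Q_j$ is the block version of the rotation $Q_j$ from Lemma~\ref{lem:Q} satisfying $\|Q_j-I\|_2\le b_j$. With this choice, $S-\lambda\hat Z_2$ decomposes as a ``noise'' part supported on $J_2\times J_2$ plus the rotated signal $\tilde Q_j\Sigma_1\tilde Q_j^{\t}$; after deflation, the rotated later components $D\tilde Q_j\bv_k$ for $k>j$ are \emph{orthogonal} to $\bu$ and have operator norm at most $\gamma_{j+1}$, so the eigengap condition---not a smallness condition on $(\Sigma_1)^{J_2^c J_2^c}$---is what isolates $\bu$. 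The $b_j\|(\Sigma_1)^{J_2^c J_2}\|_{2,\infty}^{\ast}$ term in the first hypothesis is precisely the cost of the rotation on the cross blocks, and the $w_0/w_1$ factor arises because one bounds $\|\hat Z_2^{k\ell}\|_2$ uniformly by $w_1$ times the bracketed quantity but must dominate it by $w_{k\ell}\ge w_0$. Without invoking $Q_j$, your argument cannot close for $r_1<r$.
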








Finally, we discuss results on false negative control. Given the estimated projection matrix $\hat{H}_j$, we provide conditions so that there are no false negative element/blocks in the estimation of $\bv_j$.

\begin{prop}[Elementwise false negative control]\label{supp:element} 
 If
$
\min_{i:v_{ji}\neq 0}v_{ji}^2 > 2\|\Hhat_j-\bv_j\bv_j^{\t} \|_2,
$
then  $J_1(\bv_j)\subseteq J_1(\bvhat_j)$, i.e., there is no false negative element.
\end{prop}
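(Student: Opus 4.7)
The plan is to argue directly from the diagonal entries of $\Hhat_j$. Fix any index $i\in J_1(\bv_j)$ and set $Q = \Hhat_j - \bv_j\bv_j^{\t}$, so that the hypothesis reads $v_{ji}^2 > 2\|Q\|_2$. Because every entry of a real matrix is bounded in absolute value by its Frobenius norm,
\begin{equation*}
\Hhat_{j,ii} \;=\; v_{ji}^2 + Q_{ii} \;\geq\; v_{ji}^2 - \|Q\|_2.
\end{equation*}

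I would then expand $\Hhat_{j,ii}$ via the spectral decomposition $\Hhat_j = \sum_{k}\hat{\mu}_k \bu_k \bu_k^{\t}$, with eigenvalues $\hat{\mu}_1\geq \hat{\mu}_2\geq\cdots$ and $\bu_1 = \bvhat_j$. The Fantope constraint $\Hhat_j \in \F^1$ forces $\hat{\mu}_k \in [0,1]$ and $\sum_k \hat{\mu}_k = 1$; Weyl's inequality applied to the rank-one matrix $\bv_j\bv_j^{\t}$, whose eigenvalues are $1,0,\ldots,0$, gives $\hat{\mu}_1 \geq 1 - \|Q\|_{\mathrm{op}} \geq 1 - \|Q\|_2$. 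Hence, using $(\bu_k)_i^2 \leq \|\bu_k\|_2^2 = 1$,
\begin{equation*}
\sum_{k\geq 2}\hat{\mu}_k (\bu_k)_i^2 \;\leq\; \sum_{k\geq 2}\hat{\mu}_k \;=\; 1 - \hat{\mu}_1 \;\leq\; \|Q\|_2.
\end{equation*}
Writing $\Hhat_{j,ii} = \hat{\mu}_1 (\bvhat_j)_i^2 + \sum_{k\geq 2}\hat{\mu}_k (\bu_k)_i^2$ and combining with the two displays above yields
\begin{equation*}
\hat{\mu}_1 (\bvhat_j)_i^2 \;\geq\; v_{ji}^2 - 2\|Q\|_2 \;>\; 0.
\end{equation*}
Both factors on the left are nonnegative, so each must be strictly positive, giving $(\bvhat_j)_i \neq 0$. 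Since this works for every $i\in J_1(\bv_j)$, we conclude $J_1(\bv_j) \subseteq J_1(\bvhat_j)$.

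The main obstacle I anticipate is that $\Hhat_j$ need not be exactly rank one, so the diagonal entry $\Hhat_{j,ii}$ does not by itself isolate $(\bvhat_j)_i^2$; the argument rests on the Fantope constraint (in particular $\tr(\Hhat_j)=1$ together with $\hat{\mu}_k\in[0,1]$) combined with Weyl's inequality, which together force almost all of the trace of $\Hhat_j$ onto its leading eigenvalue whenever $\Hhat_j$ is close to the rank-one projection $\bv_j\bv_j^{\t}$. Everything else is a one-line algebraic comparison.
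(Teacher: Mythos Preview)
Your argument is correct. Both you and the paper start from the same elementary bound on the $i$th diagonal entry, namely $|(\Hhat_j-\bv_j\bv_j^{\t})_{ii}|\le \|\Hhat_j-\bv_j\bv_j^{\t}\|_2$, but the routes then diverge. The paper packages this as a Markov-type counting bound on the set of indices where the diagonal of $\Hhat_j$ is small; as written, that argument stops at controlling $(\Hhat_j)_{ii}$ and does not explicitly pass from $(\Hhat_j)_{ii}>0$ to $(\bvhat_j)_i\neq 0$, which tacitly presumes $\Hhat_j$ is rank one (as established separately in Proposition~\ref{prop:sparse}). Your proof instead handles the general case $\Hhat_j\in\F^1$ directly: the Fantope constraint plus Weyl's inequality forces $1-\hat{\mu}_1\le\|Q\|_2$, so the tail eigenvalues can contribute at most $\|Q\|_2$ to any diagonal entry, and the remaining mass must come from $\hat{\mu}_1(\bvhat_j)_i^2$. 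This extra step is exactly what is needed to make Proposition~\ref{supp:element} self-contained, without appealing to the rank-one conclusion of Proposition~\ref{prop:sparse}; the cost is only a couple of additional lines.
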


\begin{prop}[Blockwise false negative control]\label{supp:block}
If $
\min_{k:\,\|\bv_j^{k}\|_2\neq 0}\|\bv_j^{k}\|_2^2 > 2\|\Hhat_j-\bv_j\bv_j^{\t} \|_2, 
$
then  $J_2(\bv_j)\subseteq J_2(\bvhat_j)$, i.e., there is no false negative block.
\end{prop}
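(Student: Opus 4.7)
My plan is to argue by contradiction. Suppose some block index $k$ satisfies $\|\bv_j^{k}\|_2 > 0$ but $\bvhat_j^{k}=0$; I will then deduce $\|\bv_j^{k}\|_2^2 \leq 2\|\Hhat_j - \bv_j \bv_j^{\t}\|_2$, contradicting the hypothesis. The first step is to lower-bound the squared overlap $(\bvhat_j^{\t}\bv_j)^2$ using that $\bvhat_j$ is the top eigenvector of $\Hhat_j$ and $\Hhat_j$ is close to the rank-one matrix $\bv_j\bv_j^{\t}$. Let $\hat{\lambda}_1$ denote the top eigenvalue of $\Hhat_j$. By the variational principle applied at the unit vector $\bv_j$, $\hat{\lambda}_1 \geq \bv_j^{\t}\Hhat_j\bv_j = 1 + \bv_j^{\t}(\Hhat_j - \bv_j\bv_j^{\t})\bv_j$, while expanding the Rayleigh quotient at $\bvhat_j$ gives $\hat{\lambda}_1 = \bvhat_j^{\t}\Hhat_j\bvhat_j = (\bvhat_j^{\t}\bv_j)^2 + \bvhat_j^{\t}(\Hhat_j - \bv_j\bv_j^{\t})\bvhat_j$. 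Bounding each quadratic form in $\Hhat_j - \bv_j\bv_j^{\t}$ by its operator norm, itself at most the Frobenius norm $\|\cdot\|_2$ used throughout the paper, and combining yields
\begin{equation*}
(\bvhat_j^{\t}\bv_j)^2 \;\geq\; 1 - 2\,\|\Hhat_j - \bv_j\bv_j^{\t}\|_2.
\end{equation*}

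The key blockwise input is a matching upper bound on the same squared overlap that is sensitive to the zero pattern. Under the contradictory assumption $\bvhat_j^{k}=0$, decomposing the inner product view by view gives $\bvhat_j^{\t}\bv_j = \sum_{\ell \neq k}(\bvhat_j^{\ell})^{\t}\bv_j^{\ell}$, and Cauchy-Schwarz together with $\|\bvhat_j\|_2 = \|\bv_j\|_2 = 1$ produces
\begin{equation*}
(\bvhat_j^{\t}\bv_j)^2 \;\leq\; \Big(\sum_{\ell \neq k}\|\bvhat_j^{\ell}\|_2^2\Big)\Big(\sum_{\ell \neq k}\|\bv_j^{\ell}\|_2^2\Big) \;\leq\; 1 - \|\bv_j^{k}\|_2^2.
\end{equation*}
Chaining the two inequalities gives $\|\bv_j^{k}\|_2^2 \leq 2\|\Hhat_j - \bv_j\bv_j^{\t}\|_2$, which contradicts the hypothesis and therefore forces $J_2(\bv_j) \subseteq J_2(\bvhat_j)$.

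There is no substantive obstacle here: the argument is a Davis-Kahan-flavored overlap bound paired with the observation that zeroing an entire block of $\bvhat_j$ caps its alignment with $\bv_j$ at $\sqrt{1-\|\bv_j^{k}\|_2^2}$. The only bookkeeping is the passage from operator norm in the variational step to the Frobenius norm $\|\cdot\|_2$ of the statement, which is immediate. Exactly the same template, applied entrywise rather than blockwise (replace the block decomposition by the coordinate decomposition $\bvhat_j^{\t}\bv_j = \sum_{i}\hat{v}_{ji}v_{ji}$ and drop the single index where $\hat{v}_{ji}=0$), yields Proposition~\ref{supp:element}.
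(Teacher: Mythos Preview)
Your proof is correct and takes a genuinely different route from the paper. The paper argues via the diagonal blocks of $\Hhat_j$: with $\Delta_j=\Hhat_j-\bv_j\bv_j^{\t}$ it defines the set $D_1=\{k:\|\bv_j^{k}\|_2^2\ge 2t,\ \|(\Hhat_j)^{kk}\|_2<t\}$ and uses a Markov-type count $|D_1|\le \|\Delta_j\|_2^2/t^2$; choosing $t$ in the interval $(\|\Delta_j\|_2,\ \tfrac12\min_{k}\|\bv_j^{k}\|_2^2)$ forces $D_1=\emptyset$, i.e., every true-signal block has $\|(\Hhat_j)^{kk}\|_2>0$. This conclusion is about $\Hhat_j$, and to pass to $\bvhat_j^{k}\ne 0$ one effectively needs $\Hhat_j$ to be rank one (so that $\|(\Hhat_j)^{kk}\|_2=\|\bvhat_j^{k}\|_2^2$), which the paper establishes separately in Proposition~\ref{prop:sparse}. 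Your argument bypasses this entirely: by sandwiching the squared overlap $(\bvhat_j^{\t}\bv_j)^2$ between $1-2\|\Delta_j\|_2$ (variational principle plus operator-to-Frobenius norm) and $1-\|\bv_j^{k}\|_2^2$ (Cauchy--Schwarz after dropping the zero block of $\bvhat_j$), you obtain the contradiction directly for the top eigenvector, with no rank assumption on $\Hhat_j$. The paper's approach, on the other hand, yields simultaneous control of false positives and false negatives at a common threshold, which is a bit more information than the proposition needs.
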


\subsection{Statistical theoretical results}
\subsubsection{Assumptions and notations}
In this section, we make some additional assumptions in addition to Assumption \ref{as:eta0}  and introduce notation that will be used in the next two subsections.

For simplicity, we shall assume equal block size $p_0=p_1=\cdots=p_I$ and adopt equal block weight $w_{k\ell}=p_0, \forall k,\ell$. Thus, $w_0 = w_1 = p_0$.
We derive the statistical properties of the SIPCA estimator in a generic setting, where the elementwise error is bounded with a high probability. 
\begin{assumption}\label{as:prob_ieq}
Suppose that the random sample $\by = (y_1,\ldots, y_p)$ in model \eqref{spiked:general} is sub-Gaussian, i.e., there exists a constant $K>0$ such that
$$
\sup_{j} \mathbb{E}[\exp(y_j^2/K^2]\leq 2.
$$
\end{assumption}
Assume that $\sqrt{\log p/n} = o(1)$,
then under the sub-Guassian assumption, the sample covariance matrix $\Sigmahat$ satisfies 
\begin{equation*}
    \mathbb{P}\left(\|\Sigmahat - \mathbb{E}(\Sigmahat)\|_{\infty,\infty}\geq c \sqrt{\frac{\log p}{n}} \right) \geq 1- 2p^{-2},
\end{equation*}
for some absolute constant $c>0$; see, e.g., \cite{vaart1997weak}. 
Moreover,  the BEMA estimator $\hat{\sigma}_i^2$ satisfies that $\max_i \|\hat{\sigma}_i - \sigma_i^2\| = O_p(n^{-1})$; see Assumptions in Theorem 1 of \cite{ke2021estimation}.
It follows that the error matrix $E$ satisfies
\begin{equation*}
\|E\|_{\infty,\infty} = O_p\left(\sqrt{\frac{\log p}{n}} \right).
\end{equation*}
Note that as we assume equal block size and equal weight, $\|E\|_{\infty, \infty}^{\ast} \leq \|E\|_{\infty,\infty}$. Thus Assumption \ref{as:prob_ieq} implies that $\|E\|_{\infty,\infty}^{\ast} = O_p(\sqrt{\log p/n})$.

For notation, we let $J_1 = \cup_{j=1}^r J_1(\bv_j)$ and $J_2 = \cup_{j=1}^r J_2(\bv_j)$, $s_1 = |J_1|$ and $s_2 = |J_2|$. To compare, $J_1$ and $J_2$ as well as $s_1$ and $s_2$ vary with $r_1$ in Proposition \ref{prop:sparse}.
For two sequences of positive numbers $a_n$ and $b_n$,
$a_n \gtrapprox b_n $ means that there exists a sufficiently large constant $c$ such that $a_n \geq C_n b_n$ for all sufficiently large $n$ and yet still $a_n = O(b_n)$.

\subsubsection{Spiked covariance model with strong signals}\label{sec:strong}
We first shall consider the scenario that the signal  in the data is strong in the sense that the eigenvalues in the signal covariance matrix are proportional to the dimension of the signal covariance matrix; see Assumption \ref{as:eta}.
This scenario includes high-dimensional latent factor models \citep{fan2017high} and  functional data models \citep{bunea2015sample}.

\begin{assumption}\label{as:eta}
The eigenvalues $\gamma_j$s in  the signal covariance matrix $\Sigma_1$ satisfy $\gamma_j = \eta_j p$ for $1\leq j\leq r$ and $\eta_1 >\eta_2>\cdots >\eta_r >0$ are fixed constants.
\end{assumption}

Assumption \ref{as:eta} usually holds for high-dimensional latent factor data \citep{fan2017high} and  functional data  \citep{bunea2015sample}.
Under Assumption \ref{as:eta}, \cite{bunea2015sample} shows that the sample covariance matrix of sub-Gaussian random vectors with $\Sigma_1$ as the population covariance is consistent in both the scaled Frobenius norm and the scaled operator norm.

\begin{theorem}[Strong signal: statistical error bound]\label{thm:rate_block} 
Consider the optimization problem \eqref{eq:model} with $0\leq \beta \leq 1$. 
Suppose that Assumptions \ref{as:eta0} -- \ref{as:eta} hold and $\sqrt{\log p/n} = o(1)$. Suppose also that $\lambda \gtrapprox \sqrt{\log p/n}$. Then
\begin{equation*}
 \|\bvhat_j-\bv_j \|_2 = O_p\left(\sqrt{\frac{\log p}{n}}\right),\quad \text{for } j=1,\ldots, r.
\end{equation*}
\end{theorem}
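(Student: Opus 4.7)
The plan is to prove the theorem by induction on $j\in\{1,\ldots,r\}$, combining Proposition \ref{prop:sparse} (deterministic sparsity control) with Proposition \ref{prop:1} (deterministic error bound), after first verifying that the deterministic conditions hold with high probability under Assumptions \ref{as:prob_ieq}--\ref{as:eta}. The crucial observation is that the strong-signal scaling $\gamma_j=\eta_j p$ gives an eigengap of order $p$, which exactly cancels a matching factor of $p$ appearing in both the block penalty (through $w_1=p_0=p/I$) and the deflation error (through $\gamma_k\epsilon_k$).

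First I would use Assumption \ref{as:prob_ieq} together with the BEMA consistency of $\hat{\sigma}_i^2$ to get $\|E\|_{\infty,\infty}=O_p(\sqrt{\log p/n})$, and hence, since equal block weights force $\|E\|_{\infty,\infty}^{\ast}\leq\|E\|_{\infty,\infty}$, also $\|E\|_{\infty,\infty}^{\ast}=O_p(\sqrt{\log p/n})$. Then $\lambda\gtrapprox\sqrt{\log p/n}$ with a sufficiently large constant secures condition \eqref{eq:cond6} of Proposition \ref{prop:1} and the noise bounds needed for Proposition \ref{prop:sparse}; the boundary cases $\beta\in\{0,1\}$ are handled by Corollaries \ref{coro:block} and \ref{coro:elem}. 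Taking $r_1=r$, the unions $J_1$ and $J_2$ cover the supports of all $\bv_1,\ldots,\bv_r$, so $(\Sigma_1)^{J_2^cJ_2}=0$ and $(\Sigma_1)_{J_1^cJ_1}=0$ (because $\Sigma_1=\sum_j\gamma_j\bv_j\bv_j^{\intercal}$), which reduces conditions \eqref{eq:cond3} and \eqref{eq:cond4} to the already-established noise bounds. Condition \eqref{eq:cond5} is verified with high probability by using $s_1\leq p$, $s_2\leq I$, so that $\lambda\beta s_1+\lambda(1-\beta)s_2 w_1\lesssim p\sqrt{\log p/n}=o(p)$, while the inductive hypothesis bounds $\gamma_1 b_j\lesssim p\sqrt{\log p/n}=o(p)$, both dominated by $(\eta_j-\eta_{j+1})p$.

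The induction now runs smoothly. For the base case $j=1$, $\epsilon_0=0$ makes $\tilde{b}_1=0$ and $\tilde{a}_1=O(p\sqrt{\log p/n})$, so Proposition \ref{prop:1} gives $\epsilon_1\lesssim \tilde{a}_1/p=O_p(\sqrt{\log p/n})$. For the inductive step, assuming $\epsilon_k=O_p(\sqrt{\log p/n})$ for all $k<j$, the sparsity containments $J_1(\hat{\bv}_k)\subseteq J_1$ and $J_2(\hat{\bv}_k)\subseteq J_2$ from Proposition \ref{prop:sparse} control $\tilde{a}_j=O_p(p\sqrt{\log p/n})$ and $\tilde{b}_j=O_p(p\log p/n)$; dividing by $\gamma_j-\gamma_{j+1}=\Theta(p)$ in Proposition \ref{prop:1} yields $\epsilon_j=O_p(\sqrt{\log p/n})$, closing the induction.

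The main obstacle is the careful bookkeeping of scale cancellations: one must ensure that the factor-$p$ eigengap dominates every $p$ appearing in $w_1 s_2$ (block penalty), $\gamma_k s_1$ (element penalty), and $\gamma_k\epsilon_{k-1}$ (deflation error), while tracking how the accumulated deflation error feeds back through $b_j$ and $\tilde{b}_j$. A secondary concern is a union bound so that the high-probability events for $\|E\|_{\infty,\infty}$ and conditions \eqref{eq:cond3}--\eqref{eq:cond5} across all $j\leq r$ hold simultaneously, but since $r$ is fixed this is routine.
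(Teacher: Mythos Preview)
Your argument is correct, but it takes a detour that the paper avoids. The paper's proof uses \emph{only} Proposition~\ref{prop:1}: the quantities $|J_1(\bvhat_k)|$ and $|J_2(\bvhat_k)|$ appearing in $\tilde b_j$ are controlled by the trivial bounds $|J_1(\bvhat_k)|\le p$ and $|J_2(\bvhat_k)|\le I$, which already give $\tilde a_j=O(\lambda p)$ and $\tilde b_j=O(\lambda^2 p)$; after dividing by the eigengap $\gamma_j-\gamma_{j+1}=\Theta(p)$ the result follows by induction. Proposition~\ref{prop:sparse} is never invoked.

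Your route instead threads Proposition~\ref{prop:sparse} into the induction to secure $J_1(\bvhat_k)\subseteq J_1$ and $J_2(\bvhat_k)\subseteq J_2$ before bounding $\tilde b_j$. This is valid, but in the strong-signal regime it buys nothing: since you then use $s_1\le p$ and $s_2\le I$, the resulting bounds on $|J_1(\bvhat_k)|$ and $|J_2(\bvhat_k)|$ are no sharper than the trivial ones. The extra machinery also forces you to check conditions~\eqref{eq:cond3}--\eqref{eq:cond5}, and condition~\eqref{eq:cond4} in particular carries a factor $1/\beta$, so you would need the constant in $\lambda\gtrapprox\sqrt{\log p/n}$ to depend on $\beta$ when $\beta$ is small but positive (your handling of the boundary cases via the corollaries does not cover this). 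The paper sidesteps this entirely because condition~\eqref{eq:cond6} is the only one needed, and there the \emph{minimum} over the two norms makes a $\beta$-uniform constant work. Where your approach would genuinely pay off is in the weak-signal theorems (Theorems~\ref{thm:rate_block_weak} and~\ref{thm:rate_joint_weak}), where the trivial bounds fail and the sparsity of $\bvhat_k$ from Proposition~\ref{prop:sparse} is essential; the paper does exactly that there.
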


Theorem \ref{thm:rate_block} establishes a near-parametric rate of convergence for the estimated eigenvectors. The theorem provides theoretical validation for the work in \cite{zhang2021interpretable}, in which multivariate functional data is considered and $\Sigma$ is the discretized covariance matrix for multivariate functional data. Indeed, for multivariate functional data, it can be easily verified that Assumption \ref{as:eta} holds under mild conditions; see, for example, \cite{bunea2015sample}.
If in addition $\beta = 1$,
then Theorem \ref{thm:rate_block} reduces to theoretical results established in \cite{chen2015localized} for multivariate functional data. When $\beta = 0$, then only blockwise penalty is employed and the above theoretical result is new.

Note that Assumption \ref{as:eta} might still be too restrictive and can be relaxed. If the eigenvalues of $\Sigma_1$ actually satisfy $\gamma_j = \eta_j \tilde{p}$ for $1\leq j\leq r$ and $\eta_1>\eta_2>\cdots> \eta_r >0$, then the bound in  Theorem \ref{thm:rate_block} 
become $O_p( p/\tilde{p}\sqrt{\log p/n})$ and thus consistency still holds as long as $p/\tilde{p} \sqrt{\log p/n} = o(1)$.

\begin{theorem}[Strong signal: sparsity and blockwise false negative control I]\label{thm:recovery_I_strong}
Consider the optimization problem \eqref{eq:model} with $\beta = 0$. 
Suppose the same assumptions and conditions as in Theorem \ref{thm:rate_block} hold.
\begin{enumerate}[label=(\roman*)]
\item
For any sufficiently large $n$, with high probability, $\bvhat_j$ uniquely exists and satisfies $$ J_2(\bvhat_j) \subseteq J_2,  \quad \text{ for }   j=1,\ldots, r.$$
\item
If
$$
\min_{k: \|\bv_j^k\|_2\neq 0} \|\bv_j^k\|_2^2 \gtrapprox  \sqrt{ \frac{\log p}{n}}, \quad \text{ for }  j=1,\ldots, r,
$$
then for any sufficiently large $n$, with high probability, 
$$J_2(\bv_j) \subseteq J_2(\bvhat_j) \quad \text{ for } j=1,\ldots, r.$$
\end{enumerate}
\end{theorem}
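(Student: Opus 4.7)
The plan is to combine the deterministic sparsity result in Corollary \ref{coro:block} with the statistical rates established in Theorem \ref{thm:rate_block} for part (i), and to combine Proposition \ref{supp:block} with the deterministic error bound in Proposition \ref{prop:1} for part (ii). Throughout, I use the fact that under Assumption \ref{as:prob_ieq} and the BEMA rate $\max_i |\hat{\sigma}_i^2 - \sigma_i^2| = O_p(n^{-1})$, we have $\|E\|_{\infty,\infty} = O_p(\sqrt{\log p/n})$ and, because of the equal-block-size and equal-weight assumption, $\|E\|_{\infty,\infty}^{\ast} \leq \|E\|_{\infty,\infty} = O_p(\sqrt{\log p/n})$.

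For part (i), I would take $r_1 = r$ in Corollary \ref{coro:block} so that $J_2 = \cup_{j=1}^r J_2(\bv_j)$. Since $\bv_j^k = 0$ whenever $k \in J_2^c$ and $\Sigma_1 = \sum_{j=1}^r \gamma_j \bv_j \bv_j^{\t}$, the block $(\Sigma_1)^{k\ell} = \sum_j \gamma_j \bv_j^k (\bv_j^\ell)^{\t}$ vanishes whenever either $k$ or $\ell$ lies outside $J_2$. Consequently $\|(\Sigma_1)^{J_2^c J_2}\|_{2,\infty}^{\ast} = 0$, and the first condition of Corollary \ref{coro:block} collapses to $\|E\|_{\infty,\infty}^{\ast} \leq (w_0/w_1)\lambda = \lambda$, which holds with high probability because $\lambda \gtrapprox \sqrt{\log p/n}$. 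For the eigen-gap condition, I substitute $\gamma_k = \eta_k p$ from Assumption \ref{as:eta} and $w_1 = p_0 = p/I$ into the definition of $b_j$ (with $\beta = 0$, the $\lambda \beta s_1$ term disappears). Using the rate $\epsilon_k = O_p(\sqrt{\log p/n})$ supplied by Theorem \ref{thm:rate_block} for $k \leq j-1$, each summand of $b_j$ is $O_p(\sqrt{\log p/n})$, and hence $4(\lambda s_2 w_1 + \gamma_1 b_j) = O_p(p\sqrt{\log p/n})$, which is dominated by $\gamma_j - \gamma_{j+1} = (\eta_j - \eta_{j+1})p$ once $n$ is large enough. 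Corollary \ref{coro:block} then yields uniqueness of $\bvhat_j$ together with $J_2(\bvhat_j) \subseteq J_2$ for each $j = 1, \ldots, r$.

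For part (ii), the deterministic bound \eqref{eq:Herror} of Proposition \ref{prop:1}, combined with the sparsity control from part (i) (which makes $|J_2(\bvhat_k)| \leq |J_2|$ bounded) and the $\epsilon_k = O_p(\sqrt{\log p/n})$ rate, yields $\|\Hhat_j - \bv_j \bv_j^{\t}\|_2 = O_p(\sqrt{\log p/n})$. Under the separation hypothesis $\min_{k : \|\bv_j^k\|_2 \neq 0} \|\bv_j^k\|_2^2 \gtrapprox \sqrt{\log p/n}$, the minimum on the left exceeds $2\|\Hhat_j - \bv_j \bv_j^{\t}\|_2$ with high probability, so Proposition \ref{supp:block} delivers $J_2(\bv_j) \subseteq J_2(\bvhat_j)$ for each $j$.

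The main obstacle is the circular dependence between sparsity and rate: the $b_j$ appearing in Corollary \ref{coro:block}'s conditions depends on the earlier errors $\epsilon_1, \ldots, \epsilon_{j-1}$, while Proposition \ref{prop:1}'s $\tilde{b}_j$ involves $|J_1(\bvhat_k)|$ and $|J_2(\bvhat_k)|$, so the rate argument in Theorem \ref{thm:rate_block} implicitly uses sparsity of preceding estimates. I resolve this by an induction on $j$ that establishes sparsity and rate in tandem: at step $j$, invoke Theorem \ref{thm:rate_block} (already proved under the same assumptions) to secure $\max_{k < j} \epsilon_k = O_p(\sqrt{\log p/n})$, then verify the conditions of Corollary \ref{coro:block} to produce sparsity and uniqueness of $\bvhat_j$. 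Since $r$ is fixed, a single union bound over $j = 1, \ldots, r$ preserves the high-probability event throughout.
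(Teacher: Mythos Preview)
Your proposal is correct and follows essentially the same route as the paper: for part (i) you invoke Corollary \ref{coro:block} with $r_1=r$, note that $\|(\Sigma_1)^{J_2^c J_2}\|_{2,\infty}^{\ast}=0$, and then verify the two remaining conditions using $\|E\|_{\infty,\infty}^{\ast}=O_p(\sqrt{\log p/n})$ and the rate $\epsilon_k=O_p(\sqrt{\log p/n})$; for part (ii) you feed $\|\Hhat_j-\bv_j\bv_j^{\t}\|_2=O_p(\sqrt{\log p/n})$ into Proposition \ref{supp:block}. One small over-complication: your concern about a ``circular dependence'' between sparsity and rate is unnecessary in the strong-signal regime, because the proof of Theorem \ref{thm:rate_block} already uses the trivial bounds $|J_1(\bvhat_k)|\leq p$ and $|J_2(\bvhat_k)|\leq I$ rather than any sparsity of preceding estimates, so no induction intertwining sparsity and rate is needed here (that device is genuinely required only in the weak-signal theorems).
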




\begin{theorem}[Strong signal: sparsity and elementwise false negative control II]\label{thm:recovery_II_strong}
Consider the optimization problem \eqref{eq:model} with $0 < c < \beta \leq 1$, where $c$ is a fixed constant. 
Suppose the same assumptions and conditions as in Theorem \ref{thm:rate_block} hold.
\begin{enumerate}[label=(\roman*)]
\item
For any sufficiently large $n$, with high probability, $\bvhat_j$ uniquely exists and satisfies $$ J_1(\bvhat_j) \subseteq J_1,  \quad \text{ for }   j=1,\ldots, r.$$
\item
If 
$$
\min_{k:\, v_{jk} \neq 0} v_{jk}^2 \gtrapprox \sqrt{ \frac{ \log p}{n}}, \quad \text{ for }  j=1,\ldots, r,
$$
then for any sufficiently large $n$, with high probability, 
$$J_1(\bv_j) \subseteq J_1(\bvhat_j) \quad \text{ for } j=1,\ldots, r.$$

\end{enumerate}
\end{theorem}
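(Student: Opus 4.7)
The plan is to derive the two claims by verifying the deterministic conditions of Propositions \ref{prop:sparse} and \ref{supp:element} on the high-probability event implied by Assumption \ref{as:prob_ieq} together with the BEMA consistency of $\hat{\sigma}_i^2$: these jointly give $\|E\|_{\infty,\infty} = O_p(\sqrt{\log p/n})$ and, via equal block size and equal weights, $\|E\|_{\infty,\infty}^{\ast}\leq \|E\|_{\infty,\infty}$.

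For part (i), I would apply Proposition \ref{prop:sparse} with $r_1 = r$. Since $J_1 = \cup_{j=1}^r J_1(\bv_j)$ and $J_2 = \cup_{j=1}^r J_2(\bv_j)$ capture every nonzero coordinate and block of the true eigenvectors, the off-support quantities $\|(\Sigma_1)_{J_1^cJ_1}\|_{2,\infty}$ and $\|(\Sigma_1)^{J_2^cJ_2}\|_{2,\infty}^{\ast}$ both vanish, so conditions \eqref{eq:cond3} and \eqref{eq:cond4} collapse to $\|E\|_{\infty,\infty}^{\ast}\leq \lambda$ and $\|E\|_{\infty,\infty}\leq \beta\lambda$. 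Both are implied by $\beta>c$ (this is precisely where the lower bound $c$ is used, to keep $1/\beta$ uniformly bounded) and $\lambda\gtrapprox \sqrt{\log p/n}$ with a sufficiently large constant. The more delicate condition is \eqref{eq:cond5}, whose right-hand side involves $b_j$ and thus the estimation errors $\epsilon_1,\ldots,\epsilon_{j-1}$. Under Assumption \ref{as:eta} one has $\gamma_k\asymp \gamma_k-\gamma_{k+1}\asymp p$, so once $\epsilon_k = O_p(\sqrt{\log p/n})$ is in hand for $k<j$, each summand of $b_j$ is $O(\lambda s_1 + \lambda(1-\beta) s_2 p_0 + \sqrt{\log p/n})$; substituting this back into \eqref{eq:cond5} and using $\gamma_j-\gamma_{j+1}\asymp p$ makes the inequality valid in the same sparsity/regularization regime that underlies Theorem \ref{thm:rate_block}. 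Proposition \ref{prop:sparse} then yields uniqueness of $\Hhat_j$ and the inclusion $J_1(\bvhat_j)\subseteq J_1$ for every $j$.

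For part (ii), I would pass through the intermediate bound on $\|\Hhat_j - \bv_j\bv_j^{\t}\|_2$ in \eqref{eq:Herror} of Proposition \ref{prop:1}. The same control of $\epsilon_k$ that drives Theorem \ref{thm:rate_block} shows this bound is also $O_p(\sqrt{\log p/n})$. Then the signal-strength hypothesis $\min_{k:\, v_{jk}\neq 0} v_{jk}^2\gtrapprox \sqrt{\log p/n}$, taken with a large enough implicit constant, meets the assumption of Proposition \ref{supp:element}, which delivers $J_1(\bv_j)\subseteq J_1(\bvhat_j)$ on the same high-probability event.

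The main obstacle I foresee is the joint induction required to close the argument. Condition \eqref{eq:cond5} at level $j$ needs control of $\epsilon_1,\ldots,\epsilon_{j-1}$, yet the rate $\epsilon_j=O_p(\sqrt{\log p/n})$ available from Theorem \ref{thm:rate_block} relies, via $\tilde b_j$, on cardinality bounds $|J_1(\bvhat_k)|\leq s_1$ and $|J_2(\bvhat_k)|\leq s_2$ for $k<j$. The cleanest way to close the loop is to induct on $j$: at stage $j$, use the previously established $\epsilon_{j-1}$ bound to verify \eqref{eq:cond5} and apply Proposition \ref{prop:sparse}, which delivers both uniqueness and the cardinality bounds at level $j$; feeding these into Proposition \ref{prop:1} then yields $\epsilon_j=O_p(\sqrt{\log p/n})$, closing the step and furnishing all ingredients for both parts of the theorem.
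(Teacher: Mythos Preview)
Your proposal is correct and follows essentially the same route as the paper: verify the conditions of Proposition~\ref{prop:sparse} (with $r_1=r$, so the off-support terms vanish) for part~(i), and invoke Proposition~\ref{supp:element} together with the $O_p(\sqrt{\log p/n})$ bound on $\|\Hhat_j-\bv_j\bv_j^{\t}\|_2$ for part~(ii).

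One simplification: the joint induction you anticipate is not actually needed in the strong-signal regime. The proof of Theorem~\ref{thm:rate_block} bounds $|J_1(\bvhat_k)|$ and $|J_2(\bvhat_k)|$ by the trivial values $p$ and $I$, not by $s_1$ and $s_2$; because $\gamma_k\asymp\gamma_k-\gamma_{k+1}\asymp p$ under Assumption~\ref{as:eta}, these crude bounds already yield $\tilde a_j=O(\lambda p)$, $\tilde b_j=O(\lambda^2 p)$ and hence $\epsilon_j=O_p(\lambda)$ by a self-contained induction that never touches Proposition~\ref{prop:sparse}. With the $\epsilon_k$ bounds in hand independently, the same computation gives $b_j=O(\lambda)=o(1)$, and condition~\eqref{eq:cond5} follows since its left side is $\asymp p$ while the subtracted terms are $O(\lambda p)=o(p)$. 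So the two parts decouple cleanly, and the circularity you flagged is a feature of the weak-signal theorems (where the trivial bounds on $|J_1(\bvhat_k)|$ would be too coarse), not of this one.
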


Theorem \ref{thm:recovery_II_strong} includes the special case with $\beta = 1$, i.e., only elementwise penalty is imposed. The same observation can be made for Theorem \ref{thm:recovery_II_weak}.


\subsubsection{Spiked covariance model with weak signals}\label{sec:weak}
In this section, we consider the high dimensional spiked covariance model in which the eigenvalues are bounded (Assumption \ref{as:sigma1_weak}) and sparsity of eigenvectors is needed
to construct a consistent estimator.

\begin{assumption}\label{as:sigma1_weak}
The eigenvalues in the the signal covariance matrix $\Sigma_1$ are all bounded. 
\end{assumption}

\begin{theorem}[Weak signal: statistical error bound I]\label{thm:rate_block_weak} 
Consider the optimization problem with $\beta = 0$.
Suppose that Assumptions \ref{as:eta0},  \ref{as:prob_ieq},  and \ref{as:sigma1_weak} hold. Suppose also that $s_2p_0\sqrt{\log p/n} = o(1)$ and $\lambda \gtrapprox \sqrt{\log p/n}$. Then
\begin{equation*}
 \|\bvhat_j-\bv_j \|_2 = O_p\left(s_2p_0\sqrt{\frac{\log p}{n}}\right),  \quad \text{ for }   j=1,\ldots, r.
\end{equation*}
\end{theorem}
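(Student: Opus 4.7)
The plan is to combine the deterministic $\ell_2$ error bound of Proposition \ref{prop:1}, specialized to $\beta=0$, with the blockwise sparsity control of Corollary \ref{coro:block}, closing the loop between the two through a joint induction on the eigenvector index $j$. With $\beta = 0$ and $w_1 = p_0$, the key quantities in Proposition \ref{prop:1} reduce to $\tilde{a}_j = 2\bigl(\lambda p_0 |J_2(\bv_j)| + \sum_{k<j}\gamma_k\epsilon_k\bigr)$ and $\tilde{b}_j = O\bigl((\lambda p_0 s_2 + \sum_{k<j}\gamma_k\epsilon_k)\epsilon_{j-1} + \gamma_j\epsilon_{j-1}^2\bigr)$, where the sparsity factors are controlled by $|J_2(\bv_j)| \leq s_2$ and (inductively) $|J_2(\bvhat_k)| \leq s_2$.

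Stochastic control of the error matrix $E = S - \Sigma_1$ is the starting point. Under Assumption \ref{as:prob_ieq}, the discussion following it gives $\|E\|_{\infty,\infty} = O_p(\sqrt{\log p/n})$, and since the equal block weights $w_{k\ell}=p_0$ match the block size, $\|E\|_{\infty,\infty}^{\ast} \leq \|E\|_{\infty,\infty}$ transfers the same rate to $\|E\|_{\infty,\infty}^{\ast}$. The hypothesis $\lambda \gtrapprox \sqrt{\log p/n}$ then places us, on a high-probability event $\mathcal{E}_n$, inside the regime $\|E\|_{\infty,\infty}^{\ast} \leq \lambda$, which is precisely condition \eqref{eq:cond6} at $\beta=0$. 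Choosing $J_2 = \cup_{k=1}^r J_2(\bv_k)$ makes $(\Sigma_1)^{J_2^c J_2} = 0$, so the penalty condition of Corollary \ref{coro:block} collapses to exactly this bound; its eigengap condition becomes $\gamma_j - \gamma_{j+1} > 4(\lambda s_2 p_0 + \gamma_1 b_j)$, in which $b_j$ depends only on $\lambda s_2 p_0$ and the earlier errors $\epsilon_k$ for $k<j$. Because $\gamma_1$ is bounded by Assumption \ref{as:sigma1_weak}, $\lambda s_2 p_0 = o(1)$ by hypothesis, and $\gamma_j - \gamma_{j+1}$ is a fixed positive constant by Assumption \ref{as:eta0}, this eigengap condition will hold for large $n$ provided the inductive hypothesis ensures $b_j = o(1)$.

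The joint induction then proceeds as follows. For the base case $j=1$, $\epsilon_0 = 0$ forces $\tilde{b}_1 = 0$ and makes $b_1 = O(\lambda s_2 p_0) = o(1)$, so Corollary \ref{coro:block} delivers $J_2(\bvhat_1) \subseteq J_2$ and \eqref{d_eb} delivers $\|\bvhat_1 - \bv_1\|_2 = O(\lambda p_0 s_2)$. Propagation to $\epsilon_1$ uses the elementary identity $\Pihat_j - \Pi_j = (\Pihat_{j-1} - \Pi_{j-1}) + (\bvhat_j\bvhat_j^\t - \bv_j\bv_j^\t)$ together with $\|\bvhat_j\bvhat_j^\t - \bv_j\bv_j^\t\|_2 \leq 2\|\bvhat_j - \bv_j\|_2$ for unit vectors. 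For the inductive step, assume $\epsilon_k = O(\lambda p_0 s_2)$ and $J_2(\bvhat_k) \subseteq J_2$ for every $k<j$. Then $b_j = o(1)$ and Corollary \ref{coro:block} gives the sparsity at level $j$; moreover $\tilde{a}_j = O(\lambda p_0 s_2)$ and $\tilde{b}_j = O(\lambda^2 p_0^2 s_2^2)$, so \eqref{d_eb} combined with the constant eigengap yields $\|\bvhat_j - \bv_j\|_2 = O(\lambda p_0 s_2)$, closing the induction. Choosing $\lambda$ of the optimal order $\sqrt{\log p/n}$ produces the claimed rate.

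The main obstacle is the apparent circularity between the sparsity-control hypotheses of Corollary \ref{coro:block}, which involve $b_j$ and hence the prior errors $\epsilon_k$, and the error bound from Proposition \ref{prop:1}, which itself requires the sparsity conclusion. The joint induction dissolves this loop by only using $(j{-}1)$-level information at step $j$. A related subtlety is the rate accounting inside $\tilde{b}_j$: the cross term $\lambda p_0 s_2 \cdot \epsilon_{j-1}$ would naively inflate the rate after the square root, but it is precisely because both factors are already of order $\lambda p_0 s_2$ that $\sqrt{\tilde{b}_j/(\gamma_j - \gamma_{j+1})}$ stays of order $\lambda p_0 s_2$ and no degradation occurs; the condition $s_2 p_0 \sqrt{\log p/n} = o(1)$ is exactly what keeps this small uniformly over the $r$ induction steps.
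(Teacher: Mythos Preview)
Your proposal is correct and follows essentially the same approach as the paper: the paper's proof of Theorem \ref{thm:rate_block_weak} invokes Proposition \ref{prop:1} together with part (i) of Theorem \ref{thm:recovery_I_weak} (which in turn rests on Corollary \ref{coro:block}), and carries out exactly the induction you describe, with the same computations $\tilde{a}_j=O(\lambda p_0 s_2)$ and $\tilde{b}_j=O((\lambda p_0 s_2)^2)$. You have simply made the joint induction between the error bound and the sparsity control explicit in a single argument, whereas the paper splits it across two mutually referencing theorems; the content is the same.
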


As $s_2$ counts the number of data views for which there is at least one signal eigenvector $\bv_j$ with corresponding non-zero blocks,
t assumption that $s_2p_0\sqrt{\log p/n} = o(1)$ implies that the number of non-zero blocks in the signal eigenvectors have to bounded, which leads to sparsity of the signal eigenvectors.

\begin{theorem}[Weak signal: sparsity and blockwise false negative control I]\label{thm:recovery_I_weak}
Consider the optimization problem \eqref{eq:model} with $\beta = 0$. 
Suppose the same assumptions and conditions as in Theorem \ref{thm:rate_block_weak} hold.
\begin{enumerate}[label=(\roman*)]
\item
For any sufficiently large $n$, with high probability, $\bvhat_j$ uniquely exists and satisfies $$ J_2(\bvhat_j) \subseteq J_2,  \quad \text{ for }   j=1,\ldots, r.$$
\item
If
$$
\min_{k: \|\bv_j^k\|_2\neq 0} \|\bv_j^k\|_2^2 \gtrapprox \sqrt{ \frac{\log p}{n}}, \quad \text{ for }  j=1,\ldots, r,
$$
then for any sufficiently large $n$, with high probability, 
$$J_2(\bv_j) \subseteq J_2(\bvhat_j) \quad \text{ for } j=1,\ldots, r.$$
\end{enumerate}
\end{theorem}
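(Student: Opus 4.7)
The plan is to split the theorem into its two claims and handle each by bootstrapping the deterministic machinery of Section~\ref{sec:general} against Assumption~\ref{as:prob_ieq} and the $\ell_2$ rate of Theorem~\ref{thm:rate_block_weak}.

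For part (i), since $\beta=0$ the relevant deterministic tool is Corollary~\ref{coro:block}, and the plan is to verify its two conditions with high probability for every $j=1,\ldots,r$. Take $r_1=r$, so that $J_2=\cup_{j=1}^r J_2(\bv_j)$ contains every nonzero block index of every signal eigenvector; then $(\Sigma_1)^{J_2^c J_2}\equiv 0$ and the term $\|(\Sigma_1)^{J_2^c J_2}\|_{2,\infty}^{\ast}$ drops out. Under the equal-weight convention $w_{k\ell}=p_0$ one has $w_0/w_1=1$ and $\|E\|_{\infty,\infty}^{\ast}\le \|E\|_{\infty,\infty}$, so the first condition reduces to $\|E\|_{\infty,\infty}\le \lambda$. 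By Assumption~\ref{as:prob_ieq} together with the BEMA rate $\max_i|\hat\sigma_i^2-\sigma_i^2|=O_p(n^{-1})$, one has $\|E\|_{\infty,\infty}=O_p(\sqrt{\log p/n})$, and since $\lambda\gtrapprox\sqrt{\log p/n}$ this holds on a set of probability tending to one.

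For the second condition, under Assumption~\ref{as:sigma1_weak} the gap $\gamma_j-\gamma_{j+1}$ is bounded below by a positive constant, so it suffices to show $4(\lambda s_2 w_1+\gamma_1 b_j)=o_p(1)$. The term $\lambda s_2 w_1=O(s_2 p_0\sqrt{\log p/n})$ is $o(1)$ by hypothesis. The quantity
\[
b_j=\sum_{k=1}^{j}\frac{4\bigl[2\lambda s_2 w_1+2\gamma_k\epsilon_{k-1}+\gamma_1\epsilon_{k-1}^2\bigr]}{\gamma_k-\gamma_{k+1}}
\]
is controlled by induction on $j$: assuming that on the good event $\epsilon_{k-1}=O(s_2 p_0\sqrt{\log p/n})$ for all $k\le j$, each summand is $o(1)$, so $b_j=o_p(1)$. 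This activates Corollary~\ref{coro:block} and produces unique existence together with $J_2(\bvhat_j)\subseteq J_2$; then Proposition~\ref{prop:1}, applied with $|J_2(\bvhat_k)|\le s_2$, returns an $\epsilon_j$ of the correct order to feed the next induction step (starting at $j=1$ with $\epsilon_0=0$). Because $r$ is fixed, the finitely many complementary failure events combine by a union bound.

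For part (ii), the plan is to apply Proposition~\ref{supp:block} pointwise. The bound $\|\Hhat_j-\bv_j\bv_j^{\t}\|_2\le 2\tilde a_j/(\gamma_j-\gamma_{j+1})+\sqrt{2\tilde b_j/(\gamma_j-\gamma_{j+1})}$ from Proposition~\ref{prop:1}, evaluated on the same good event and with the $\epsilon_k$'s and $|J_2(\bvhat_k)|$'s controlled as above, gives a high-probability bound of the same order as $\|\bvhat_j-\bv_j\|_2$. Combining this with the assumed block signal-strength lower bound yields $\min_{k:\|\bv_j^k\|_2\neq 0}\|\bv_j^k\|_2^2>2\|\Hhat_j-\bv_j\bv_j^{\t}\|_2$ for $n$ sufficiently large, from which $J_2(\bv_j)\subseteq J_2(\bvhat_j)$ follows immediately.

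The main obstacle is the coupling in the induction: the sparsity conclusion of Corollary~\ref{coro:block} at step $j$ requires that $b_j$, which depends on $\epsilon_{k-1}$ for $k\le j$, is small, while controlling $\epsilon_j$ through Proposition~\ref{prop:1} in turn requires the sparsity conclusion $|J_2(\bvhat_k)|\le s_2$ for $k<j$. These two threads must be advanced in lockstep on a single high-probability event; the payoff of working in the weak-signal regime with equal weights is that the condition $s_2 p_0\sqrt{\log p/n}=o(1)$ makes every inductive perturbation negligible compared to the constant spectral gap $\gamma_j-\gamma_{j+1}$.
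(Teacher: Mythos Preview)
Your proposal is correct and follows essentially the same route as the paper: part~(i) is proved by verifying the two hypotheses of Corollary~\ref{coro:block} (with $r_1=r$ so that $(\Sigma_1)^{J_2^cJ_2}=0$) via Assumption~\ref{as:prob_ieq} and the inductive coupling of $b_j$ with the $\epsilon_k$'s from Theorem~\ref{thm:rate_block_weak}, and part~(ii) is proved by feeding the $\|\Hhat_j-\bv_j\bv_j^{\t}\|_2$ bound into Proposition~\ref{supp:block}. Your explicit articulation of the lockstep induction between the sparsity conclusion and the error control is a useful elaboration of what the paper leaves implicit.
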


Part ($ii$) of the above theorem implies that in order for any non-zero block of the signal eigenvector to be detected, i.e., with elements estimated as non-zero, the minimal signal strength of the signal eigenvectors has to be at least of the order $\sqrt{\log p/n}$ blockwise.

\begin{theorem}[Weak signal: statistical error bound II]\label{thm:rate_joint_weak} 
Consider the optimization problem \eqref{eq:model} with $0 <c < \beta \leq 1$, where $c$ is  a fixed constant. Suppose that Assumptions \ref{as:eta0}, \ref{as:prob_ieq},  and \ref{as:sigma1_weak} hold.
Suppose also that $s_1\sqrt{\log p/n} = o(1)$, $s_2 p_0 =O(s_1)$, and $\lambda \gtrapprox \sqrt{\log p/n}$.
Then
\begin{equation*}
 \|\bvhat_j-\bv_j \|_2 = O_p\left(s_1\sqrt{\frac{\log p}{n}}\right),
 \quad \text{ for } j=1,\ldots, r.
\end{equation*}
\end{theorem}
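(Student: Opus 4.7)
The plan is to combine the deterministic error bound in Proposition~\ref{prop:1} with the deterministic sparsity-control result Proposition~\ref{prop:sparse}, and run both jointly through an induction on $j$ on the high-probability event where $\|E\|_{\infty,\infty}$ and $\|E\|_{\infty,\infty}^{\ast}$ are small. The key structural observation is that if one applies Proposition~\ref{prop:sparse} with $r_1 = r$, the sets $J_1 = \bigcup_{j=1}^r J_1(\bv_j)$ and $J_2 = \bigcup_{j=1}^r J_2(\bv_j)$ exhaust the support of $\Sigma_1$, so $\|(\Sigma_1)^{J_2^c J_2}\|_{2,\infty}^{\ast} = 0$ and $\|(\Sigma_1)_{J_1^c J_1}\|_{2,\infty} = 0$, which removes the cross-support contributions in~\eqref{eq:cond3} and~\eqref{eq:cond4}. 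Under Assumption~\ref{as:prob_ieq} and the BEMA rate $\max_i |\hat\sigma_i^2 - \sigma_i^2| = O_p(n^{-1})$, $\|E\|_{\infty,\infty} = O_p(\sqrt{\log p/n})$; under equal block size $p_0$ and weights $w_{k\ell}=p_0$, the crude bound $\|E^{k\ell}\|_2 \leq p_0\|E\|_{\infty,\infty}$ yields $\|E\|_{\infty,\infty}^{\ast} \leq \|E\|_{\infty,\infty}$. Combined with $\beta \geq c > 0$ and $\lambda \gtrapprox \sqrt{\log p/n}$, this verifies condition~\eqref{eq:cond6} with high probability.

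The induction hypothesis at level $j$ is: for every $k<j$, $\bvhat_k$ exists uniquely, satisfies $J_1(\bvhat_k)\subseteq J_1$ and $J_2(\bvhat_k)\subseteq J_2$, and $\epsilon_k = O_p(s_1\sqrt{\log p/n})$. For the base case, $\epsilon_0 = 0$, so $\tilde b_1 = 0$ and $\tilde a_1 = 2[\lambda\beta s_1 + \lambda(1-\beta)p_0 s_2] = O(\lambda s_1)$ by $p_0 s_2 = O(s_1)$. The conditions of Proposition~\ref{prop:sparse} at $j=1$ then reduce to $\|E\|_{\infty,\infty}^{\ast} \leq \lambda$, $\|E\|_{\infty,\infty}/\beta \leq \lambda$, and $\gamma_1-\gamma_2 > 4[\lambda s_1 + \gamma_1 b_1]$; the last of these is satisfied because the eigenvalue gap is a positive constant under Assumptions~\ref{as:eta0} and~\ref{as:sigma1_weak} (with the convention $\gamma_{r+1}=0$ handling $j=r$), while $\lambda s_1 = o(1)$. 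Proposition~\ref{prop:1} then delivers $\epsilon_1 = O_p(s_1\sqrt{\log p/n})$.

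For the inductive step at level $j$, the hypothesis gives $b_j = O_p(\lambda(s_1 + p_0 s_2) + \max_{k\leq j}\gamma_k\epsilon_{k-1} + \gamma_1\max_{k\leq j}\epsilon_{k-1}^2) = O_p(s_1\sqrt{\log p/n})$, which keeps conditions~\eqref{eq:cond3}--\eqref{eq:cond5} satisfied at level $j$, so $\bvhat_j$ uniquely exists and inherits the sparsity $J_1(\bvhat_j)\subseteq J_1$, $J_2(\bvhat_j)\subseteq J_2$. Substituting $|J_1(\bvhat_k)|\leq s_1$ and $|J_2(\bvhat_k)|\leq s_2$ for $k<j$ into Proposition~\ref{prop:1} yields $\tilde a_j = O_p(\lambda s_1 + \sum_{k<j}\gamma_k\epsilon_k) = O_p(s_1\sqrt{\log p/n})$ and $\tilde b_j = O_p\bigl(\{\lambda(s_1+p_0 s_2) + \sum_{k<j}\gamma_k\epsilon_k\}\epsilon_{j-1} + \gamma_j\epsilon_{j-1}^2\bigr) = O_p(s_1^2\log p/n)$. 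Inequality~\eqref{d_eb} then produces $\epsilon_j = O_p(s_1\sqrt{\log p/n})$, closing the induction and giving the stated rate.

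The main obstacle is the interleaving of sparsity and rate control: the quantity $\tilde b_j$ in Proposition~\ref{prop:1} explicitly depends on $|J_1(\bvhat_k)|$ and $|J_2(\bvhat_k)|$, which are only bounded by $s_1$ and $s_2$ once Proposition~\ref{prop:sparse} has been invoked, while the conditions of Proposition~\ref{prop:sparse} themselves require prior $\epsilon_{k-1}$ estimates obtained from Proposition~\ref{prop:1}. Both pieces must therefore be propagated simultaneously in the induction. The quantitative hypothesis $p_0 s_2 = O(s_1)$ plays the essential balancing role: it absorbs the blockwise group-lasso penalty contribution $\lambda(1-\beta)p_0 s_2$ into the elementwise contribution $\lambda\beta s_1$, preventing the otherwise dominating scale $p_0 s_2\sqrt{\log p/n}$ of Theorem~\ref{thm:rate_block_weak} from resurfacing in the combined setting.
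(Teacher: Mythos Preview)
Your proposal is correct and follows essentially the same approach as the paper: both argue by induction on $j$, interleaving Proposition~\ref{prop:1} for the rate and Proposition~\ref{prop:sparse} (with $r_1=r$, so the cross-support terms in~\eqref{eq:cond3}--\eqref{eq:cond4} vanish) for sparsity control, using $s_2 p_0 = O(s_1)$ to absorb the group-lasso contribution into the elementwise one. The paper distributes this interleaving across the proofs of Theorems~\ref{thm:rate_joint_weak} and~\ref{thm:recovery_II_weak}(i), which cite each other, whereas you have made the joint induction explicit in a single argument.
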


In Theorem \ref{thm:rate_joint_weak}, the assumptions $s_1\sqrt{\log p/n} = o(1)$, $s_2 p_0 =O(s_1)$ imply that the signal eigenvectors are sparse elementwise.
Regarding the leading eigenvector estimate $\bvhat_1$, the rate presented in Theorem \ref{thm:rate_joint_weak} coincides with what has been established in \cite{vu2013minimax} for the sparse estimate for the projector of the leading one-dimensional principal subspace. However, this rate is suboptimal compared to the optimal minimax rate established for sparse eigenvector and sparse principal subspace estimation in the literature, which is $O_p(\sqrt{s_1 \log p/n})$ (\cite{cai2013sparse}; \cite{vu2013minimax}).
It has been shown that there is an underlying trade-off between the statistical and the computational efficiency for sparse PCA methods, and  no known computationally efficient methods are capable of attaining the minimax rate optimal estimators (\citep{vu2013fantope}, \citep{berthet2013computational}). Considering that SIPCA can be solved in polynomial time, the extra factor $\sqrt{s_1}$ might be unavoidable.

\begin{theorem}[Weak signal: sparsity and elementwise false negative control II]\label{thm:recovery_II_weak}
Consider the optimization problem \eqref{eq:model} with $0 <c < \beta \leq 1$, where $c$ is  a fixed constant. 
Suppose the same assumptions and conditions as in Theorem \ref{thm:rate_joint_weak} hold.
\begin{enumerate}[label=(\roman*)]
\item
Then for any sufficiently large $n$, with high probability, $\bvhat_j$ uniquely exists and satisfies $$ J_1(\bvhat_j) \subseteq J_1,  \quad \text{ for }   j=1,\ldots, r.$$
\item
If
$$
\min_{k:\, v_{jk} \neq 0} v_{jk}^2 \gtrapprox \sqrt{ \frac{ \log p}{n}}, \quad \text{ for }  j=1,\ldots, r,
$$
then for any sufficiently large $n$, with high probability, 
$$J_1(\bv_j) \subseteq J_1(\bvhat_j) \quad \text{ for } j=1,\ldots, r.$$
\end{enumerate}
\end{theorem}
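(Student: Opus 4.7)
The plan is to mirror the proof of the strong-signal analogue (Theorem \ref{thm:recovery_II_strong}) by checking that the deterministic sparsity conditions of Proposition \ref{prop:sparse} and the elementwise false-negative criterion of Proposition \ref{supp:element} both hold with high probability in the present regime. The two ingredients I will combine are the concentration bound $\|E\|_{\infty,\infty} = O_p(\sqrt{\log p/n})$ implied by Assumption \ref{as:prob_ieq} (which also yields $\|E\|_{\infty,\infty}^{\ast} = O_p(\sqrt{\log p/n})$ because the equal block weights $w_{k\ell}=p_0$ make the conjugate norm no larger than the entrywise one), and the $\ell_2$ rate $\|\bvhat_k - \bv_k\|_2 = O_p(s_1\sqrt{\log p/n})$ from Theorem \ref{thm:rate_joint_weak}.

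For part (i), I would apply Proposition \ref{prop:sparse} with $r_1 = r$, so that $J_1 = \cup_{j=1}^r J_1(\bv_j)$ exhausts the elementwise support of the signal and hence $(\Sigma_1)_{J_1^c J_1} = 0$ and $(\Sigma_1)^{J_2^c J_2} = 0$. Conditions \eqref{eq:cond3}--\eqref{eq:cond4} then collapse to $\|E\|_{\infty,\infty}^{\ast}\leq \lambda$ and $\|E\|_{\infty,\infty}\leq \beta\lambda$, both of which hold with probability tending to one once the implicit constant in $\lambda \gtrapprox \sqrt{\log p/n}$ is chosen large enough, using $\beta > c$. The nontrivial condition \eqref{eq:cond5}, namely $\gamma_j - \gamma_{j+1} > 4(\lambda\beta s_1 + \lambda(1-\beta)s_2 w_1 + \gamma_1 b_j)$, I would verify by induction on $j$. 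At $j=1$, $\epsilon_0 = 0$ forces $b_1 = O(\lambda s_1 + \lambda s_2 p_0) = O(s_1\sqrt{\log p/n}) = o(1)$ using $s_1\sqrt{\log p/n} = o(1)$ and $s_2 p_0 = O(s_1)$, while $\gamma_1 - \gamma_2$ is a fixed positive constant by Assumptions \ref{as:eta0} and \ref{as:sigma1_weak}, so \eqref{eq:cond5} holds. For the inductive step, once sparsity at levels $k<j$ is established, Theorem \ref{thm:rate_joint_weak} delivers $\epsilon_k = O_p(s_1\sqrt{\log p/n})$, making the extra $\gamma_k \epsilon_{k-1}$ and $\gamma_1 \epsilon_{k-1}^2$ terms appearing in $b_j$ negligible, so \eqref{eq:cond5} is preserved at level $j$. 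Proposition \ref{prop:sparse} then yields the unique existence of $\bvhat_j$ and $J_1(\bvhat_j)\subseteq J_1$, closing the induction.

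For part (ii), Proposition \ref{supp:element} reduces the claim to verifying $\min_{i:\,v_{ji}\neq 0} v_{ji}^2 > 2\|\hat{H}_j - \bv_j\bv_j^{\intercal}\|_2$ with high probability for every $j$. The proof of Theorem \ref{thm:rate_joint_weak} controls $\|\hat{H}_j - \bv_j\bv_j^{\intercal}\|_2$ at the same order $O_p(s_1\sqrt{\log p/n})$ via Proposition \ref{prop:1} and the sparsity already obtained in part (i); combining this with the minimum-signal assumption $\min v_{ji}^2 \gtrapprox \sqrt{\log p/n}$ (read in the paper's convention, with the implicit constant sufficiently large under the regime $s_1\sqrt{\log p/n} = o(1)$) yields the strict inequality with probability tending to one, and hence $J_1(\bv_j)\subseteq J_1(\bvhat_j)$.

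The main obstacle is the inductive interplay between sparsity and the $\ell_2$ rate: each invocation of the rate at level $k$ presumes $|J_1(\bvhat_k)|\leq s_1$, while each invocation of Proposition \ref{prop:sparse} at level $j$ requires the accumulated deflation error $\sum_{k<j}\gamma_k \epsilon_k$ to be $o(1)$ so that the sparsity-controlling condition \eqref{eq:cond5} survives. The induction must therefore be staged so that sparsity at level $j$ is established strictly before the rate at level $j$ is used to feed the hypotheses at level $j{+}1$; this is the delicate bookkeeping step that replaces the cleaner argument available in the strong-signal setting.
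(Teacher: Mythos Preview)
Your proposal is correct and follows essentially the same route as the paper: part (i) via Proposition \ref{prop:sparse} with $r_1=r$ (so the $\Sigma_1$ terms in \eqref{eq:cond3}--\eqref{eq:cond4} vanish) together with the induction showing $b_j=O(s_1\lambda)=o(1)$, and part (ii) via Proposition \ref{supp:element} using the $\|\Hhat_j-\bv_j\bv_j^{\t}\|_2$ bound from the proof of Theorem \ref{thm:rate_joint_weak}. Your explicit discussion of the mutual induction between the sparsity control here and the rate in Theorem \ref{thm:rate_joint_weak} is exactly the bookkeeping the paper leaves implicit.
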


\section{Simulation}\label{simulation}
In this section, we investigate the performance of SIPCA through simulation studies. We also apply BIDIFAC+ \cite{lock2020bidimensional}, SLIDE \cite{gaynanova2019structural},
JIVE \cite{lock2013joint},
iPCA \cite{tang2021integrated},
PMD \cite{witten2009penalized}
and standard sample PCA to the simulated data for comparison.

\subsection{Simulation settings} \label{sec:sim_setup}
We generate zero-mean multivariate normal data using the spiked covariance model in \eqref{spiked:general} with the noise covariance in \eqref{heter_noise}.
We consider $I = 20$ blocks and each block contains $p_0 = 50$ elements so that the total dimension $p=1000$.
For the signal covariance, $\Sigma_1=\sum_{j=1}^{3}\lambda_j \bv_j \bv_j^{\intercal}$ and 
 the eigenvectors, i.e., $\bv_j, 1\leq j \leq 3$, have two levels of sparsity. At the block level,
\begin{equation*}
    J_2(\bv_1)=\left\{1, 2, 3\right\},  J_2(\bv_2)=\left\{4, 5, 6\right\},
    J_2(\bv_3)=\left\{7, 8\right\},
\end{equation*}
which means that the first three blocks of $\bv_1$ are nonzero and all others are zero blocks. Similarly, the blockwise supports for $\bv_2$ and $\bv_3$ are $\left\{4, 5, 6\right\}$ and $\left\{7, 8\right\}$ respectively. At the element level, each nonzero block in $\bv_{j}$ have $\gamma \times 100\%$ nonzero elements.
To introduce heteroskedasticity into the data,
we follow \cite{zhang2022aos} to determine $\sigma_i^2, 1 \leq i \leq I$ in the noise covariance.  We first generate $I$ random variables from the standard uniform distribution:
$ u_k\stackrel{\text{i.i.d}}{\sim} \text{Unif}[0,1], 1\leq k \leq I
$. Then we set $\sigma_i^2 =\sigma^2Iu_i^{\alpha}/(\sum_{k=1}^{I}u_k^{\alpha})$, where $\sigma^2=I^{-1}\sum_{i=1}^{I}\sigma_i^2$ and can be regarded as the average noise level. The parameter $\alpha$ delineates the degree of heteroskedasticity, with a larger $\alpha$ indicating more heteroskedastic noise. In particular, $\alpha=0$ corresponds to homoskedastic noise. In the simulation, we vary $\alpha$ over $\left\{0, 2, 5, 8, 10, 13, 15\right\}$. Moreover, we generate data with both weak signal and strong signal. For the former, $(\lambda_1, \lambda_2, \lambda_3) = (40, 20, 10)$ and for the latter $(\lambda_1, \lambda_2, \lambda_3) = (400, 200, 100)$. Finally, the sample size $n$ varies over $\left\{200, 400\right\}$, and we fix both $\sigma^2$ and $\gamma$ at 0.5.

\subsection{Evaluation criteria}
The $\ell_2$ error of the subspace estimators, i.e., $\|VV^{\t} - \hat{V}\hat{V}^{\t} \|_2/\| VV^{\t}\|_2$ is used to measure the estimation performance for different methods. For SIPCA, PMD and sample PCA, the top three eigenvector estimates are concatenated to form the signal principal subspace estimator $\hat{V}=[\bvhat_1, \bvhat_2, \bvhat_3]$. The algorithms for JIVE, SLIDE and BIDIFAC+ produce an estimation of the signal matrix, of which the top three right singular vectors are extracted to form $\hat{V}$. For iPCA, the loading subspace is estimated by performing SVD on the product of the common score matrix given by the iPCA algorithm and the concatenated observational data matrix.

We also evaluate the performance of SIPCA in terms of elementwise and blockwise support recovery.
For each eigenvector estimate, we calculate the corresponding sensitivity (true positive rate) and specificity (true negative rate) at both the element level and block level.




\subsection{Simulation results}
The simulation results on $\ell_2$ error are presented in Fig \ref{fig:sintheta}. JIVE estimators have a substantially larger error than other estimators across all scenarios. Further investigations show that the large estimation error given by JIVE for the top three principal subspace is mostly contributed by the eigen-space formed by the third eigenvector and JIVE seems to have trouble in identifying the third eigenvector, i.e., the eigenvector corresponding to the smallest eigenvalue of the signal covariance matrix. In the scenario where the signal is weak ($\lambda_1=40, \lambda_2=20, \lambda_3=10$) and the sample size is small ($n=200$), SIPCA, SLIDE, BIDIFAC+ and PMD perform comparably well when the noise is homoskedastic, i.e., $\alpha =0$, while iPCA and Sample PCA yield relatively worse estimation errors, which are both around 0.50.

As the noise becomes more heteroskedastic across different data views, i.e., as $\alpha$ increases, the median estimation errors given by SLIDE, BIDIFAC+, PMD, sample PCA and iPCA increase at a much faster pace than what is given by SIPCA. The advantage of SIPCA is even more pronounced when $n=400$, where SIPCA consistently yields the smallest median error and median abslolute deviation. In addition, increasing the heteroskedasticity in the simulated data does not seem to affect the performance of SIPCA. By contrast, there's a gradual and substantial decline in the estimation accuracy for other methods as $\alpha$ increases. 
In the scenario where the signal is strong ($\lambda_1=400, \lambda_2=200, \lambda_3=100$), SIPCA, SLIDE and BIDIFAC+ perform comparably well and have smallest $\ell_2$ distance among all methods. PMD gives slightly larger estimation error than the above three methods, followed by iPCA and sample PCA, which performed similarly and reasonably well regardless of the sample size and the level of heteroskedasticity when the signal is strong. 

\begin{figure}
    \centering
\subfloat[Weak signal \label{fig:sintheta_weak}]{   \includegraphics[width=15cm, height=9cm]{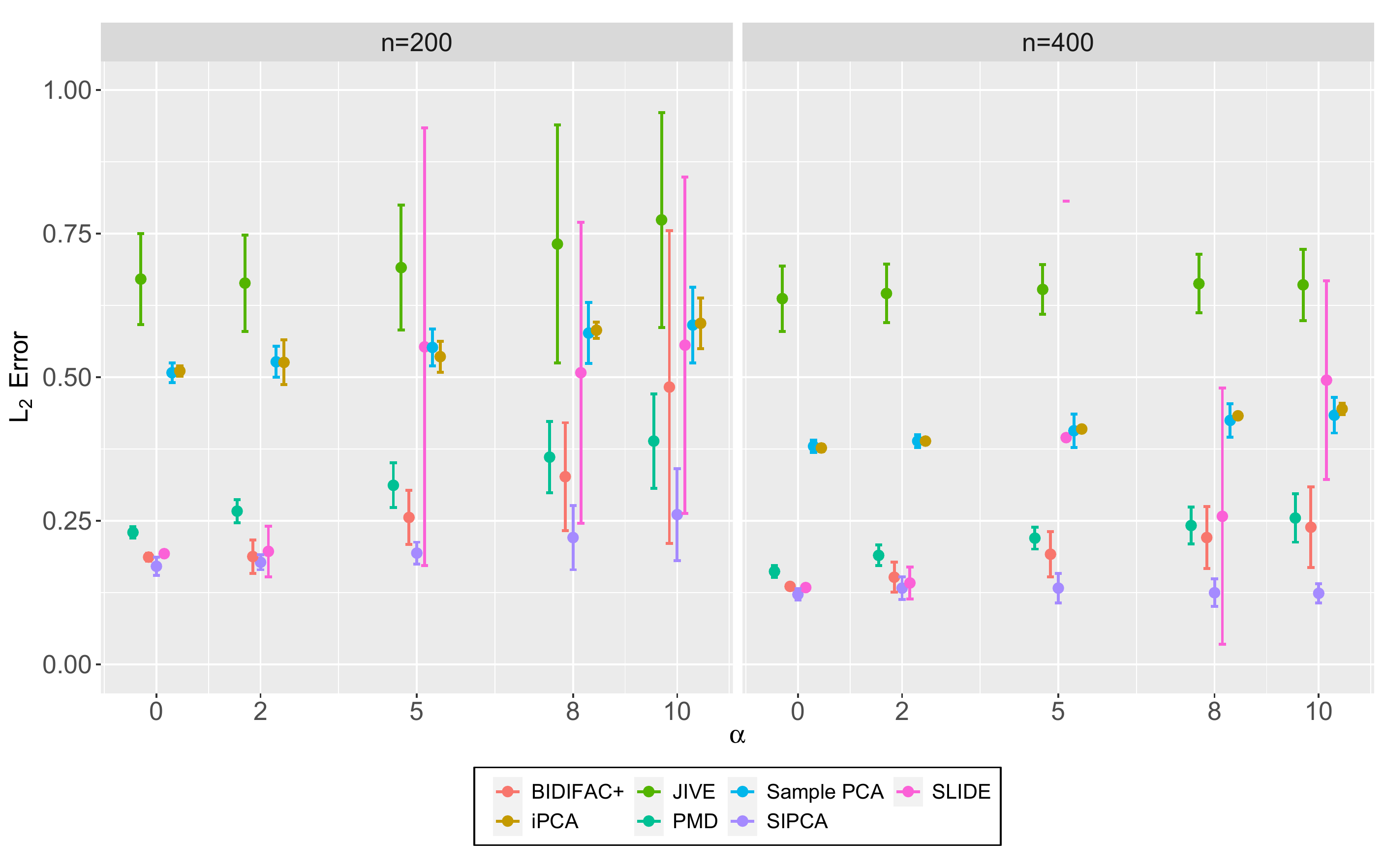}}

\subfloat[Strong signal \label{fig:sintheta_strong}]{    \includegraphics[width=15cm, height=9cm]{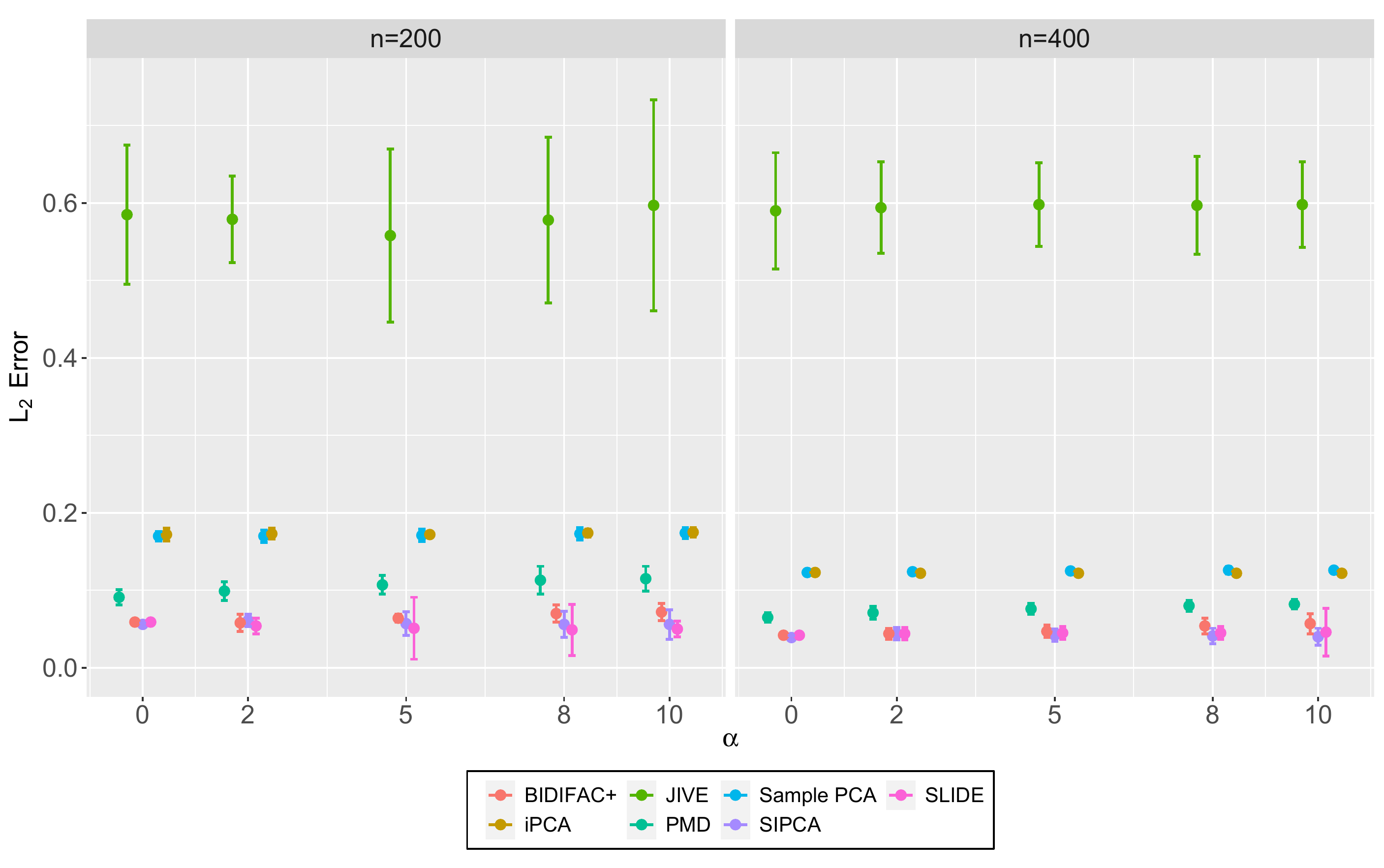}}
\caption{Median eigen subspace estimation error, i.e., $\|VV^{\t} - \hat{V}\hat{V}^{\t} \|_2/\| VV^{\t}\|_2$, over 50 replications for various methods on simulated data with different heteroskedastic noise level $\alpha$. The error bars indicate the corresponding median absolute deviation. Panel (a) and panel (b) present the results for the scenario with weak or strong signal respectively. Note that the x coordinate is jittered for better illustration. }
\label{fig:sintheta}
\end{figure}

The support recovery performance given by SIPCA for each individual eigenvector estimate in the setting with either weak or strong signal are presented In Table \ref{tb:sparsity_w} and Table \ref{tb:sparsity_s}. When the signal  is weak, increasing the level of heteroskedasticity in the noise tends to lead to lower true positive or true negative rate, especially for higher order eigenvectors. An increase in the sample size can significantly boost the performance. In the scenario with strong signal, the resulting sensitivity and specificity at both the block and the element level are greater than 0.9 across all simulation settings.

\renewcommand{\arraystretch}{2}
\begin{table}[H]
 \centering
  \fontsize{8}{6}\selectfont
  \begin{threeparttable}
  \caption{Median blockwise and elementwise sensitivity and specificity of eigenvector $\bvhat_j$ given by SIPCA in the simulation setting with weak signal. }
 \label{tb:sparsity_w}
    \begin{tabular}{ccccccccccccccccccccccc}
    \toprule
    \multicolumn{2}{c}{\multirow{3}{*}{$n=200$}} &
    \multicolumn{3}{c}{$\alpha=0$ }&\multicolumn{3}{c}{$\alpha=8$}&\multicolumn{3}{c}{$\alpha=15$}\cr
    \cmidrule(lr){3-5} \cmidrule(lr){6-8} \cmidrule(lr){9-11}
    & &$\bvhat_1$&$\bvhat_2$&$\bvhat_3$&$\bvhat_1$&$\bvhat_2$&$\bvhat_3$&$\bvhat_1$&$\bvhat_2$&$\bvhat_3$\cr
    \midrule
 \multirow{2}{*}{Block} &Sensitivity&1.00 &1.00 &1.00 &1.00 &1.00 &1.00 &1.00 &1.00 &1.00 \\
 &Specficity  &1.00 &1.00 &0.75 &1.00 &1.00 &0.83 &0.94 &0.82 &0.67\cr
\multirow{2}{*}{Element} &Sensitivity &0.95 &0.95 &0.99 &0.97 &0.97 &0.93 &0.99 &0.99 &0.95\\
&Specificity &0.95 &0.95 &0.79 &0.96 &0.95 &0.83 &0.92 &0.85 &0.79\cr
  \midrule
    \multicolumn{2}{c}{\multirow{3}{*}{$n=400$}} &
    \multicolumn{3}{c}{$\alpha=0$ }&\multicolumn{3}{c}{$\alpha=8$}&\multicolumn{3}{c}{$\alpha=15$}\cr
    \cmidrule(lr){3-5} \cmidrule(lr){6-8} \cmidrule(lr){9-11}

& &$\bvhat_1$&$\bvhat_2$&$\bvhat_3$&$\bvhat_1$&$\bvhat_2$&$\bvhat_3$&$\bvhat_1$&$\bvhat_2$&$\bvhat_3$\cr
\midrule
\multirow{2}{*}{Block}
&Sensitivity &1.00 &1.00 &1.00 &1.00 &1.00 &1.00 &1.00 &1.00 &1.00\\
&Specficity &1.00 &1.00 &0.94 &0.97 &1.00 &0.89 &0.94 &0.94 &0.83\cr
\multirow{2}{*}{Element} &Specificity &0.97 &0.97 &0.99 &0.99 &1.00 &0.98 &1.00 &1.00 &1.00\\
&Specificity&0.96 &0.95 &0.91 &0.96 &0.96 &0.87 &0.92 &0.92 &0.82\cr
    \bottomrule
    \end{tabular}
    \end{threeparttable}
\end{table}

\renewcommand{\arraystretch}{2}
\begin{table}[H]
 \centering
  \fontsize{8}{6}\selectfont
  \begin{threeparttable}
  \caption{ Median blockwise and elementwise sensitivity and specificity of eigenvector $\bvhat_j$ given by SIPCA in the simulation setting with strong signal.  }
 \label{tb:sparsity_s}
    \begin{tabular}{ccccccccccccccccccccccc}
    \toprule
    \multicolumn{2}{c}{\multirow{3}{*}{$n=200$}} &
    \multicolumn{3}{c}{$\alpha=0$ }&\multicolumn{3}{c}{$\alpha=8$}&\multicolumn{3}{c}{$\alpha=15$}\cr
    \cmidrule(lr){3-5} \cmidrule(lr){6-8} \cmidrule(lr){9-11}
    & &$\bvhat_1$&$\bvhat_2$&$\bvhat_3$&$\bvhat_1$&$\bvhat_2$&$\bvhat_3$&$\bvhat_1$&$\bvhat_2$&$\bvhat_3$\cr
    \midrule
    \multirow{2}{*}{Block} &Sensitivity &1.00 &1.00 &1.00 &1.00 &1.00 &1.00 &1.00 &1.00 &1.00\\
&Specficity &1.00 &1.00 &1.00 &1.00 &1.00 &1.00 &0.94 &1.00 &0.97\cr
    \multirow{2}{*}{Element} &Sensitivity 
 &1.00 &0.99 &0.98 &0.99 &1.00 &0.98 &0.99 &1.00 &0.98\\
 &Specificity &0.94 &0.96 &0.97 &0.96 &0.96 &0.97 &0.94 &0.96 &0.97\cr
  \midrule
\multicolumn{2}{c}{\multirow{3}{*}{$n=400$}} &
    \multicolumn{3}{c}{$\alpha=0$ }&\multicolumn{3}{c}{$\alpha=8$}&\multicolumn{3}{c}{$\alpha=15$}\cr
    \cmidrule(lr){3-5} \cmidrule(lr){6-8} \cmidrule(lr){9-11}
    & &$\bvhat_1$&$\bvhat_2$&$\bvhat_3$&$\bvhat_1$&$\bvhat_2$&$\bvhat_3$&$\bvhat_1$&$\bvhat_2$&$\bvhat_3$\cr
\midrule
    \multirow{2}{*}{Block} &Sensitivity &1.00 &1.00 &1.00 &1.00 &1.00 &1.00 &1.00 &1.00 &1.00\\
&Specficity &1.00 &1.00 &1.00 &1.00 &1.00 &1.00 &1.00 &1.00 &0.97\cr
    \multirow{2}{*}{Element} &Sensitivity  &1.00 &1.00 &0.99 &0.99 &0.99 &0.99 &1.00 &1.00 &1.00\\
  &Specificity &0.96 &0.94 &0.94 &0.98 &0.96 &0.96 &0.98 &0.97 &0.97\cr
    \bottomrule
    \end{tabular}
    \end{threeparttable}
\end{table}

\section{Application to TCGA data}\label{application}
We apply SIPCA to multiview genomic data from The Cancer Genome Atlas (TCGA). In this dataset, four data views were generated on a common set of 348 tumor samples \citep{lock2013bayesian}:
\begin{itemize}
\item[(1)] RNA gene expression (GE) data for 645 genes,
\item[(2)] DNA methylation (ME) data for 574 probes,
\item[(3)] miRNA expression (miRNA) data for 423 miRNAs,
\item[(4)] Reverse phase protein array (RPPA) data for 171 proteins.
\end{itemize}
 These four data views were measured by different technologies on distinct platforms, but are still expected to be biologically correlated since they are drawn from the same set of samples. The raw data is publicly available on \url{https://www.cancer.gov/about-nci/organization/ccg/research/structural-genomics/tcga} and we use the preprocessed  dataset available in \url{https://github.com/ttriche/bayesCC/tree/master/data}. Among the 348 tumor samples, there are five breast cancer subtypes: 66 Basal-like samples, 42 
HER2-enriched samples, 154 Luminal A samples, 81 Luminal B samples and 5 normal-like samples. In this analysis, the normal-like samples are excluded due to the small size. 
We first scale each data matrix such that its Frobenius norm equals to the sample size (343) and then apply SIPCA, SLIDE, BIDIFAC+, iPCA, PMD and sample PCA to the concatenated data matrix. The rank estimate for the signal matrix given by SLIDE is 53, hence for fair comparison, the top 53 eigenvectors and the associated scores are extracted for SIPCA, iPCA sample PCA and PMD. For SIPCA, only blockwise penalty is implemented to induce sparsity structure at the view level, since the optimal $\beta$ by cross validation for most eigenvectors turned out to be $0$, which implies that the elementwise penalty might not add much additional value in this application. Similar to what was done in the simulation analysis for matrix decomposition based method, the scores used in downstream analysis are obtained by performing SVD on the signal matrix estimation for SLIDE, BIDIFAC+ and JIVE. 

We correlate the overall survival data for the 343 patients with tumors and the estimated scores through the cox proportional hazard model for each method. The table below displays the number of statistically significant scores identified in the modelling and we also report the concordance statistic which reflects the agreement between the observed survival time and the predicted survival time. Results show that most of the scores obtained by PMD or BIDIFAC+ do not have a strong statistical association with the survival outcome. On the contrary, JIVE, SIPCA, SLIDE, Sample PCA, and iPCA give a good number of scores ($\geq$ 8) that have statistically significant contribution to the survival prediction and the predicted survival data are highly aligned with the observed data (concordance $\geq$ 0.9).

\renewcommand{\arraystretch}{2}
\begin{table}[H]
\centering
\fontsize{10}{6}\selectfont
\caption{Table shows the number of statistically significant scores (confidence level being 0.01) and the concordance statistic for the model fitting with standard error in the parenthesis. } 
\label{tb:cluster_rand}
\begin{tabular}{ccccc}
   \toprule
 Method & Number of significant scores  & Concordance    \\
   \midrule
   SIPCA &9 & 0.914 (0.025) \\
   SLIDE &12 & 0.918 (0.019)   \\
   Sample PCA &14 &0.907 (0.029)  \\
   BIDIFAC+ &6 &0.861 (0.040)  \\
   JIVE&8 &0.906 (0.025)  \\
   PMD &2 &0.852 (0.048) \\
   iPCA &14 &0.916 (0.034) \\
   \bottomrule
\end{tabular}

\end{table}

\section{Discussion}\label{discussion}
This work is motivated by the perceived need for a general and flexible approach with theoretical guarantee to modeling multiview data with potential heteroskedastic noise. We propose a novel sparse PCA method, SIPCA, which allows for flexible control on the sparsity structure of the resulting eigenvectors. Specifically, two levels of sparsity can be induced: blockwise sparsity reflecting the association structures between different data views and elementwise sparsity indicating important variables within each data view. The proposed method can be applied to sparse PCA problems where the noise is either homoskedastic or heteroskedastic. In the scenario where the noise levels for different data views are not identical, SIPCA employs a denoising step to obtain an estimate for the signal covariance, based on which the eigenvector estimates with desired sparsity are extracted. We demonstrate the superiority of SIPCA in a rich set of simulation settings including low to high degrees of noise heteroskedasticity and weak or strong signal strength. Compared to other competitors, SIPCA displays robust performance across all scenarios in terms of the estimation error for the signal principal subspace and the sparsity structure recovery for each individual eigenvector. Theoretical properties of the eigenvector estimate have been established from two perspectives: $\ell_2$ consistency and support recovery. Specifically, we show that the proposed eigenvector estimate achieves an $\ell_2$ rate comparable to what has been established in some related work (\cite{vu2013fantope}, \cite{vu2013minimax}).

We describe one major limitation of our method. As we mentioned before, it is natural for us to solve \eqref{eq:model} using ADMM which is a typical polynomial-time algorithm for constrained convex problems. However, the computational efficiency of ADMM has been shown to be low in our simulations despite the fact that it guarantees convergence to the global optima. Even if we implement an accelerated version of the vanilla ADMM \citep{xu2017no} in practice, there is still much room for improvement. This computational issue primarily relates to the full eigendecomposition of a $p\times p$ matrix per iteration, which requires high computational cost when the dimensionality is high. Recently, \cite{qiu2019gradient} propose a new gradient-based and projection-free algorithm with a focus on solving Fantope-based formulation for principal subspace estimation \citep{vu2013fantope}. Their empirical study indicates that the proposed algorithm takes much less time to achieve an estimate with certain level of accuracy compared with the original ADMM. While this algorithm cannot be directly applied to address our objective function \eqref{eq:model} since it is specifically designed for the optimization task proposed in \cite{vu2013fantope}, it is still of interest to adapt this new algorithm to accommodate different types of penalties.

\newpage 
\appendix
\section{Appendix: technical proofs}
\subsection{Proofs for Section \ref{sec:general}}

\begin{proof}[Proof of Proposition \ref{prop:1}]
Let $\Sigma_j=\sum_{k=j}^{r}\gamma_k \bv_k\bv_k^{\intercal}$. Then, $\bv_j$ is the leading eigenvector of $\Sigma_j$.
By definition, $\hat{H}_j\in \mathcal{F}_{\hat{\Pi}_{j-1}}\subset\mathcal{F}^1$.
Let $\delta_j = \gamma_j - \gamma_{j+1}$. 
Then $\| \hat{H}_j-\bv_j\bv_j^{\intercal} \|_2$ can be bounded via Lemma \ref{lem:curv}:
\begin{align}
\frac{\delta_j}{2}\lVert\hat{H}_j -  \bv_j\bv_j^{\intercal}    \rVert_2^2\leq - \langle \Sigma_j, \hat{H}_j  -\bv_j\bv_j^{\intercal}
\rangle.\label{eq:deltaj}
\end{align}

Because $\bv_j\bv_j^{\t}\notin \F_{\hat{\Pi}_{j-1}}$,
$\bv_j\bv_j^{\t}$ is not a feasible solution for the optimization problem,  we define, for $j=1,\ldots, r$,
\begin{align*}
    \tilde{\bv}_j=\frac{(I-\hat{\Pi}_{j-1})\bv_j}{\lVert I-\hat{\Pi}_{j-1})\bv_j \rVert_2}.
\end{align*}
By Lemma \ref{lem:vtilde}, $\tilde{\bv}_j\tilde{\bv}_j^{\t}\in \F_{\hat{\Pi}_{j-1}}$ and is thus feasible. Note that
$\tilde{\bv}_j$ can be close to $\bv_j$ and its
properties  are established in Lemma \ref{lem:vtilde}.
We now rewrite \eqref{eq:deltaj} as 
\begin{align}
\frac{\delta_j}{2}\lVert   \hat{H}_j- \bv_j\bv_j^{\intercal} \rVert_2^2
\leq T_1+T_2,\label{eq:deltaj2}
\end{align}
where 
$T_1=\langle -\Sigma_j,\tilde{\bv}_j\tilde{\bv}_j^{\intercal}-\bv_j\bv_j^{\intercal}\rangle$ and
$T_2=\langle -\Sigma_j, \hat{H}_j- \tilde{\bv}_j\tilde{\bv}_j^{\intercal} \rangle$.

We first consider $T_1$.
Note that $\Sigma_j = \Sigma_{j+1} + \gamma_j \bv_j \bv_j^{\intercal}$ and $\langle \Sigma_{j+1}, \bv_j \bv_j^{\intercal}\rangle = 0$.
Thus,
\begin{align*}
T_1&=\langle -\Sigma_j,\tilde{\bv}_j\tilde{\bv}_j^{\intercal}-\bv_j\bv_j^{\intercal} \rangle \\
& =\langle -\Sigma_{j+1},\tilde{\bv}_j\tilde{\bv}_j^{\intercal}-\bv_j\bv_j^{\intercal} \rangle 
+ \langle -\gamma_j \bv_j \bv_j^{\intercal},\tilde{\bv}_j\tilde{\bv}_j^{\intercal}-\bv_j\bv_j^{\intercal} \rangle\\
&=\langle -\Sigma_{j+1},\tilde{\bv}_j\tilde{\bv}_j^{\intercal} \rangle 
+ \langle -\gamma_j \bv_j \bv_j^{\intercal},\tilde{\bv}_j\tilde{\bv}_j^{\intercal}-\bv_j\bv_j^{\intercal} \rangle\\
&\leq \langle -\gamma_j \bv_j \bv_j^{\intercal},\tilde{\bv}_j\tilde{\bv}_j^{\intercal}-\bv_j\bv_j^{\intercal} \rangle\\
&=\gamma_j \{1- (\bv_j^{\intercal}\tilde{\bv}_j)^2\}\\
&=\gamma_j\{1+(\bv_j^{\t}\tilde{\bv}_j)\}\{1-(\bv_j^{\t}\tilde{\bv}_j)\}\\
&\leq 2\gamma_j  \{1 - (\bv_j^{\intercal}\tilde{\bv}_j)\}\\
&\leq \gamma_j \|\bv_j - \tilde{\bv}_j\|_2^2.
\end{align*}
So we have derived that
\begin{equation}
    \label{eq:t1}
    T_1\leq \gamma_j \|\bv_j - \tilde{\bv}_j\|_2^2.
\end{equation}

We now consider $T_2$ and decompose $T_2$ into three terms:
\begin{align*}
   T_2&=\langle  S-\Sigma_1, \hat{H}_j-\tilde{\bv}_j\tilde{\bv}_j^{\intercal}    \rangle+\langle  \Sigma_1-\Sigma_j,\hat{H}_j-\tilde{\bv}_j\tilde{\bv}_j^{\intercal}      \rangle-\langle S,\hat{H}_j-\tilde{\bv}_j\tilde{\bv}_j^{\intercal}    \rangle.
\end{align*}
Let $T_3=\langle  S-\Sigma_1, \hat{H}_j-\tilde{\bv}_j\tilde{\bv}_j^{\intercal}    \rangle$, $T_4=\langle  \Sigma_1-\Sigma_j,\hat{H}_j-\tilde{\bv}_j\tilde{\bv}_j^{\intercal}      \rangle$ and  $T_5=\langle S,\hat{H}_j-\tilde{\bv}_j\tilde{\bv}_j^{\intercal}    \rangle $.
Then
\begin{equation}
    \label{eq:t2}
    T_2 = T_3 + T_4 - T_5,
\end{equation}
and we shall consider each of the above terms  separately.

First, we  analyze $T_4$. Note that
$\Sigma_1 - \Sigma_j = \sum_{k=1}^{j-1}\gamma_k \bv_k \bv_k^{\intercal}$.
By definition, for $1\leq k \leq j-1$,
$\langle \hat{\bv}_k\hat{\bv}_k^{\intercal}, \tilde{\bv}_j \tilde{\bv}_j^{\intercal}\rangle = 0$
and $\langle \hat{\bv}_k\hat{\bv}_k^{\intercal}, \hat{H}_j \rangle = 0$. Thus,
$\langle 
\Sigma_1 - \Sigma_j,
\hat{H}_j - \tilde{\bv}_j \tilde{\bv}_j^{\intercal}
\rangle = 0.
$
It follows that
\begin{align*}
T_4&=\langle \Sigma_1-\Sigma_j,\hat{H}_j-\tilde{\bv}_j\tilde{\bv}_j^{\intercal}  \rangle \nonumber\\
&=\langle  \sum_{k=1}^{j-1}\gamma_k\bv_k\bv_k^{\intercal}, \hat{H}_j-\tilde{\bv}_j\tilde{\bv}_j^{\intercal}  \rangle \nonumber\\
& = \langle  \sum_{k=1}^{j-1}\gamma_k (\bv_k \bv_k^{\intercal} - \hat{\bv}_k \hat{\bv}_k^{\intercal}), \hat{H}_j-\tilde{\bv}_j\tilde{\bv}_j^{\intercal}  \rangle \nonumber\\
&\leq 
 \left(\sum_{k=1}^{j-1}\gamma_k
\|\bv_k \bv_k^{\intercal} - \hat{\bv}_k \hat{\bv}_k^{\intercal}\|_2
\right)\left(
\|\hat{H}_j-\bv_j \bv_j^{\intercal}\|_2
+\|\bv_j \bv_j^{\intercal}-
\tilde{\bv}_j\tilde{\bv}_j^{\intercal} \|_2
\right).
\end{align*}
By Lemma \ref{lem:Hv}, 
$\|\bv_k\bv_k^{\intercal} - \hat{\bv}_k \hat{\bv}_k^{\intercal}\|_2\leq \sqrt{2} \|\bv_k - \hat{\bv}_k\|_2$
and $\|\bv_j \bv_j^{\intercal}-
\tilde{\bv}_j\tilde{\bv}_j^{\intercal} \|_2\leq \sqrt{2}\|\bv_j - \tilde{\bv}_j\|_2$.
Hence,
\begin{equation}
    \label{eq:t4}
    T_4\leq 
   \sqrt{2}  \left(\sum_{k=1}^{j-1}\gamma_k
\|\bv_k - \hat{\bv}_k \|_2
\right)\left(
\|\hat{H}_j-\bv_j \bv_j^{\intercal}\|_2
+\sqrt{2}\|\bv_j -
\tilde{\bv}_j \|_2
\right).
\end{equation}

Next, we analyze $T_5$. Because $\hat{H}_j$ is the optimal solution and $\tilde{\bv}_j\tilde{\bv}_j^{\intercal}$ is feasible, 
\begin{align*}
&\langle S,\hat{H}_j  \rangle-\lambda \beta \lVert \hat{H}_j\rVert_{1,1}-\lambda (1-\beta) \|\hat{H}_j  \|_{1,1}^{\ast}\geq \langle S,\tilde{\bv}_j\tilde{\bv}_j^{\intercal} \rangle-\lambda \beta \| \tilde{\bv}_j\tilde{\bv}_j^{\intercal}\|_{1,1}-\lambda (1-\beta)\| \tilde{\bv}_j\tilde{\bv}_j^{\intercal}\|_{1,1}^{\ast}.
\end{align*}
Hence,
\begin{align}
  -T_5=-\langle S,\hat{H}_j-\tilde{\bv}_j\tilde{\bv}_j^{\intercal}    \rangle \leq  \lambda \beta ( -\lVert \hat{H}_j \rVert_{1,1}+\lVert \tilde{\bv}_j \tilde{\bv}_j^{\t}\rVert_{1,1})+
    \lambda (1-\beta)(-\|\hat{H}_j  \|_{1,1}^{\ast}+\|\tilde{\bv}_j\tilde{\bv}_j^{\t} \|_{1,1}^{\ast}).
    \label{eq:t5}
\end{align}

We now consider $T_3$. 
First,
\begin{align*}
T_3&=\langle  S-\Sigma_1, \hat{H}_j-\tilde{\bv}_j\tilde{\bv}_j^{\intercal}    \rangle
\leq  \lVert S-\Sigma_1  \rVert_{\infty,\infty} \lVert \hat{H}_j-\tilde{\bv}_j\tilde{\bv}_j^{\intercal}   \rVert_{1,1}.
\end{align*}
Also,
\begin{align*}
T_3&=\langle  S-\Sigma_1, \hat{H}_j-\tilde{\bv}_j\tilde{\bv}_j^{\intercal}    \rangle\\
&=\sum_{k=1}^{I}\sum_{\ell=1}^{I}\langle \left(S-\Sigma_1\right)^{k\ell},(\hat{H}_j-\tilde{\bv}_j\tilde{\bv}_j^{\t})^{k\ell} \rangle\\
&\leq \sum_{k=1}^{I}\sum_{\ell=1}^{I} \frac{\|\left(S-\Sigma_1\right)^{k\ell} \|_2}{w_{k\ell}}\left(w_{k\ell}\left\| \left(\hat{H}_j-\tilde{\bv}_j\tilde{\bv}_j^{\intercal}\right)^{k\ell} \right\|_2\right)\\
&\leq \left(\max_{k,\ell}\frac{\|\left(S-\Sigma_1\right)^{k\ell} \|_2}{w_{k\ell}}\right)\sum_{k=1}^{I}\sum_{\ell=1}^{I}\left(w_{k\ell}\left\| \left(\hat{H}_j-\tilde{\bv}_j\tilde{\bv}_j^{\intercal}\right)^{k\ell} \right\|_2\right)\\
&\leq \|S-\Sigma_1\|_{\infty,\infty}^{\ast} \|\hat{H}_j-\tilde{\bv}_j\tilde{\bv}_j^{\t} \|_{1,1}^{\ast}.
\end{align*}
Thus, if condition \eqref{eq:cond6}  holds, 
\begin{align}
    T_3\leq \lambda \beta \lVert \hat{H}_j-\tilde{\bv}_j\tilde{\bv}_j^{\intercal}  \rVert_{1,1}+ \lambda (1-\beta) \|\hat{H}_j-\tilde{\bv}_j\tilde{\bv}_j^{\t} \|_{1,1}^{\ast}. \label{eq:t3}
\end{align}

Combining \eqref{eq:t5} and \eqref{eq:t3}, and by the triangle inequality, we have
\begin{align*}
    T_3-T_5&\leq \lambda \beta \left(\lVert \hat{H}_j-\tilde{\bv}_j\tilde{\bv}_j^{\t}\rVert_{1,1}+\lVert \tilde{\bv}_j \tilde{\bv}_j^{\t}\rVert_{1,1}-\lVert \hat{H}_j \rVert_{1,1} \right)+\lambda (1-\beta)(\|\hat{H}_j-\tilde{\bv}_j\tilde{\bv}_j^{\t} \|_{1,1}^{\ast}+\|\tilde{\bv}_j\tilde{\bv}_j^{\t}\|_{1,1}^{\ast}-\|\hat{H}_j  \|_{1,1}^{\ast})\nonumber\\
    &\leq \lambda \beta \left(\lVert \hat{H}_j-\bv_j\bv_j^{\intercal}\rVert_{1,1}+\lVert \bv_j \bv_j^{\t}\rVert_{1,1}-\lVert \hat{H}_j \rVert_{1,1} \right)+\lambda (1-\beta)(\|\hat{H}_j-\bv_j\bv_j^{\t} \|_{1,1}^{\ast}+\|\bv_j\bv_j^{\t}\|_{1,1}^{\ast}-\|\hat{H}_j  \|_{1,1}^{\ast})\nonumber\\
    &+2\lambda \beta \lVert \tilde{\bv}_j\tilde{\bv}_j^{\intercal}-\bv_j\bv_j^{\intercal} \rVert_{1,1}+2\lambda (1-\beta) \| \tilde{\bv}_j\tilde{\bv}_j^{\intercal}-\bv_j\bv_j^{\intercal} \|_{1,1}^{\ast}.
\end{align*}
Let 
\begin{align*}
   &T_6= \lVert \hat{H}_j-\bv_j\bv_j^{\intercal}\rVert_{1,1}+\lVert \bv_j \bv_j^{\intercal}\rVert_{1,1}-\lVert \hat{H}_j \rVert_{1,1},\\
   &T_7=\|\hat{H}_j-\bv_j\bv_j^{\t} \|_{1,1}^{\ast}+\|\bv_j\bv_j^{\intercal}\|_{1,1}^{\ast}-\|\hat{H}_j  \|_{1,1}^{\ast}.
\end{align*}
Then,
\begin{equation}
\label{eq:t3_5}
        T_3 - T_5 \leq \lambda \beta T_6+\lambda (1-\beta) T_7 +2\lambda \beta \lVert \tilde{\bv}_j\tilde{\bv}_j^{\intercal}-\bv_j\bv_j^{\intercal} \rVert_{1,1}+2\lambda (1-\beta)\| \tilde{\bv}_j\tilde{\bv}_j^{\intercal}-\bv_j\bv_j^{\intercal} \|_{1,1}^{\ast}.
\end{equation}

By Lemma \ref{lem:ineq}, we can bound $T_6$ and $T_7$ as follows:
\begin{align}
 &T_6\leq 2\sqrt{|J_1(\bv_j\bv_j^{\intercal})|}\|\hat{H}_j-\bv_j\bv_j^{\intercal} \|_2,\label{eq:t6}\\
 &T_7\leq 2\sqrt{\sum_{(k,\ell)\in J_2(\bv_j\bv_j^{\intercal})}w_{k\ell}^2}\|\hat{H}_j-\bv_j\bv_j^{\intercal} \|_2.\label{eq:t7}
\end{align}

Next, we use the Cauchy-Schwartz inequality to bound $\| \tilde{\bv}_j\tilde{\bv}_j^{\intercal}-\bv_j\bv_j^{\intercal} \|_{1,1}$ and $\|\tilde{\bv}_j\tilde{\bv}_j^{\intercal}-\bv_j\bv_j^{\intercal} \|_{1,1}^{\ast}$.
We derive that
\begin{align*}
\lVert \tilde{\bv}_j\tilde{\bv}_j^{\intercal}-\bv_j\bv_j^{\intercal} \rVert_{1,1}&\leq \sqrt{|J_1(\tilde{\bv}_j\tilde{\bv}_j^{\intercal}-\bv_j\bv_j^{\intercal})|}\lVert \tilde{\bv}_j\tilde{\bv}_j^{\intercal}-\bv_j\bv_j^{\intercal}\rVert_2.
\end{align*}
By Lemma \ref{lem:Hv}, 
$\lVert \tilde{\bv}_j\tilde{\bv}_j^{\intercal}-\bv_j\bv_j^{\intercal}\rVert_2\leq \sqrt{2}\lVert \tilde{\bv}_j-\bv_j\rVert_2$. Thus,
\begin{equation}
    \label{eq:t3_5_1}
    \lVert \tilde{\bv}_j\tilde{\bv}_j^{\intercal}-\bv_j\bv_j^{\intercal} \rVert_{1,1}
    \leq \sqrt{2} \sqrt{|J_1(\tilde{\bv}_j\tilde{\bv}_j^{\intercal}-\bv_j\bv_j^{\intercal})|}\lVert \tilde{\bv}_j-\bv_j\rVert_2.
\end{equation}
Similarly, we derive that
\begin{align}
\|\tilde{\bv}_j\tilde{\bv}_j^{\intercal}-\bv_j\bv_j^{\intercal} \|_{1,1}^{\ast}
\leq \sqrt{2}\sqrt{\sum_{(k,l)\in J_2(\tilde{\bv}_j\tilde{\bv}_j^{\intercal}-\bv_j\bv_j^{\intercal})}w_{kl}^2}\lVert \tilde{\bv}_j-\bv_j\rVert_2.
\label{eq:t3_5_2}
\end{align}
Substituting \eqref{eq:t6}, \eqref{eq:t7}, \eqref{eq:t3_5_1} and \eqref{eq:t3_5_2} into \eqref{eq:t3_5},
\begin{align*}
T_3-T_5&\leq 2\left(\lambda \beta \sqrt{|J_1(\bv_j\bv_j^{\intercal})|}+\lambda (1-\beta)\sqrt{\sum_{(k,\ell)\in J_2(\bv_j\bv_j^{\intercal})}w_{k\ell}^2}\right)\|\hat{H}_j-\bv_j\bv_j^{\intercal} \|_2\nonumber\\
&+2\sqrt{2} \left(\lambda \beta \sqrt{|J_1(\tilde{\bv}_j\tilde{\bv}_j^{\intercal}-\bv_j\bv_j^{\intercal})|}+\lambda (1-\beta) \sqrt{\sum_{(k,\ell)\in J_2(\tilde{\bv}_j\tilde{\bv}_j^{\intercal}-\bv_j\bv_j^{\intercal})}w_{k\ell}^2}\right)
\lVert \tilde{\bv}_j-\bv_j\rVert_2.
\end{align*}

Combining \eqref{eq:deltaj2},
\eqref{eq:t1}, \eqref{eq:t2}, \eqref{eq:t4} and
the above inequality, 
 \begin{equation}\label{eq:Herror0}
     \begin{split}
 &\frac{\delta_j}{2}\lVert  \hat{H}_j-\bv_j\bv_j^{\intercal}  \rVert_2^2\\
 &\leq \left(2\lambda\beta |J_1(\bv_j)|+2\lambda (1-\beta)\sqrt{\sum_{(k,\ell)\in J_2(\bv_j)\times J_2(\bv_j)}w_{k\ell}^2}
 + \sqrt{2}  \sum_{k=1}^{j-1}\lambda_k
\|\bv_k - \hat{\bv}_k \|_2\right)\|\hat{H}_j-\bv_j\bv_j^{\intercal} \|_2\\
&+2\sqrt{2} \left(\lambda \beta  \sqrt{|J_1(\tilde{\bv}_j\tilde{\bv}_j^{\intercal}-\bv_j\bv_j^{\intercal})|}+\lambda (1-\beta) \sqrt{\sum_{(k,\ell)\in J_2(\tilde{\bv}_j\tilde{\bv}_j^{\intercal}-\bv_j\bv_j^{\intercal})}w_{k\ell}^2}
+ \frac{1}{2}\sum_{k=1}^{j-1}\lambda_k
\|\hat{\bv}_k -\bv_k   \|_2 \right)
\lVert \tilde{\bv}_j-\bv_j\rVert_2\\
&+\lambda_j \lVert \tilde{\bv}_j - \bv_j \rVert_2^2.
\end{split}
 \end{equation}
 
We now further derive that
\begin{equation*}
\begin{split}
\sqrt{|J_1(\tilde{\bv}_j\tilde{\bv}_j^{\intercal}-\bv_j\bv_j^{\intercal})|}&\leq \sqrt{|J_1(\tilde{\bv}_j\tilde{\bv}_j^{\intercal})|+|J_1(\bv_j\bv_j^{\intercal})|}\\
&\leq |J_1(\tilde{\bv}_j)|+|J_1(\bv_j)|\\
&\leq 2|J_1(\bv_j)|+\sum_{k=1}^{j-1}|J_1(\bvhat_k)|,
\end{split}
\end{equation*}
and the second to last inequality is due to (c) in Lemma \ref{lem:vtilde}.

Similarly, we have
\begin{equation*}
\begin{split}
\sqrt{|J_2(\tilde{\bv}_j\tilde{\bv}_j^{\intercal}-\bv_j\bv_j^{\intercal})|}&\leq \sqrt{|J_2(\tilde{\bv}_j\tilde{\bv}_j^{\intercal})|+|J_2(\bv_j\bv_j^{\intercal})|}\\
&\leq |J_2(\tilde{\bv}_j)|+|J_2(\bv_j)|\\
&\leq 2|J_2(\bv_j)|+\sum_{k=1}^{j-1}|J_2(\bvhat_k)|.
\end{split}
\end{equation*}
Combining these two inequalities and using Lemma \ref{lem:vtilde}(a), we obtain
from \eqref{eq:Herror0} that
\begin{equation*}
     \begin{split}
 &\frac{\delta_j}{2}\| \hat{H}_j -\bv_j\bv_j^{\t}  \|_2^2\\
 \leq &2\left(\lambda \beta |J_1(\bv_j)|+\lambda (1-\beta) w_1 J_2(\bv_j)
 +   \sum_{k=1}^{j-1}\lambda_k
\epsilon_k \right)\|\hat{H}_j-\bv_j\bv_j^{\intercal} \|_2\\
& +4 \sqrt{2} \left\{\lambda \beta  \left(2|J_1(\bv_j)|+\sum_{k=1}^{j-1}|J_1(\bvhat_k)|\right) +\lambda (1-\beta) w_1 \left(2|J_2(\bv_j)|+\sum_{k=1}^{j-1}|J_2(\bvhat_k)|\right)
+ \frac{1}{2}\sum_{k=1}^{j-1}\lambda_k
\epsilon_k \right\}
\epsilon_{j-1}+4\lambda_j \epsilon_{j-1}^2.
\end{split}
 \end{equation*}
By the specifications of $\tilde{a}_j$ and $\tilde{b}_j$, the above inequality reduces to 
$$
\frac{\delta_j}{2}\| \hat{H}_j -\bv_j\bv_j^{\t}  \|_2^2
\leq \tilde{a}_j \| \hat{H}_j -\bv_j\bv_j^{\t}  \|_2 + \tilde{b}_j,
$$
which gives 
$$
\| \hat{H}_j -\bv_j\bv_j^{\t}  \|_2
\leq \frac{2 \tilde{a}_j}{\delta_j} + \sqrt{\frac{2\tilde{b}_j}{\delta_j}}.
$$
Then by Lemma \ref{lem:vhat} and Lemma \ref{lem:Hv}, we have
\begin{equation*}
\begin{split}
\|\bvhat_j-\bvhat_j\|_2 &\leq \sqrt{2}\|\bvhat_j\bvhat_j^{\t}-\bv_j\bv_j^{\t} \|_2\\
&\leq 2\sqrt{2}\|\Hhat_j-\bv_j\bv_j^{\t} \|_2\\
&\leq 2\sqrt{2}\left(\frac{2\tilde{a}_j}{\delta_j}+\sqrt{\frac{2\tilde{b}_j}{\delta_j}}\right).
\end{split}
\end{equation*}
Now the proof is complete.
\end{proof}

\textit{Proof of Proposition} \ref{prop:sparse}:
\begin{proof}[Existence of a sparse solution]  Let $$\B_{p,1}=\{Z\in \R^{p\times p}:Z=Z^{\t},\|Z \|_{\infty,\infty}\leq 1 \},$$
and 
$$\B_{p,2} = \{Z\in \R^{p\times p}:
Z = Z^{\t}, \|Z^{k\ell}\|_2 \leq w_{k\ell},\forall (k,\ell)
\}.$$
We write \eqref{eq:model} into an equivalent min-max form:
\begin{align*}
       &\quad\,\, \max_{H\in\F_{\Pihat_{j-1}}}\la S, H\ra -\lambda \beta \|H \|_{1,1} -\lambda (1-\beta) \|H\|_{1,1}^{\ast}\\
       &\Leftrightarrow \max_{H\in\F_{\Pihat_{j-1}}} \min_{Z_1\in\B_{p,1},
       \text{ } Z_2\in \B_{p,2}}\la S, H\ra - \lambda\beta \la H, Z_1\ra-\lambda (1-\beta) \la H,Z_2 \ra\\
       &\Leftrightarrow \min_{Z_1\in\B_{p,1},\text{ } Z_2\in \B_{p,2}} \max_{H\in\F_{\Pihat_{j-1}}} \la S -\lambda\beta  Z_1-\lambda (1-\beta) Z_2, H\ra. 
\end{align*}

By the Karush-Kuhn-Tucker (KKT) condition, a triplet  $\left(\Hhat,\Zhat_1,\Zhat_2\right)\in \mathcal{F}_{\Pihat_{j-1}}\times \B_{p,1} \times \B_{p,2}$ is optimal for problem \eqref{eq:model} if and only if 

\begin{align}
    (\Zhat_1)_{st}=\text{sign}\left(\Hhat_{st}\right), \quad \forall (s,t) \text{ with } \Hhat_{st} \neq 0,\label{eq:kkt1}\\
    (\Zhat_1)_{st}\in [0,1], \quad \forall (s,t) \text{ with } \Hhat_{st}=0,\label{eq:kkt2}\\
    (\Zhat_{2})^{k\ell}=\frac{w_{k\ell}\Hhat^{k\ell}}{\|\Hhat^{k\ell}\|_2}, \quad \forall (k,\ell) \text{ with } \|\Hhat^{k\ell}\|_2\neq 0,\label{eq:kkt3}\\
    \Zhat_2^{k\ell}\in \left\{U: \|U \|_2\leq w_{k\ell}\right\}, \quad  \forall (k,\ell) \text{ with } \|\Hhat^{k\ell} \|_2=0,\label{eq:kkt4} \\
    \Hhat=\arg\max_{H\in \F_{\Pihat_{j-1}}}\la S-\lambda \beta \Zhat_1-\lambda (1-\beta) \Zhat_2,H  \ra.\label{eq:kkt5}
\end{align}

To proceed, we first construct a sparse solution $\Htilde$:
\begin{equation}
    \label{eq:opt3}
    \Htilde = \arg\max_{H\in\F_{\Pihat_{j-1}},\,J_1(\text{diag}(H))\subseteq J_1} \la S, H\ra - \lambda\beta \| H \|_{1,1} - \lambda(1-\beta)\|H\|_{1,1}^{\ast}.
\end{equation}

Let $\left(\Ztilde_1, \Ztilde_2\right)$ be a pair of corresponding optimal dual variables. By Lemma \ref{lem:Q}, $\Htilde$ is a rank-1 projection matrix with non-zero elements supported on $J_1\times J_1$, and non-zero blocks supported on $J_2\times J_2$, i.e., $\Htilde_{st} = 0$ if $(s,t)\notin J_1\times J_1$, and $\|(\Htilde)^{k\ell}\|_2 = 0$ if $(k,\ell)\notin J_2\times J_2$.

Let $\bu$ be the leading eigenvector of 
$\Htilde$. By Lemma \ref{lem:Q}, there exists an $s_1\times s_1$ orthonormal matrix $Q_j$ and $\tilde{Q}_j = \text{blockdiag}(Q_j, 0_{(p-s_1)\times (p-s_1)})$ such that 
$$
[\bvhat_{1},\ldots,\bvhat_{j-1},\bu]  = 
\tilde{Q}_j
[\bv_{1},\ldots,\bv_{j-1}, \bv_j],
$$
and
$$
\|Q_j-I\|_2 \leq \sum_{k=1}^j \frac{4[2(\lambda \beta s_1+\lambda (1-\beta) s_2 w_1) + 2\gamma_k \epsilon_{k-1} + \gamma_1 \epsilon_{k-1}^2]}{\gamma_k-\gamma_{k+1}}.
$$

Denote $J_{1,\ast}$ as the collection of elementwise indices for nonzero groups $J_2$. Figure \ref{fig:sparse} gives an illustration of the sparsity structure in $\Sigma_1$.

\begin{figure}[htp]
           \centering            \includegraphics[width=9cm,height=9cm]{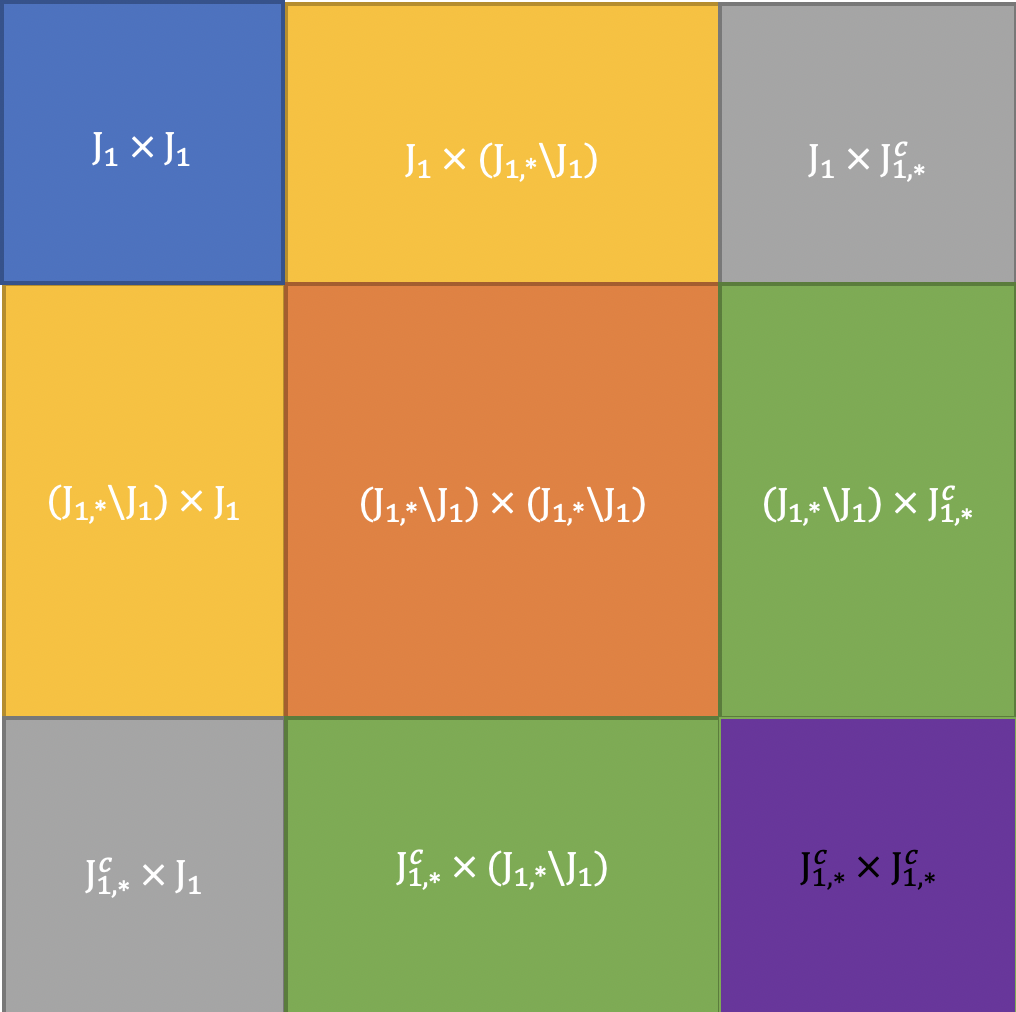}
            \caption{A schematic plot illustrating the blockwise partition of $\Sigma_1$ based on index sets $J_1$ and $J_{1,\ast}$. The upper left block colored blue and labelled by $J_1\times J_1$ includes all nonzero entries in the signal covariance matrix, since $J_1$ is the union of the elementwise support for eigenvectors $\bv_i, 1\leq i \leq r$. The orange block labelled by $(J_{1,\ast}\setminus J_1)\times (J_{1,\ast}\setminus J_1)$ contains zero entries whose columnwise and rowwise indices at the block level belong to nonzero groups $J_2$.
            The bottom right diagonal block corresponds to the entries with indices belonging to $J_{1,\ast}^c\times J_{1,\ast}^c$.}
            \label{fig:sparse}
\end{figure}

Now we define a modified primal-dual triplet $(\Hhat,\Zhat_1,\Zhat_2)$ as follows: 
\newpage
\begin{equation}\label{eq:mod}
    \begin{split}
         \Hhat &=\Htilde,\\
    (\Zhat_1)_{J_1J_1}&=(\Ztilde_1)_{J_1J_1},\\
     (\Zhat_2)^{J_2J_2}&= (\Ztilde_2)^{J_2J_2},\\
      (\Zhat_1)_{st}&=(\Zhat_1)_{ts} = \frac{1}{\lambda\beta}\left\{S_{st}-\la (Q_j)_{s \bdot}, (\Sigma_1)_{J_1t} \ra \right\},        \quad (s,t) \in J_1\times (J_{1,\ast}\setminus J_1),\\
           (\Zhat_1)_{st}&=\frac{1}{\lambda\beta}E_{st},
      \quad (s,t) \in (J_{1,\ast}\setminus J_1)\times  (J_{1,\ast}\setminus J_1),\\
      (\Zhat_1)_{st}&=(\Zhat_1)_{ts}=\frac{1}{\lambda}\left\{S_{st}-\la (Q_j)_{s \bdot}, (\Sigma_1)_{J_{1}t} \ra\right\}, \quad (s,t) \in J_1\times J_{1,\ast}^c,\\
      (\Zhat_2)_{st}&=(\Zhat_2)_{ts}=\frac{1}{\lambda}\left\{S_{st}-\la (Q_j)_{s \bdot}, (\Sigma_1)_{J_{1}t} \ra\right\},\quad (s,t) \in J_1\times J_{1,\ast}^c,\\
       (\Zhat_1)_{st}&=(\Zhat_1)_{ts}=\frac{1}{\lambda}E_{st}, \quad (s,t) \in J_1^c \times J_{1,\ast}^c,\\
      (\Zhat_2)_{st}&=(\Zhat_2)_{ts}=\frac{1}{\lambda}E_{st},\quad (s,t) \in J_1^c \times J_{1,\ast}^c.
      \end{split}
\end{equation}

We now check that $(\Hhat, \Zhat_1,\Zhat_2)$ is feasible and satisfies the KKT conditions \eqref{eq:kkt1} -\eqref{eq:kkt5}. 

{\it Feasibility of $\Hhat$.} Obviously $\Hhat$ is feasible.

{\it Feasibility of $\Zhat_2$.} 
The dual $\Zhat_2$ is feasible if $\|(\Zhat_2)^{k\ell}\|_2\leq  w_{k\ell}$
for all $(k,\ell)\in (J_2\times J_2^c)\cup(J_2^c\times J_2^c)$. Suppose first $(k,\ell)\in J_2\times J_2^c $, the elementwise entries of block $(\Zhat_2)^{k\ell}$ have indices $(s,t)\in J_{1,\ast}\times J_{1,\ast}^c$. By the definition of $Z_2$ in  \eqref{eq:mod}, we can equivalently write $(\Zhat_2)^{k\ell}$ as
\begin{equation*}
    (\Zhat_2)^{k\ell}=\frac{1}{\lambda}\left\{S^{k\ell}-(\bar{Q}_j)^{k\bdot}(\Sigma_1)^{J_2\ell}\right\},
\end{equation*}
where $\bar{Q}_j = \text{blockdiag}(Q_j, 0_{\{(|J_{1,\ast}|-s_1 )\times (|J_{1,\ast}| -s_1)\}})$.
It is easy to derive that
\begin{align*}
   \la (\bar{Q}_j)_{s\bdot}, (\Sigma_1)_{J_{1,\ast}t}  \ra =  
  \begin{cases}
    \la  (Q_j)_{s\bdot}, (\Sigma_1)_{J_1t},    \ra & \forall s\in J_1, t\in J_{1,\ast}^c\\
    (\Sigma_1)_{st}, & \forall s\in J_{1,\ast}\setminus J_1, t\in J_{1,\ast}^c.
    \end{cases}
\end{align*}
Hence, we have
\begin{equation*}
    \begin{split}
 \|(\Zhat_2)^{k\ell}\|_2&=  \frac{1}{\lambda}\|S^{k\ell}-\ (\bar{Q}_j)^{k \bdot}(\Sigma_1)^{J_2\ell} \|_2 \\
 &= \frac{1}{\lambda}\left\{\|S^{k\ell}-(\Sigma_1)^{k\ell}+ (\Sigma_1)^{k\ell}-(\bar{Q}_j)^{k \bdot}(\Sigma_1)^{J_2\ell}\|_2 \right\}\\
 &=\frac{1}{\lambda}\left\{\|E^{k\ell}-(I- \bar{Q}_j)^{k\bdot} (\Sigma_1)^{J_2\ell}\|_2 \right\}\\
 &\leq \frac{1}{\lambda}\left\{\|E^{k\ell}\|_2+\|I- \bar{Q}_j \|_2 \|(\Sigma_1)^{J_2\ell}\|_2\right\}\\
 &= \frac{1}{\lambda} \left\{w_{k\ell}\frac{\| E^{k\ell}\|_2}{w_{k\ell}}    + \|I- Q_j \|_2 \sqrt{\sum_{k\in J_2}w_{k\ell}^2\frac{\|(\Sigma_1)^{k\ell} \|_2^2}{w_{k\ell}^2}}\right\} \\
 &\leq  \frac{1}{\lambda}\left\{w_1\|E\|_{\infty,\infty}^{\ast}+\|I- Q_j \|_2 w_1 \|(\Sigma_1)^{J_2^cJ_2}\|_{2,\infty}^{\ast}\right\}\\
 &\leq w_0\\
 &\leq w_{k\ell}.
    \end{split}
\end{equation*}
In the above derivation, we used the facts
$\|(\Sigma_1)^{J_2^cJ_2}\|_{2,\infty}^{\ast}=\max_{\ell\in J_2^c} \sqrt{\sum_{k\in J_2}\frac{\|(\Sigma_1)^{k\ell} \|_2^2}{w_{k\ell}^2}}$, and $\|I- \bar{Q}_j \|_2= \|I- Q_j \|_2$. The last inequality is satisfied because of condition \eqref{eq:cond3}.

Now suppose $(k,\ell)\in J_2^c\times J_2^c$. By condition \eqref{eq:cond3},
\begin{equation*}
\begin{split}
      \left\|(\Zhat_2)^{k\ell} \right\|_2&\leq \frac{1}{\lambda}\| E^{k\ell} \|_2\\
  &\leq \frac{1}{\lambda}w_1\| E\|_{\infty,\infty}^{\ast}\\
  &\leq w_0\\
  &\leq w_{k\ell},
  \end{split}
\end{equation*}
which establishes the feasibility of $\Zhat_2$.

{\it Feasibility of $\Zhat_1$.} The dual variable $\Zhat_1$ is feasible if $|(\Zhat_1)_{st}|\in [0,1]$ for all $(s,t)\in (J_1\times J_1^c)\cup(J_1^c\times J_1^c)$.
First,  consider $(s,t)\in J_1\times (J_{1,\ast}\setminus J_1)$. We have
\begin{align*}
    |(\Zhat_1)_{st}|&=\frac{1}{\lambda\beta }|S_{st}-\la(Q_j)_{s\bdot}, (\Sigma_1)_{J_1t}\ra|\\
    &\leq \frac{1}{\lambda\beta}\left\{|S_{st}-(\Sigma_1)_{st}|+ |(\Sigma_1)_{st}-\la (Q_j)_{s\bdot}, (\Sigma_1)_{J_1t}\ra|\right\}\\
    &\leq \frac{1}{\lambda\beta}\left\{\| E \|_{\infty,\infty} + \|I-Q_j \|_2 \|(\Sigma_1)_{(J_{1,\ast}\setminus J_1)J_1} \|_{2,\infty}\right\}\\
    &\leq \frac{1}{\lambda\beta}\left\{\| E \|_{\infty,\infty} + \|I-Q_j \|_2 \|(\Sigma_1)_{J_1^cJ_1} \|_{2,\infty}\right\}\\
    &\leq 1,
\end{align*}
where $\|(\Sigma_1)_{J_1^cJ_1} \|_{2,\infty}=\max_{t\in J_1^c}\|(\Sigma_1)_{J_1t} \|_2$. The last inequality holds true because of condition \eqref{eq:cond4}. Similar derivation follows for $(s,t )\in J_1\times J_{1,\ast}^c$.


Next consider $(s,t)\in (J_{1,\ast}\setminus J_1)\times  (J_{1,\ast}\setminus J_1)$. Condition \eqref{eq:cond4} implies that
\begin{equation*}
    \begin{split}
         |(\Zhat_1)_{st}|&=\frac{1}{\lambda\beta}|E_{st}|\\
    &\leq \frac{1}{\lambda\beta}\|E \|_{\infty,\infty}\\
    &\leq 1.
    \end{split}
\end{equation*}
Similar derivation follows for $(s, t)\in J_1^c\times J_{1,\ast}^c$,
which concludes the proof of the feasibility of $\Zhat_1$.

{\it KKT conditions \eqref{eq:kkt1} and \eqref{eq:kkt3}.} Obvious by construction.

{\it KKT condition \eqref{eq:kkt2}. } For $(s,t)\in J_1\times J_1$, \eqref{eq:kkt2} holds for $(\Hhat,\Zhat_1)$ because the same holds for $(\Htilde,\Ztilde_1)$. For $(s,t)\notin J_1\times J_1$, $\|\Hhat_{st}\|_2 = 0$ and then \eqref{eq:kkt2} follows since $\Zhat_1$ is feasible.

{\it KKT condition \eqref{eq:kkt4}. } For $(k,\ell)\in J_2\times J_2$, \eqref{eq:kkt4} holds for $(\Hhat,\Zhat_2)$ because the same holds for $(\Htilde,\Ztilde_2)$. For $(k,\ell)\notin J_2\times J_2$, $\|\Hhat^{k\ell}\|_2 = 0$ and then \eqref{eq:kkt4} follows since $\Zhat_2$ is feasible. 

{\it KKT condition \eqref{eq:kkt5}.} By Lemma \ref{lem:ftp_prop},
it suffices to show that $\bu$ is the leading eigenvector of 
\begin{equation*}
\begin{split}
\tilde{\Sigma} &:= (I-\Pihat_{j-1}) (S-\lambda\beta\Zhat_1-\lambda(1-\beta)\Zhat_2)(I-\Pihat_{j-1})
\\
&=(I-\Pihat_{j-1})\left(\begin{array}{cc} 
S_{J_1J_1} - \lambda \beta (\Ztilde_1)_{J_1J_1}-\lambda(1-\beta) (\Ztilde_2)_{J_1J_1},&Q_j(\Sigma_1)_{J_1J_1^c}\\
(\Sigma_1)_{J_1^c J_1}Q_j^{\intercal},& (\Sigma_1)_{J_1^c J_1^c}
\end{array}\right)(I-\Pihat_{j-1}).
\end{split}
\end{equation*}
To simplify notation, let $D= I-\Pihat_{j-1}$. Note that all non-zero blocks of $D$ are in $D_{J_1J_1}$.
We have
\begin{align*}
    \tilde{\Sigma}
&= D\left(\begin{array}{cc} 
S_{J_1J_1} - \lambda\beta (\Ztilde_1)_{J_1J_1}-\lambda (1-\beta)(\Ztilde_2)_{J_1J_1} - Q_j(\Sigma_1)_{J_1 J_1}Q_j^{\intercal},&0\\
0,& 0
\end{array}\right)D\\
&\quad\,\,+ 
D\left(\begin{array}{cc} 
Q_j (\Sigma_1)_{J_1 J_1}Q_j^{\intercal},&Q_j (\Sigma_1)_{J_1J_1^c}\\
(\Sigma_1)_{J_1^c J_1}Q_j^{\intercal},&(\Sigma_1)_{J_1^c J_1^c}
\end{array}\right)D\\
&= \left(\begin{array}{cc} 
D_{J_1J_1}(S_{J_1J_1} - \lambda\beta (\Ztilde_1)_{J_1J_1} - \lambda (1-\beta) (\Ztilde_2)_{J_1J_1} -Q_j(\Sigma_1)_{J_1 J_1}Q_j^{\intercal})D_{J_1J_1},&0\\
0,& 0
\end{array}\right)\\
&\quad\,\,+ 
D
\left(
\begin{array}{cc}
    Q_j, & 0 \\
     0,& I 
\end{array}
\right)
\Sigma_1
\left(
\begin{array}{cc}
    Q_j^{\t}, & 0 \\
     0,& I 
\end{array}
\right)D\\
&=\text{``noise''} + \text{``signal''}.
\end{align*}


Let $\tilde{Q}_j = \text{blockdiag}(Q_j, I_{p - s_1})$.
Based on the spectral decomposition of $\Sigma_1$, the ``signal'' $D{Q}_j\Sigma_1{Q}_j^{\t}D$ can be decomposed into two parts,
\begin{align*}
D\tilde{Q}_j\Sigma_1\tilde{Q}_j^{\t}D=
D
\tilde{Q}_j
\left(\sum_{k=1}^j\gamma_k \bv_k \bv_k^{\intercal}\right)
\tilde{Q}_j^{\intercal}D+
D
\tilde{Q}_j
\left(\sum_{k=j+1}^r \gamma_k \bv_k \bv_k^{\t}\right)
\tilde{Q}_j^{\t}D.
\end{align*}
By Lemma \ref{lem:Q} and the definition of $D$, we have
\begin{equation*}
\begin{split}
   D\tilde{Q}_j\bv_k&=\begin{cases}
   0, & k<j,\\
   \bu, &k=j.
    \end{cases}
\end{split}
\end{equation*}
Thus,
\begin{equation*}
  D
\tilde{Q}_j
\left(\sum_{k=1}^j\gamma_k \bv_k \bv_k^{\t}\right)
\tilde{Q}_j^{\t}D= \gamma_j \bu \bu^{\t}.
\end{equation*}

For the second term, noting that
\begin{align*}
    \bu^{\t} D\tilde{Q}_j\bv_k=\bu^{\t}\tilde{Q}_j\bv_k=\bv_j^{\t}\bv_k=0 \quad \forall k \geq j+1, 
\end{align*}
showing that the second part is orthogonal to $\bu$. Since $D$ is a projection matrix, the leading eigenvalues of the second part is less or equal to $\gamma_{j+1}$. Hence, $\bu$ must be the leading eigenvector of the signal matrix with a spectral gap which is at least $\gamma_j-\gamma_{j+1}$.

Moreover,
$\bu_{J_1}$ is an eigenvector of $D_{J_1J_1}(S_{J_1J_1} - \lambda\beta (\Ztilde_1)_{J_1J_1} -\lambda(1-\beta) (\Ztilde_2)_{J_1J_1} - Q_j (\Sigma_1)_{J_1 J_1}Q_j^{\intercal})D_{J_1J_1}$. The reason is that Lemma \ref{lem:basic} shows that $\bu$ is the leading eigenvector of $D_{J_1J_1}(S_{J_1J_1} - \lambda_1 (\Ztilde_1)_{J_1J_1} -\lambda_2(\Ztilde_2)_{J_1J_1})D_{J_1J_1}$ and a simple calculation shows that $\bu_{J_1}$ is also an eigenvector of $D_{J_1J_1}Q_j (\Sigma_1)_{J_1 J_1}Q_j^{\intercal}D_{J_1J_1}$.

Thus, it suffices to show that the operator  of the noise matrix is smaller than $(\gamma_j - \gamma_{j+1})/2$.
We have
\begin{align*}
&\|S_{J_1J_1} -\lambda\beta (\Ztilde_1)_{J_1J_1} - \lambda (1-\beta) (\Ztilde_2)_{J_1J_1} - Q_j (\Sigma_1)_{J_1 J_1}Q_j^{\intercal}\|_{op}\\
\leq &
\|S_{J_1J_1} - \lambda\beta (\Ztilde_1)_{J_1J_1} - \lambda (1-\beta) (\Ztilde_2)_{J_1J_1}- (\Sigma_1)_{J_1J_1}\|_2
+ \|(\Sigma_1)_{J_1J_1} - Q_j (\Sigma_1)_{J_1 J_1}Q_j^{\intercal}\|_2\\
\leq &2(\lambda\beta s_1+\lambda(1-\beta) s_2 w_1) + 2\|(\Sigma_1)_{J_1J_1}\|_{op}\|Q_j-I\|_{2}\\
\leq &2(\lambda\beta s_1+\lambda(1-\beta) s_2 w_1) + 2\gamma_1 \times \sum_{k=1}^{j} \frac{4[2(\lambda\beta s_1+\lambda (1-\beta) s_2 w_1) + 2\gamma_j \epsilon_{k-1} + \gamma_1 \epsilon_{k-1}^2]}{\gamma_k-\gamma_{k+1}}.
\end{align*}
By condition~\eqref{eq:cond5}, proof of existence of a sparse solution is now complete.
\end{proof}

\begin{proof}[Uniqueness of solution]Consider the elastic net version of FPS:
\begin{equation}\label{eq:elastic}
\max_{H\in \F_{\Pihat_{j-1}}} \quad \la S, H\ra -\lambda\beta \|H\|_{1,1} - \lambda(1-\beta) \|H\|_{1,1}^{\ast} - \frac{\tau}{2} \|H\|_2^2.
\end{equation}
As the objective function is strongly concave, the solution of \eqref{eq:elastic} is unique. Consider the max-min and min-max forms using the dual variable $Z_1\in \B_{p,1} = \{Z\in \mathbb{R}^{p\times p}: Z = Z^T, \|Z\|_{\infty,\infty}^{\ast} \leq 1 \}$, $Z_2\in\B_{p,2} = \{Z\in \R^{p\times p}:
Z = Z^{\t}, \|Z^{k\ell}\|_2 \leq w_{k\ell},\forall (k,\ell)
\}$:
\begin{eqnarray*}
&&\min_{H\in\F_{\Pihat_{j-1}}}
\max_{Z_1\in\B_{p,1},Z_2\in\B_{p,2}} - \la S, H\ra + \lambda\beta \la H, Z_1\ra + \lambda (1-\beta) \la H, Z_2\ra + \frac{\tau}{2} \|H\|_2^2\\
&&\Leftrightarrow
\max_{Z_1\in \B_{p,1}, Z_2\in \B_{p,2}}\min_{H\in\F_{\Pihat_{j-1}}}
\frac{\tau}{2} \left\| H- \frac{1}{\tau}(S-\lambda\beta Z_1-\lambda (1-\beta) Z_2)\right\|_2^2 - \frac{1}{2\tau}\|S-\lambda\beta Z_1-\lambda (1-\beta) Z_2\|_2^2.
\end{eqnarray*}

By the Karush-Kuhn-Tucker (KKT) condition, a triplet $(\Hhat, \Zhat_1, \Zhat_2)\in\F_{\Pihat_{j-1}}\times \B_{p,1} \times \B_{p,2}$ is optimal for the above problem if and only if 
\begin{align*}
     (\Zhat_1)_{st}=\text{sign}\left(\Hhat_{st}\right), \quad \forall (s, t) \text{ with } \Hhat_{st} \neq 0,\\
    (\Zhat_1)_{st}\in [0,1], \quad \forall (s, t) \text{ with } \Hhat_{st}=0,\\
    (\Zhat_{2})^{k\ell}=\frac{w_{k\ell}\Hhat^{k\ell}}{\|\Hhat^{k\ell}\|_2}, \quad \forall (k,\ell) \text{ with } \|\Hhat^{k\ell}\|_2\neq 0,\\
    (\Zhat_2)^{k\ell}\in \left\{U: \|U \|_2\leq w_{k\ell}\right\}, \quad  \forall (k,\ell) \text{ with } \|\Hhat^{k\ell} \|_2=0,\\
     \Hhat= \mathcal{P}_{\F_{\Pihat_{j-1}}}\left\{\frac{1}{\tau}(S-\lambda\beta \Zhat_1-\lambda (1-\beta) \Zhat_2)\right\}.
\end{align*}
Denote $\Sigmatilde = (I - \Pihat_{j-1})(S-\lambda\beta \Zhat_1 -\lambda (1-\beta) \Zhat_2)(I - \Pihat_{j-1})$.
Let $(\Hhat, \Zhat_1, \Zhat_2)$ be the modified primal-dual triplet in \eqref{eq:mod}. We now show that this triplet is also optimal for the above optimization problem when $\tau$ is sufficiently small. By the existence proof,  $\lambda_1(\Sigmatilde) - \lambda_2 (\Sigmatilde) >0$. Then, part 3 of Lemma \ref{lem:ftp_prop} implies that when 
$$
0 < \tau \leq \lambda_1(\Sigmatilde) - \lambda_2(\Sigmatilde),
$$
we have
$$
\Hhat = 
\mathcal{P}_{\F^1}\left(\frac{1}{\tau}\Sigmatilde\right),
$$
which is equivalent to
$$
\Hhat= \mathcal{P}_{\F_{\Pihat_{j-1}}}\left\{\frac{1}{\tau}(S-\lambda\beta \Zhat_1-\lambda (1-\beta) \Zhat_2)\right\}
$$
by Lemma \ref{lem:basic}, and thus $(\Hhat, \Zhat_1, \Zhat_2)$ is indeed an optimal primal-dual triplet for the optimization problem \eqref{eq:elastic}.

To prove uniqueness of $\Hhat$ as a solution to \eqref{eq:model}, assume that there exists another solution $\Hhat'\in \F_{\Pihat_{j-1}}$ such that
$$
\la S, \Hhat\ra - \lambda\beta \|\Hhat\|_{1,1}- \lambda(1-\beta)\|\Hhat\|_{1,1}^{\ast}
=\la S, \Hhat'\ra -\lambda\beta\|\Hhat'\|_{1,1}- \lambda(1-\beta)\|\Hhat'\|_{1,1}^{\ast}.
$$
Because $\Hhat$ is the unique solution to \eqref{eq:elastic} for small $\tau>0$,
$\|\Hhat'\|_2^2 > \|\Hhat\|_2^2$.
In addition, by the existence proof, $\Hhat$ is of rank 1.
Hence,
$$
1 = \tr(\Hhat) =  \|\Hhat\|_2^2 < \|\Hhat'\|_2^2
\leq \tr(\Hhat') = 1
$$
and we have reached a contradiction.
\end{proof}

\begin{proof}[Proof of Proposition \ref{supp:element}]
For $j=1,\ldots, r$, let $\Delta_j=\Hhat_j-\bv_j\bv_j^{\t}$.
Note $(\bv_j\bv_j^{\t})_{ii} = v_{ji}^2$, then
\begin{equation*}
\begin{split}
    D_0 &\coloneqq \left\{i: v_{ji}=0, (\Hhat_j)_{ii}\geq t\right\} =\left\{i: (\bv_j\bv_j^{\t})_{ii} =0, (\Hhat_j)_{ii}\geq t\right\}\subseteq \left\{i: |(\Delta_j)_{ii}|\geq t\right\}, \\
    D_1 &\coloneqq \left\{i: v_{ji}^2 \geq \sqrt{2t},(\Hhat_j)_{ii} <t\right\} = \left\{i: (\bv_j\bv_j^{\t})_{ii}\geq 2t, (\Hhat_j)_{ii} <t\right\} \subseteq \left\{i: |(\Delta_j)_{ii}| \geq t\right\},
\end{split}
\end{equation*}
and $D_0\cap D_1=\emptyset$. By Markov's Inequality,
\begin{equation*}
|D_0|+|D_1|\leq \left|\left\{i:|(\Delta_j)_{ii}| \geq t\right\}\right|\leq \frac{1}{t^2}\sum_{i} |(\Delta_j)_{ii}|^2\leq \frac{\| \Delta_j\|_2^2}{t^2}.
\end{equation*}
\end{proof}

\begin{proof}[Proof of Proposition \ref{supp:block}]
Let $\Delta_j$ be defined as in the Proof of Proposition \ref{supp:element}.
Note $\|(\bv_j\bv_j^{\t})^{kk}\|_2 = \|\bv_j^k\|_2^2$, then
\begin{equation*}
\begin{split}
    D_0 &\coloneqq \left\{k: \|\bv_j^{k}\|_2=0, \|(\Hhat_j)^{kk}\|_2\geq t\right\}=\left\{k: \|(\bv_j\bv_j^{\t})^{kk}\|_2=0, \|(\Hhat_j)^{kk}\|_2\geq t\right\}\subseteq \left\{k: \|(\Delta_j)^{kk}\|_2\geq t\right\}, \\
    D_1 &\coloneqq \left\{k: \|\bv_j^{k}\|_2 \geq \sqrt{2t},\|(\Hhat_j)^{kk}\|_2 <t\right\} = \left\{k: \|(\bv_j\bv_j^{\t})^{kk}\|_2\geq 2t,\|(\Hhat_j)^{kk}\|_2 <t\right\} \subseteq \left\{k: \|(\Delta_j)^{kk}\|_2\geq t\right\},
\end{split}
\end{equation*}
and $D_0\cap D_1=\emptyset$. By Markov's Inequality,
\begin{equation*}
|D_0|+|D_1|\leq \left|\left\{k:\| (\Delta_j)^{kk}\|_2\geq t\right\}\right|\leq \frac{1}{t^2}\sum_{k}\| (\Delta_j)^{kk} \|_2^2\leq \frac{\| \Delta_j\|_2^2}{t^2}.
\end{equation*}

\end{proof}

\subsection{Proofs for Section \ref{sec:strong}}

\begin{proof}[Proof of Theorem \ref{thm:rate_block}]
For the proof we just need Proposition \ref{prop:1}, which we assume to hold for now.
Note that $|J_1(\bv_j)| \leq p$, $|J_1(\bvhat_j)|\leq p$,
$|J_2(\bv_j)| \leq I$, and $|J_2(\bvhat_j)|\leq I$. 
Assumption \ref{as:eta} says $\gamma_k = p \eta_k$.
Also $w_1 = p_0$. Thus,
$$
\tilde{a}_j \leq 2\left(\lambda \beta p +\lambda (1-\beta) p
 +  p \sum_{k=1}^{j-1}\eta_k \epsilon_k\right),
$$
and 
\begin{equation*}
    \begin{split}
 \tilde{b}_j = 4 \sqrt{2} \left\{\lambda\beta (j+1) p  +\lambda(1-\beta) (j+1) p
+ \frac{p}{2} \sum_{k=1}^{j-1}\eta_k
\epsilon_k \right\}
\epsilon_{j-1}+4\eta_j p \epsilon_{j-1}^2.
\end{split}
\end{equation*}

For $j=1$, we have $\tilde{a}_1  = O(\lambda p)$ and $\tilde{b}_1 =0$ by letting $\epsilon_0 = 0$.
Thus inequality \eqref{d_eb} in Proposition \ref{prop:1}
gives 
$$
\|\bvhat_1 - \bv_1\|_2 \leq 2\sqrt{2} \left(\frac{2\tilde{a}_1}{\gamma_1 - \gamma_2}\right) = O(\lambda).
$$

By induction,  we can choose $\epsilon_j = O(\lambda)$ and then show that  $\tilde{a}_j = O(\lambda p)$ and $\tilde{b}_j = O(\lambda^2 p)$, which then, by inequality \eqref{d_eb}, gives
$$
\|\bvhat_j - \bv_j\|_2 = O(\lambda).
$$

For Proposition \ref{prop:1} to hold, we need condition \eqref{eq:cond6}
to hold. Assumption \ref{as:prob_ieq} says $\|E\|_{\infty,\infty} = O_p(\sqrt{\log p/n})$, which also implies $\|E\|_{\infty,\infty}^{\ast} = O_p(\sqrt{\log p/n})$. Thus, if $\lambda \gtrapprox C \sqrt{\log p/n}$, then condition \eqref{eq:cond6} holds with high probability. Therefore, $$
\|\bvhat_j - \bv_j\|_2 = O_p(\lambda).
$$

The proof is complete.

\end{proof}

\begin{proof}[Proof of Theorem \ref{thm:recovery_I_strong}]
We first prove part $(i)$ via Corollary \ref{coro:block}. Notice that $\left\|(\Sigma_1)^{J_2^c J_2}\right\|_{2,\infty}^{\ast} = 0$ and thus it suffices to verify that  $\|E\|_{\infty,\infty}^{\ast}\leq \lambda$ holds with high probability and there exists a sufficiently large $n$ such that
$\gamma_j - \gamma_{j+1} - 4 (\lambda s_2 p_0 + \gamma_1 b_j)>0$ for $1\leq j \leq r$. For the first inequality, Assumption \ref{as:prob_ieq}
implies that $\|E\|_{\infty, \infty}^{\ast} = O_p(\sqrt{\log p/n})$. Hence if $\lambda \gtrapprox \sqrt{\log p/n}$, then $\|E\|_{\infty,\infty}^{\ast}\leq \lambda$ holds with high probability. As for the second inequality, first note
that 
$$
b_j = \sum_{k=1}^j \frac{8\lambda s_2 p_0 + 8\gamma_k \epsilon_{k-1} + 4\gamma_1 \epsilon_{k-1}^2}{\gamma_k - \gamma_{k+1}}.
$$
Thus by induction and the proof of Theorem \ref{thm:rate_block}, we can show that $b_j = O(\lambda) $.
So if $\lambda = O(\sqrt{\log p/n})$, then $\lambda = o(1)$. Thus the second inequality holds for any sufficiently large $n$.

We next prove part $(ii)$ via Proposition \ref{supp:block}. The proof of Theorem \ref{thm:rate_block}
also shows that $\|\hat{H}_j - \bv_j\bv_j^{\intercal}\|_2 = O_p(\lambda)$.
Hence,  the condition in Proposition \ref{supp:block} holds with high probability.

The proof is complete.
\end{proof}

\begin{proof}[Proof of Theorem \ref{thm:recovery_II_strong}]
We first prove part $(i)$ via Proposition \ref{prop:sparse}.  Notice that $\left\|(\Sigma_1)^{J_2^c J_2}\right\|_{2,\infty} = 0$ and also $\left\|(\Sigma_1)^{J_2^c J_2}\right\|_{2,\infty}^{\ast} = 0$, thus
Assumption \ref{as:prob_ieq} implies that both conditions \eqref{eq:cond3} and \eqref{eq:cond4} hold.

By induction and the proof of Theorem \ref{thm:rate_block}, we can show that $b_j = O(\lambda) $.
The assumption on $\lambda$ ensures that condition \eqref{eq:cond5} of Proposition \ref{prop:sparse} holds. Thus, the sparsity of the estimated eigenvectors follow.

We next prove part $(ii)$ via Proposition \ref{supp:element}. The proof of Theorem \ref{thm:rate_block}
also shows that $\|\hat{H}_j - \bv_j\bv_j^{\intercal}\|_2 = O_p(\lambda)$.
Hence, the condition in Proposition \ref{supp:element} holds with high probability.

The proof is complete.
\end{proof}

\subsection{Proofs for Section \ref{sec:weak}}

\begin{proof}[Proof of Theorem \ref{thm:rate_block_weak}]
For the proof we  need Proposition \ref{prop:1}, which we assume to hold for now and also Theorem \ref{thm:recovery_I_weak} part ($i$) from which we assume that
$J_2(\bvhat_k) \subseteq J_2$ for $k = 1,  \ldots, j-1$.

As $\beta=0$, we have
$$\tilde{a}_j = 2\lambda w_1 |J_2(\bv_j)| + 2\sum_{k=1}^{j-1} \gamma_k\epsilon_k,$$
and
$$
\tilde{b}_j = 4\sqrt{2}\left\{
\lambda w_1 \left(2 |J_2(\bv_j)| + \sum_{k=1}^{j-1} |J_2(\bvhat_k)|)\right)
+ \frac{1}{2}\sum_{k=1}^{j-1}\gamma_k \epsilon_k
\right\} \epsilon_{j-1} + 4\gamma_j \epsilon_{j-1}^2.
$$

Note that $w_1 = p_0$.
For $j=1$, we have $\tilde{a}_1 \leq 2\lambda p_0 s_2$ and $\tilde{b}_1 =0$ by letting $\epsilon_0 = 0$.
Thus inequality \eqref{d_eb} in Proposition \ref{prop:1}
gives 
$$
\|\bvhat_1 - \bv_1\|_2 \leq 2\sqrt{2} \left(\frac{2\tilde{a}_1}{\gamma_1 - \gamma_2}\right) = O(\lambda p_0 s_2).
$$

Suppose by induction that  $|J_2(\bvhat_k)|\leq s_2 $ (refer to part ($i$) of Theorem \ref{thm:recovery_I_weak}) and $\epsilon_k (k=1,\ldots, j-1)$ are chosen such that $\epsilon_k = O(p_0 s_2 \lambda)$.
Then we derive that
$$
\tilde{a}_j = O(\lambda p_0 s_2),
$$
and
$$
\tilde{b}_j = O( (\lambda p_0 s_2)^2).
$$
Then by inequality \eqref{d_eb}, we have
$$
\|\bvhat_j - \bv_j\|_2 = O(\lambda p_0 s_2).
$$

For Proposition \ref{prop:1} to hold, we just need condition \eqref{eq:cond6}
to hold. The same argument as in the proof of Theorem \ref{thm:rate_block} implies that condition \eqref{eq:cond6} holds with high probability and thus $$
\|\bvhat_j - \bv_j\|_2 = O_p(p_0 s_2 \sqrt{\log p/n}).
$$

The proof is complete.
\end{proof}

\begin{proof}[Proof of Theorem \ref{thm:recovery_I_weak}]
We first prove part $(i)$ via Corollary \ref{coro:block} and it suffices to verify that  $\|E\|_{\infty,\infty}^{\ast}\leq \lambda$ holds with high probability and there exists a sufficiently large $n$ such that
$\gamma_j - \gamma_{j+1} - 4 (\lambda s_2 p_0 + \gamma_1 b_j)>0$ for $1\leq j \leq r$. For the first inequality, Assumption \ref{as:prob_ieq}
implies that $\|E\|_{\infty, \infty}^{\ast} = O_p(\sqrt{\log p/n})$. Hence if $\lambda \gtrapprox  \sqrt{\log p/n}$, then $\|E\|_{\infty,\infty}^{\ast}\leq \lambda$ holds with high probability. As for the second inequality, first note
that 
$$
b_j = \sum_{k=1}^j \frac{8\lambda s_2 p_0 + 8\gamma_k \epsilon_{k-1} + 4\gamma_1 \epsilon_{k-1}^2}{\gamma_k - \gamma_{k+1}}.
$$
Thus by induction and the proof of Theorem \ref{thm:rate_block_weak}, we can show that $b_j = O(\lambda p_0 s_2) $.
So if $\lambda = O(\sqrt{\log p/n})$, by the assumption that $p_0 s_2 \sqrt{\log p/n} = o(1)$, the second inequality holds.

We next prove the support recovery part via Proposition \ref{supp:block}. The proof of Theorem \ref{thm:rate_block_weak}
also shows that $\|\hat{H}_j - \bv_j\bv_j^{\intercal}\|_2 = O_p(p_0 s_2 \sqrt{\log p/n})$.
Hence,  the condition in Proposition \ref{supp:block} holds with high probability.

The proof is complete.
\end{proof}
\begin{proof}[Proof of Theorem \ref{thm:rate_joint_weak}]
For the proof we  need Proposition \ref{prop:1}, which we assume to hold for now and also Theorem \ref{thm:recovery_II_weak} part ($i$) from which we assume that
$J_1(\bvhat_k) \subseteq J_1$ for $k = 1,  \ldots, j-1$.

Note that $w_1 = p_0$.
For $j=1$, we have $\tilde{a}_1  = O(\lambda\beta s_1 + \lambda(1-\beta) p_0 s_2)$ and $\tilde{b}_1 =0$ by letting $\epsilon_0 = 0$.
Thus inequality \eqref{d_eb} in Proposition \ref{prop:1}
gives 
$$
\|\bvhat_1 - \bv_1\|_2 \leq 2\sqrt{2} \left(\frac{2\tilde{a}_1}{\gamma_1 - \gamma_2}\right) = O(s_1\lambda).
$$
Recall that we assume $s_2 p_0/s_1 = O(1)$.

Suppose by induction that  $|J_1(\bvhat_k)|\leq s_1 $ (by part ($i$) of Theorem \ref{thm:recovery_II_weak}) and $\epsilon_k (k=1,\ldots, j-1)$ are chosen such that $\epsilon_k = O(s_1 \lambda)$.
Then we derive that
$$
\tilde{a}_j = O(s_1\lambda),
$$
and
$$
\tilde{b}_j = O( s_1^2 \lambda^2).
$$
Then by inequality \eqref{d_eb}, we have
$$
\|\bvhat_j - \bv_j\|_2 = O(s_1\lambda).
$$

For Proposition \ref{prop:1} to hold, we need condition \eqref{eq:cond6}
to hold. By Assumption \ref{as:prob_ieq}, if $\lambda\gtrapprox \sqrt{\log p/n}$, then condition \eqref{eq:cond6} holds with high probability and thus $$
\|\bvhat_j - \bv_j\|_2 = O_p(s_1\sqrt{\log p/n}).
$$

The proof is complete.
\end{proof}

\begin{proof}[Proof of Theorem \ref{thm:recovery_II_weak}]
We first prove part $(i)$ via Proposition \ref{prop:sparse}.
Assumption \ref{as:prob_ieq} implies that both conditions \eqref{eq:cond3} and \eqref{eq:cond4} hold.

By induction and the proof of Theorem \ref{thm:rate_joint_weak}, we can show that $b_j = O(s_1\lambda) $.
So by the assumption that $s_1 \sqrt{\log p/n} = o(1)$
and $\lambda =O(\sqrt{\log p/n})$, $b_j = o(1)$.
It follows that \eqref{eq:cond5} of Proposition \ref{prop:sparse} also holds. 

We next prove part $(ii)$ via Proposition \ref{supp:element}. The proof of Theorem \ref{thm:rate_joint_weak}
also shows that $\|\hat{H}_j - \bv_j\bv_j^{\intercal}\|_2 = O_p(s_1 \sqrt{\log p/n})$.
Hence,  the condition in Proposition \ref{supp:element} holds with high probability.

The proof is complete.
\end{proof}

\subsection{Auxiliary lemmas}

\begin{lemma}
\label{lem:vhat}
For $1\leq j\leq r$,
\begin{equation*}
\lVert  \hat{\bv}_j\hat{\bv}_j^{\intercal}-\bv_j \bv_j^{\intercal}\rVert_2\leq 2 \lVert  \bv_j\bv_j^{\intercal}-\hat{H}_j \rVert_2.
\end{equation*}
\end{lemma}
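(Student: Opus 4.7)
The plan is to apply the triangle inequality and then reduce the lemma to the variational characterization of the leading eigenvector. Specifically, write
\begin{equation*}
\|\hat{\bv}_j\hat{\bv}_j^{\intercal} - \bv_j\bv_j^{\intercal}\|_2 \leq \|\hat{\bv}_j\hat{\bv}_j^{\intercal} - \hat{H}_j\|_2 + \|\hat{H}_j - \bv_j\bv_j^{\intercal}\|_2,
\end{equation*}
so it suffices to show that the first term on the right is no larger than the second, i.e., that among all rank-one orthogonal projectors, $\hat{\bv}_j\hat{\bv}_j^{\intercal}$ is the Frobenius-closest one to $\hat{H}_j$.

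To establish that, I would observe that for any unit vector $\bu \in \R^p$,
\begin{equation*}
\|\hat{H}_j - \bu\bu^{\intercal}\|_2^2 = \|\hat{H}_j\|_2^2 - 2\bu^{\intercal} \hat{H}_j \bu + 1,
\end{equation*}
because $\hat{H}_j$ is symmetric and $\|\bu\bu^{\intercal}\|_2 = \|\bu\|_2^2 = 1$. Minimizing this expression over unit $\bu$ is equivalent to maximizing the Rayleigh quotient $\bu^{\intercal}\hat{H}_j\bu$, whose maximizer is the leading eigenvector $\hat{\bv}_j$ of $\hat{H}_j$. Since $\bv_j$ is itself a unit vector, $\bv_j\bv_j^{\intercal}$ is a competing rank-one projector and therefore
\begin{equation*}
\|\hat{H}_j - \hat{\bv}_j\hat{\bv}_j^{\intercal}\|_2 \leq \|\hat{H}_j - \bv_j\bv_j^{\intercal}\|_2.
\end{equation*}

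Combining this bound with the triangle inequality yields the desired factor of two. There is no real obstacle here; the only thing to check is that the Frobenius norm $\|\cdot\|_2$ (as defined in the notation of the paper) is the one making the expansion $\|\hat{H}_j - \bu\bu^{\intercal}\|_2^2 = \|\hat{H}_j\|_2^2 - 2\bu^{\intercal}\hat{H}_j\bu + 1$ valid, which it is because the paper's $\|\cdot\|_2$ on matrices is the Frobenius norm and $\hat{H}_j$ is symmetric with $\tr(\hat{H}_j) = 1$ by the Fantope constraint.
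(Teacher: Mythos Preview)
Your proof is correct and follows essentially the same approach as the paper: triangle inequality followed by the Frobenius expansion $\|\hat{H}_j - \bu\bu^{\intercal}\|_2^2 = \|\hat{H}_j\|_2^2 - 2\bu^{\intercal}\hat{H}_j\bu + 1$ and the Rayleigh-quotient characterization of $\hat{\bv}_j$ to conclude $\|\hat{H}_j - \hat{\bv}_j\hat{\bv}_j^{\intercal}\|_2 \leq \|\hat{H}_j - \bv_j\bv_j^{\intercal}\|_2$. One minor remark: the trace condition $\tr(\hat{H}_j)=1$ is not actually needed for that expansion---only symmetry of $\hat{H}_j$ and $\|\bu\|_2=1$ are used.
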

\begin{remark}
If $\hat{H}_j$ has rank 1, then $\hat{H}_j = \hat{\bv}_j \hat{\bv}_j^{\intercal}$.
\end{remark}
\begin{proof}[Proof of Lemma \ref{lem:vhat}]
By the triangle inequality, 
\begin{align*}
\lVert  \hat{\bv}_j\hat{\bv}_j^{\intercal}-\bv_j \bv_j^{\intercal}\rVert_2\leq \lVert \hat{\bv}_j\hat{\bv}_j^{\intercal}-\hat{H}_j  \rVert_2+ \lVert \hat{H}_j-\bv_j \bv_j^{\intercal}   \rVert_2.
\end{align*}
We derive that
\begin{align*}
\lVert \hat{\bv}_j\hat{\bv}_j^{\intercal}-\hat{H}_j  \rVert_2^2&=\operatorname{tr}(\hat{\bv}_j\hat{\bv}_j^{\intercal}\hat{\bv}_j\hat{\bv}_j^{\intercal})+\operatorname{tr}(\hat{H}_j\hat{H}_j^{\intercal})-2\langle \hat{\bv}_j\hat{\bv}_j^{\intercal}, \hat{H}_j  \rangle\\
&=1+\operatorname{tr}(\hat{H}_j\hat{H}_j^{\intercal})-2\hat{\bv}_j^{\intercal}\hat{H}_j\hat{\bv}_j.
\end{align*}
Note that $\hat{\bv}_j=\operatorname*{argmax}_{\lVert \bv \rVert_2=1} \bv^{\intercal}\hat{H}_j \bv$, we have $\hat{\bv}_j^{\intercal}\hat{H}_j\hat{\bv}_j\geq \bv_j^{\intercal}\hat{H}_j\bv_j$, thus 
\begin{align*}
\lVert \hat{\bv}_j\hat{\bv}_j^{\intercal}-\hat{H}_j  \rVert_2^2&\leq1+\operatorname{tr}(\hat{H}_j \hat{H}_j^{\intercal})-2\bv_j\hat{H}_j\bv_j^{\intercal}\\
&=\operatorname{tr}(\bv_j\bv_j^{\intercal}\bv_j\bv_j^{\intercal})+\operatorname{tr}(\hat{H}_j \hat{H}_j^{\intercal})-2\langle  \bv_j\bv_j^{\intercal},\hat{H}_j   \rangle\\
&=\lVert  \bv_j\bv_j^{\intercal}-\hat{H}_j \rVert_2^2.
\end{align*}
The proof is then complete.
\end{proof}

 \begin{lemma}\label{lem:ineq}
For any $\Delta, \Pi \in\R^{p\times p}$,
\begin{align*}
 \|\Delta \|_{1,1}-\|\Delta+\Pi \|_{1,1}+\|\Pi \|_{1,1}&\leq 2\sqrt{|J_1(\Pi)|}\|\Delta \|_2,\\
\|\Delta \|_{1,1}^{\ast}-\|\Delta+\Pi \|_{1,1}^{\ast}+\|\Pi \|_{1,1}^{\ast}&\leq 2\sqrt{\sum_{(k,\ell)\in J_2(\Pi)}w_{k\ell}^2}\|\Delta \|_2.
 \end{align*}
 \end{lemma}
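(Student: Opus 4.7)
The plan is to prove both inequalities by a three-step pattern: decompose along the support of $\Pi$, use the (reverse) triangle inequality on the support, and then apply Cauchy--Schwarz to bound the support-restricted $\ell_1$-type sum by $\|\Delta\|_2$ times the appropriate dimensional factor.

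For the first inequality, I would split the sum defining $\|\cdot\|_{1,1}$ into indices $(s,t)\in J_1(\Pi)^c$, on which $\Pi_{st}=0$ so that $|\Delta_{st}+\Pi_{st}|=|\Delta_{st}|$ and the contributions to $\|\Delta\|_{1,1}$ and $\|\Delta+\Pi\|_{1,1}$ cancel, and indices $(s,t)\in J_1(\Pi)$, on which one reduces to bounding
\[
\sum_{(s,t)\in J_1(\Pi)}\bigl(|\Delta_{st}|-|\Delta_{st}+\Pi_{st}|+|\Pi_{st}|\bigr).
\]
The reverse triangle inequality gives $|\Pi_{st}|-|\Delta_{st}+\Pi_{st}|\le |\Delta_{st}|$, so each summand is at most $2|\Delta_{st}|$. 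Cauchy--Schwarz on $J_1(\Pi)$ then yields
\[
\sum_{(s,t)\in J_1(\Pi)} |\Delta_{st}|\le \sqrt{|J_1(\Pi)|}\,\sqrt{\sum_{(s,t)\in J_1(\Pi)}\Delta_{st}^2}\le \sqrt{|J_1(\Pi)|}\,\|\Delta\|_2,
\]
which gives the claimed bound.

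For the second inequality, I would mimic the same decomposition but at the block level. Splitting the double sum in $\|\cdot\|_{1,1}^{\ast}$ into $(k,\ell)\in J_2(\Pi)^c$ (where $\Pi^{k\ell}=0$, so the contributions cancel) and $(k,\ell)\in J_2(\Pi)$, the LHS reduces to
\[
\sum_{(k,\ell)\in J_2(\Pi)} w_{k\ell}\bigl(\|\Delta^{k\ell}\|_2-\|\Delta^{k\ell}+\Pi^{k\ell}\|_2+\|\Pi^{k\ell}\|_2\bigr).
\]
The reverse triangle inequality for the $\|\cdot\|_2$ norm bounds each parenthesis by $2\|\Delta^{k\ell}\|_2$. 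Then Cauchy--Schwarz applied to the pairs $\bigl(w_{k\ell},\|\Delta^{k\ell}\|_2\bigr)$ over $(k,\ell)\in J_2(\Pi)$ gives
\[
\sum_{(k,\ell)\in J_2(\Pi)}w_{k\ell}\|\Delta^{k\ell}\|_2\le \sqrt{\sum_{(k,\ell)\in J_2(\Pi)}w_{k\ell}^2}\;\sqrt{\sum_{(k,\ell)\in J_2(\Pi)}\|\Delta^{k\ell}\|_2^2}\le \sqrt{\sum_{(k,\ell)\in J_2(\Pi)}w_{k\ell}^2}\,\|\Delta\|_2,
\]
using $\sum_{k,\ell}\|\Delta^{k\ell}\|_2^2=\|\Delta\|_2^2$. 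Multiplying by $2$ completes the proof.

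There is no real obstacle here; the only points of care are (i) getting the direction of the reverse triangle inequality right, and (ii) in the block case, noting that the Frobenius norm squared of $\Delta$ decomposes as the sum of squared Frobenius norms of its sub-blocks, which is what lets Cauchy--Schwarz pass from the blockwise $\ell_2$ sum to $\|\Delta\|_2$.
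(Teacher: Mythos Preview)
Your proof is correct and follows essentially the same approach as the paper: decompose along the (elementwise or blockwise) support of $\Pi$, cancel off-support contributions, bound the on-support summands via the reverse triangle inequality by $2|\Delta_{st}|$ (resp.\ $2\|\Delta^{k\ell}\|_2$), and finish with Cauchy--Schwarz. The paper's proof is line-for-line the same argument.
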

\begin{proof} [Proof of Lemma \ref{lem:ineq}]
For simplicity, let $J_1 = J_1(\Pi)$
and $J_2 = J_2(\Pi)$.
Also define $J_1^c$ and $J_2^c$.
For an arbitrary matrix $B\in \R^{p\times p}$, 
$
\lVert B\rVert_{1,1}= \lVert B_{J_1} \rVert_{1,1}+\lVert B_{J_1^c} \rVert_{1,1}.
$
Thus,
\begin{align*}
&\lVert \Delta\rVert_{1,1}-\lVert \Pi+\Delta\rVert_{1,1}+\lVert \Pi\rVert_{1,1}\\
&=\lVert \Delta_{J_1}\rVert_{1,1}+\lVert \Delta_{J_1^c}\rVert_{1,1}-(\lVert \Delta_{J_1}+\Pi_{J_1}\rVert_{1,1}+\lVert \Delta_{J_1^c}\rVert_{1,1})+\lVert \Pi_{J_1} \rVert_{1,1}\\
&=\lVert \Delta_{J_1}\rVert_{1,1}-\lVert \Delta_{J_1}+\Pi_{J_1}\rVert_{1,1}+\lVert \Pi_{J_1} \rVert_{1,1}\\
&\leq 2\lVert \Delta_{J_1}\rVert_{1,1}
\leq 2\sqrt{|J_1|}\lVert \Delta_{J_1} \rVert_2
\leq 2 \sqrt{|J_1|}\lVert \Delta \rVert_2.
\end{align*}

Next, for an arbitrary  matrix $B\in \R^{p\times p}$,
$$
  \|B\|_{1,1}^{\ast}=  \sum_{k=1}^{I}\sum_{\ell=1}^{I}w_{k\ell}\lVert B^{k\ell} \rVert_2=\sum_{(k,\ell)\in J_2}w_{k\ell}\lVert B^{k\ell}\rVert_2+\sum_{(k,\ell)\in J_2^c}w_{k\ell}\lVert B^{k\ell}\rVert_2.
$$
Then,
\begin{align*}
&\|\Delta \|_{1,1}^{\ast}-\|\Delta+\Pi \|_{1,1}^{\ast}+\|\Pi \|_{1,1}^{\ast}\\
&=\sum_{k=1}^{I}\sum_{l=1}^{I}w_{k\ell}(\lVert \Delta^{k\ell}\rVert_2-\lVert \Delta^{k\ell} +\Pi^{k\ell} \rVert_2+\lVert \Pi^{k\ell}\rVert_2)\\
 &=\sum_{(k,\ell)\in J_2}w_{k\ell}(\lVert \Delta^{k\ell}\rVert_2-\lVert \Delta^{k\ell} +\Pi^{k\ell} \rVert_2+\lVert \Pi^{k\ell}\rVert_2)
 +\sum_{(k,\ell)\in J_2^c}w_{k\ell}(\lVert \Delta^{k\ell}\rVert_2-\lVert \Delta^{k\ell} +\Pi^{k\ell} \rVert_2+\lVert \Pi^{k\ell}\rVert_2)\\
 &=\sum_{(k,\ell)\in J_2}w_{k\ell}(\lVert \Delta^{k\ell}\rVert_2-\lVert \Delta^{k\ell} +\Pi^{k\ell} \rVert_2+\lVert \Pi^{k\ell}\rVert_2)
 +\sum_{(k,\ell)\in J_2^c}w_k(\lVert \Delta^{k\ell}\rVert_2-\lVert \Delta^{k\ell}  \rVert_2)\\
 &=\sum_{(k,\ell)\in J_2}w_{k\ell}(\lVert \Delta^{k\ell}\rVert_2-\lVert \Delta^{k\ell} +\Pi^{k\ell} \rVert_2+\lVert \Pi^{k\ell}\rVert_2)\\
 &\leq 2\sum_{(k,\ell)\in J_2}w_{k\ell}\lVert \Delta^{k\ell}\rVert_2\\
 &\leq 2\sqrt{\sum_{(k,\ell)\in J_2}w_{k\ell}^2}\sqrt{\sum_{(k,\ell)\in J_2}\lVert \Delta^{k\ell}\rVert_2^2}\\
 &=2\sqrt{\sum_{(k,\ell)\in J_2}w_{k\ell}^2}\lVert \Delta \rVert_2.
    \end{align*}
The proof is now complete.
\end{proof}

\begin{lemma}\label{lem:Q}
Let $\Htilde$ be the solution to \eqref{eq:opt3}.
Then $\Htilde$ is  unique and has rank 1.
Let $\bu$ be the eigenvector of $\Htilde$. Then,
there exists an $s_1\times s_1$ orthonormal matrix $Q_j$ such that
    \begin{align*}
        [\bvhat_{1},\ldots,\bvhat_{j-1},\bu] & = \left(\begin{array}{cc}Q_j,&0\\0,&I\end{array}\right)[\bv_{1},\ldots,\bv_{j-1},\bv_j],\\
        \|Q_j-I\|_2 &\leq 
        \sum_{k=1}^{j} \frac{4[2(\lambda\beta s_1+\lambda (1-\beta) s_2 w_1) + 2\gamma_k \epsilon_{k-1} + \gamma_1 \epsilon_{k-1}^2]}{\gamma_k-\gamma_{k+1}},
    \end{align*}
    where $\epsilon_{k-1}$ is any constant such that  $\|\Pihat_{k-1} - \Pi_{k-1}\|_2\leq \epsilon_{k-1}$.
\end{lemma}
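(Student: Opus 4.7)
The plan is to proceed in three phases. First, I would observe that the constraint $J_1(\diag(H)) \subseteq J_1$ combined with $H \succeq 0$ forces every row and column of $\Htilde$ indexed outside $J_1$ to vanish: if $H_{ii}=0$ for a PSD matrix, then $H_{ij}=0$ for all $j$ by positive semidefiniteness of the $2\times 2$ principal minors. Hence the optimization \eqref{eq:opt3} reduces to a Fantope-type problem on the $J_1\times J_1$ block, namely to maximize $\langle S_{J_1 J_1}, \tilde H_{J_1 J_1}\rangle$ minus the two penalties over symmetric PSD matrices of size $s_1\times s_1$ with unit trace and orthogonality to $(\hat\Pi_{j-1})_{J_1 J_1}$.

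Second, to show that $\Htilde$ is rank one and unique, I would follow the KKT-based approach already developed in the paper: express $\Htilde$ as the deflated Fantope projection $\mathcal P_{\F_{\hat\Pi_{j-1}}}(\tilde A)$ of an effective signal $\tilde A = S - \lambda\beta \tilde Z_1 - \lambda(1-\beta)\tilde Z_2$, where $\tilde Z_1,\tilde Z_2$ are the dual variables from the penalties. Part 3 of Lemma \ref{lem:ftp_prop} then guarantees that whenever the top two eigenvalues of $(I-\hat\Pi_{j-1})\tilde A (I-\hat\Pi_{j-1})$ are separated, this projection is rank one, $\Htilde = \bu\bu^{\intercal}$. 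Uniqueness follows from the same elastic-net argument used in the uniqueness portion of Proposition~\ref{prop:sparse}, applied to the reduced $J_1\times J_1$ problem.

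Third, to construct $Q_j$ and bound $\|Q_j-I\|_2$, I would proceed inductively. By the induction hypothesis there exists an orthogonal $Q_{j-1}$ (embedded into $s_1\times s_1$) relating $[\hat\bv_1,\ldots,\hat\bv_{j-1}]$ to $[\bv_1,\ldots,\bv_{j-1}]$. Since $\bu$ lies in the range of $I-\hat\Pi_{j-1}$, is supported on $J_1$, and is unit norm, we can extend $Q_{j-1}$ to an orthogonal $Q_j$ mapping $[\hat\bv_1,\ldots,\hat\bv_{j-1},\bu]$ onto $[\bv_1,\ldots,\bv_{j-1},\bv_j]$. The norm bound would come from a Davis--Kahan $\sin\Theta$ argument comparing the top eigenvector of $(I-\hat\Pi_{j-1})\tilde A(I-\hat\Pi_{j-1})$ to that of $(I-\Pi_{j-1})\Sigma_1(I-\Pi_{j-1})$. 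The perturbation splits into three pieces at stage $k$: the penalty contribution of order $\lambda\beta s_1 + \lambda(1-\beta) s_2 w_1$ from $\tilde Z_1,\tilde Z_2$; a first-order deflation error $\gamma_k\epsilon_{k-1}$ arising from replacing $\Pi_{k-1}$ by $\hat\Pi_{k-1}$ inside the signal part; and a second-order contribution $\gamma_1\epsilon_{k-1}^2$ from the quadratic term in the expansion of $(I-\hat\Pi_{k-1})\Sigma_1(I-\hat\Pi_{k-1}) - (I-\Pi_{k-1})\Sigma_1(I-\Pi_{k-1})$. Dividing each by the local spectral gap $\gamma_k-\gamma_{k+1}$ and telescoping over $k=1,\ldots,j$ yields the displayed sum.

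The main obstacle is the bookkeeping in the induction: at each stage the ``signal'' operator being diagonalized is not $\Sigma_1$ but $\Sigma_1$ conjugated by $I-\hat\Pi_{k-1}$, so rotations $Q_k$ accrue at every step and must be composed consistently into one $Q_j$ with the claimed bound. Controlling the composition cleanly while retaining the additive structure in the sum requires using the unitary invariance of $\|\cdot\|_2$ and expanding $\hat\Pi_{k-1}-\Pi_{k-1}$ to second order rather than settling for the cruder first-order bound; this is the source of the $\gamma_1\epsilon_{k-1}^2$ term and is what makes the proof non-trivial.
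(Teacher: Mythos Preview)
Your proposal is essentially correct and aligns with the paper's argument: the reduction to the $J_1\times J_1$ block via positive semidefiniteness, the rank-one/uniqueness conclusion from the spectral gap of the deflated effective matrix, and the three-way perturbation decomposition (penalty duals, first-order deflation error, second-order deflation error) divided by the local gap are exactly what the paper does, invoking Lemma~\ref{lem:proj} and Weyl's inequality for the gap and Lemma~\ref{lem:curv} (in place of your Davis--Kahan) for the eigenvector bound.

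The one place you make life harder than necessary is the construction of $Q_j$. You set it up inductively, extending $Q_{j-1}$ and then worrying about how the successive rotations compose; this is the ``main obstacle'' you flag. The paper sidesteps this entirely: it builds $Q_j$ in one shot as $Q_j=\hat V V^{\intercal}$, where $V$ and $\hat V$ are $s_1\times s_1$ orthonormal matrices whose first $j$ columns are $[(\bv_1)_{J_1},\ldots,(\bv_j)_{J_1}]$ and $[(\hat\bv_1)_{J_1},\ldots,(\hat\bv_{j-1})_{J_1},\bu_{J_1}]$ respectively, with the remaining $s_1-j$ columns chosen to keep $\|V-\hat V\|_2$ small. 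Then $\|Q_j-I\|_2=\|(\hat V-V)V^{\intercal}\|_2=\|V-\hat V\|_2$, and the sum over $k$ arises simply because each of the first $j$ column differences $\|\hat\bv_k-\bv_k\|_2$ (with $\hat\bv_j$ replaced by $\bu$) is bounded by the stage-$k$ perturbation estimate you already identified. There is no telescoping and no composition of rotations to track; the additive structure is immediate from the columnwise triangle inequality on $\|V-\hat V\|_2$.
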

\begin{proof}[Proof of Lemma \ref{lem:Q}]
Let $\tilde{\Sigma} = (I-\Pihat_{j-1})(S - \lambda\beta \Ztilde_1-\lambda (1-\beta) \Ztilde_2)(I-\Pihat_{j-1})$. Then as we may assume that $J_1(\bvhat_k)\subseteq J_1$ for $k\leq j-1$, it can be shown that $$\tilde{\Sigma}_{J_1J_1} = (I- (\Pihat_{j-1})_{J_1J_1})\left(S - \lambda\beta \Ztilde_1-\lambda (1-\beta) \Ztilde_2\right)_{J_1J_1}(I-(\Pihat_{j-1})_{J_1J_1}).$$ 

We first show that $\Htilde_{J_1J_1}$ is unique and has rank 1, which implies that $\Htilde$ is also unique and has rank 1.
Note that $(\hat{\Pi}_{j-1})_{J_1 J_1}$ remains a projection matrix. Then by Lemma \ref{lem:basic},  $\Htilde_{J_1J_1}$ maximizes $\la \tilde{\Sigma}_{J_1J_1}, H\ra$ over all 
$H\in \F^1$. Here with slight abuse of notation, $\mathcal{F}^1$ contains $s_1\times s_1$ matrices.
Let $\Sigmahat_j = (I-\Pihat_{j-1})\Sigma_1 (I-\Pihat_{j-1})$.
Then,  $\tilde{\Sigma}_{J_1J_1} - (\Sigmahat_{j})_{J_1J_1} = (I - (\Pihat_{j-1})_{J_1J_1}) (E-\lambda\beta\Ztilde_1-\lambda (1-\beta) \Ztilde_2)_{J_1J_1} (I - (\Pihat_{j-1})_{J_1J_1})$.

Thus,
\begin{equation}\label{eq:lem3.1}
\begin{split}
\|\tilde{\Sigma}_{J_1J_1} - (\Sigmahat_j)_{J_1J_1}\|_2
&\leq 
 \|(E-\lambda\beta\Ztilde_1-\lambda (1-\beta) \Ztilde_2)_{J_1J_1}\|_2\\
 & \leq \|E_{J_1J_1} \|_2+ \lambda\beta \| (\Ztilde_1)_{J_1J_1} \|_2 +\lambda (1-\beta)  \|(\Ztilde_2)_{J_1J_1} \|_2
 \end{split}
\end{equation}

On one hand, 
\begin{equation*}
\| E_{J_1J_1}\|_2 \leq s_1 \|E \|_{\infty,\infty}\leq \lambda\beta  s_1.
\end{equation*}
On the other hand,
\begin{equation*}
\begin{split}
\| E_{J_1J_1}\|_2 &\leq \|E^{J_2J_2} \|_2\\
&=\sqrt{\sum_{(k,\ell)\in J_2\times J_2}\|E^{k\ell} \|_2^2}\\
&=\sqrt{\sum_{(k,\ell)\in J_2\times J_2}\left(\frac{\|E^{k\ell} \|_2}{w_{k\ell}}
\right)^2w_{k\ell}^2}\\
&\leq s_2 w_1 \|E \|_{\infty,\infty}^{\ast}\\
&\leq s_2 w_1 \lambda (1-\beta).
\end{split}
\end{equation*}
Thus, we have
$$
\| E_{J_1J_1}\|_2\leq \min(\lambda\beta s_1, \lambda (1-\beta) s_2 w_1).
$$

Next, by the feasibility of $\Ztilde_1$ and $\Ztilde_2$, we have $ \| (\Ztilde_1)_{J_1J_1} \|_2 \leq s_1
 $ and
\begin{equation*}
\begin{split}
      \| (\Ztilde_2)_{J_1J_1} \|_2 & \leq \| (\Ztilde_{2})^{J_2J_2}\|_2 \\
    &=\sqrt{\sum_{(k,\ell)\in J_2\times J_2}\|\Ztilde_2^{k\ell} \|_2^2}\\
    &\leq \sqrt{\sum_{(k,\ell)\in J_2\times J_2}w_{k\ell}^2}\\
    &\leq s_2w_1.
\end{split}
\end{equation*}

Hence, \eqref{eq:lem3.1} becomes
\begin{equation*}
\begin{split}
\|\tilde{\Sigma}_{J_1J_1} - (\Sigmahat_j)_{J_1J_1}\|_2 &\leq \min (\lambda\beta s_1,\lambda (1-\beta) s_2w_1 )+\lambda\beta s_1+\lambda (1-\beta) s_2w_1\\
&\leq 2(\lambda\beta s_1+\lambda (1-\beta) s_2w_1).
\end{split}
\end{equation*}

Let $\Sigma_j = (I-\Pi_{j-1})\Sigma_1 (I-\Pi_{j-1})$. Then,
$\Sigma_j = \sum_{k=j}^r \gamma_k \bv_j\bv_k^{\t}$.
By Lemma \ref{lem:proj}, 
$$\|\Sigmahat_j - \Sigma_j\|_2\leq 2\gamma_j \|\Pihat_{j-1} - \Pi_{j-1}\|_2 + \gamma_1 \|\Pihat_{j-1} - \Pi_{j-1}\|_2^2
= 2\gamma_j \epsilon_{j-1} + \gamma_1 \epsilon_{j-1}^2.$$ 
Hence,
$\|(\Sigmahat_j)_{J_1J_1} - (\Sigma_j)_{J_1J_1}\|_2\leq 
2\gamma_j \epsilon_{j-1} + \gamma_1 \epsilon_{j-1}^2.$ It follows that
$$
 \|\tilde{\Sigma}_{J_1J_1} - (\Sigma_j)_{J_1J_1}\|_2
 \leq 2(\lambda\beta s_1+\lambda (1-\beta) s_2 w_1) + 2\gamma_j \epsilon_{j-1} + \gamma_1 \epsilon_{j-1}^2.
$$

We have $\gamma_1((\Sigma_j)_{J_1J_1}) =\gamma_1(\Sigma_j) =  \gamma_j$ and $\gamma_2((\Sigma_j)_{J_1J_1}) \leq \gamma_2(\Sigma_j) {\leq} \gamma_{j+1}$.
Thus, by Weyl's inequality,
\begin{align*}
    \gamma_1(\tilde{\Sigma}_{J_1J_1})
    - \gamma_2(\tilde{\Sigma}_{J_1J_1})&\ge
    {\gamma_1((\Sigma_j)_{J_1J_1}) - \gamma_2((\Sigma_j)_{J_1J_1})}-2\|(\tilde{\Sigma})_{J_1J_1}-{(\Sigma_1)}_{J_1J_1}\|_{op}\\
&\geq \gamma_j - \gamma_{j+1} - 4(\lambda\beta s_1+\lambda (1-\beta) s_2w_1) - 4\gamma_j \epsilon_{j-1}- 2\gamma_1 \epsilon_{j-1}^2>0.
\end{align*}
It follows that by part 1 of Lemma 1 in \cite{lei2015sparsistency}, $(\Htilde)_{J_1 J_1}$ is unique and has rank 1. 

Next,
by choosing the sign of $\bu$ such that $\la \bu, \bv_j\ra \geq 0$,
we have 
$$
\|\bu - \bv_j\|_2 \leq \|\bu\bu^{\t} - \bv_j\bv_j^{\t}\|_2.
$$
By Lemma \ref{lem:curv},
$$
\|\bu\bu^{\t} - \bv_j\bv_j^{\t}\|_2
\leq \frac{2}{\gamma_j - \gamma_{j+1}}
\|(\tilde{\Sigma})_{J_1J_1}-(\Sigma_j)_{J_1J_1}\|_2
\leq \frac{2[2(\lambda\beta s_1+\lambda (1-\beta) s_2 w_1) + 2\gamma_j \epsilon_{j-1} + \gamma_1 \epsilon_{j-1}^2]}{\gamma_j-\gamma_{j+1}}.
$$
So we have derived that
$$
\|\bu - \bv_j\|_2 \leq
\frac{2[2(\lambda\beta s_1+\lambda (1-\beta) s_2 w_1) + 2\gamma_j \epsilon_{j-1} + \gamma_1 \epsilon_{j-1}^2]}{\gamma_j-\gamma_{j+1}}.
$$

By definition, $\bu$ is orthogonal to $\bvhat_k$ for $k\leq j-1$.  Let
$\bVhat$ be an $s_1\times s_1$ orthonormal matrix such that its first $j$ columns
are $[(\bvhat_1)_{J_1},\ldots, (\bvhat_{j-1})_{J_1},\bu_{J_1}]$. Also let $V$ be an $s_1\times s_1$ orthonormal matrix such that its first $j$ columns are
$ [(\bv_1)_{J_1},\ldots, (\bv_{j-1})_{J_1},(\bv_j)_{J_1}]$. In particular, the last $(s_1-j)$ columns of $\bVhat$ and $V$ can be chosen such that 
$$
\|V - \bVhat \|_2 \leq \sum_{k=1}^{j} \frac{4[2(\lambda\beta s_1+\lambda (1-\beta) s_2 w_1) + 2\gamma_k \epsilon_{k-1} + \gamma_1 \epsilon_{k-1}^2]}{\gamma_k-\gamma_{k+1}}.
$$

Let $Q_j = \bVhat V^{\t}$. Then, $Q_j[(\bv_1)_{J_1},\ldots, (\bv_j)_{J_1}] = [(\bvhat_1)_{J_1},\ldots, \bvhat_{j-1}, \bu_{J_1}]$
and 
$$
\|I - Q_j\|_2  = \|(V-\bVhat)V^{\t}\|_2 = \|V - \bVhat \|_2
$$
and the proof is complete.
\end{proof}

\begin{lemma}
\label{lem:basic}
Let $A$ be a symmetric and positive semi-definite matrix
and $B=VV^{\t}$ be a projection matrix of the same dimension. Denote by $\gamma_1$ and $\gamma_2$ the two largest eigenvalues of $(I-B)A(I-B)$ with associated eigenvectors $\bv_1$ and $\bv_2$.
Then, $$\max_{H\in\F_{B}}\la A, H\ra = \max_{H\in \F^1} \la (I-B)A(I-B), H\ra = \gamma_1$$ and the maximum of both is achieved by $\bv_1\bv_1^{\t}$. Moreover, the maximizer is unique if and only if $\gamma_1 > \gamma_2$.
\end{lemma}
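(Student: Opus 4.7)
The plan is to reduce the constrained problem on $\F_B$ to the standard Fantope problem on $\F^1$ applied to the matrix $(I-B)A(I-B)$, and then invoke the classical characterization of the top eigenvalue as a Fantope maximum.

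First I would record the classical Fantope identity: for any symmetric $M = \sum_i \mu_i \bu_i\bu_i^{\t}$ with $\mu_1 \geq \mu_2 \geq \cdots$, and any $H \in \F^1$, the numbers $\alpha_i := \bu_i^{\t}H\bu_i$ lie in $[0,1]$ (because $0 \preceq H \preceq I$) and satisfy $\sum_i \alpha_i = \tr(H) = 1$. Hence $\la M, H\ra = \sum_i \mu_i \alpha_i \leq \mu_1$, with equality forcing all mass onto the top eigenspace; the maximum is $\mu_1$, is attained at $\bu_1\bu_1^{\t}$, and is unique iff $\mu_1 > \mu_2$. Applied to $M = (I-B)A(I-B)$ this immediately gives $\max_{H\in\F^1} \la (I-B)A(I-B), H\ra = \gamma_1$, attained at $\bv_1\bv_1^{\t}$, with uniqueness iff $\gamma_1 > \gamma_2$.

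Next I would prove the identity $H = (I-B)H(I-B)$ for every $H \in \F_B$, which is the crux of the argument. Writing $B = VV^{\t}$, the constraint $\la H, B\ra = 0$ reads $\tr(V^{\t}HV) = 0$; since $H \succeq 0$ implies $V^{\t}HV \succeq 0$, this forces $V^{\t}HV = 0$. Writing $H = U^{\t}U$, the equality $\tr((UV)^{\t}(UV)) = \tr(V^{\t}HV) = 0$ yields $UV = 0$ and hence $HV = U^{\t}(UV) = 0$. Therefore $BH = HB = 0$ and $(I-B)H(I-B) = H$, so $\la A, H\ra = \la (I-B)A(I-B), H\ra$ for every $H \in \F_B$. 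Combining with $\F_B \subseteq \F^1$ gives $\max_{H\in\F_B}\la A, H\ra \leq \gamma_1$. For the reverse inequality, whenever $\gamma_1 > 0$ the relation $(I-B)A(I-B)\bv_1 = \gamma_1 \bv_1$ places $\bv_1$ in $\textnormal{range}(I-B) = \ker(B)$; consequently $\bv_1\bv_1^{\t} \in \F_B$ and attains $\gamma_1$ in both problems. (The degenerate case $\gamma_1 = 0$ forces $(I-B)A(I-B) = 0$, both optima equal $0$, and any top eigenvector may be chosen in $\ker(B)$.)

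For the uniqueness claim, any $\F_B$-maximizer must also maximize $\la (I-B)A(I-B), \cdot\ra$ over the larger set $\F^1$ by the trace identity, so uniqueness in $\F_B$ is inherited from the standard Fantope result; conversely, if $\gamma_1 = \gamma_2$, then any unit vector $\bu \in \textnormal{span}(\bv_1,\bv_2) \subseteq \ker(B)$ produces a distinct maximizer $\bu\bu^{\t} \in \F_B$. The only step that needs genuine work is the identity $H = (I-B)H(I-B)$ on $\F_B$; the rest is bookkeeping on top of the classical Fantope relaxation.
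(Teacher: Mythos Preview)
Your proof is correct. The approach differs from the paper's in a useful way: the paper introduces an orthogonal complement basis $U$ of $V$, shows that $\F_B = \{UGU^{\t}: G\in\F^1\}$ (where the $\F^1$ on the right is the Fantope of size $p-d$), and thereby rewrites $\max_{H\in\F_B}\la A,H\ra$ as $\max_{G\in\F^1}\la U^{\t}AU,G\ra$; it then identifies the eigenpairs of $U^{\t}AU$ with those of $(I-B)A(I-B)=U(U^{\t}AU)U^{\t}$ via Ky Fan. You instead work intrinsically: from $\la H,B\ra=0$ and $H\succeq 0$ you extract $HB=BH=0$, hence $H=(I-B)H(I-B)$ and $\la A,H\ra=\la (I-B)A(I-B),H\ra$ on $\F_B$, so the two maxima coincide once you check $\bv_1\bv_1^{\t}\in\F_B$. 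Your route avoids the explicit change of coordinates and is arguably cleaner; the paper's parametrization, on the other hand, makes the bijection between $\F_B$ and a lower-dimensional Fantope explicit, which can be handy elsewhere (e.g.\ in Lemma~\ref{df_ftp}). One cosmetic point: you reuse the letter $U$ for the factorization $H=U^{\t}U$, which clashes with the paper's use of $U$ for the complement basis; writing $H=H^{1/2}H^{1/2}$ would avoid confusion.
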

\begin{proof}
Let $U$ be a matrix that forms an orthogonal complement basis of $V$. First, we prove for $ \forall H\in \F_{B}$, there exists $G\in \F^1$, such that $H=UGU^{\t}$. Suppose that $H$ has spectral decomposition $H=\tilde{U}D\tilde{U}^{\t}$. 
By the orthogonality constraint
\begin{equation*}
\la H,B \ra = 0,
\end{equation*}
$\tilde{U}$ should satisfy $\tilde{U}^{\t}V = 0$. Or equivalently, $\tilde{U}$ lies in the subspace spanned by the columns of $U$. Namely, $\exists \tilde{A}, \text{s.t. } \tilde{U} = U\tilde{A}$. Hence,
\begin{equation}
\label{eq:HG}
\begin{split}
    H = \tilde{U} D \tilde{U}^{\t} = U (\tilde{A}D\tilde{A}^{\t}) U^{\t}.
\end{split}
\end{equation}
Denote $G=\tilde{A}D\tilde{A}^{\t}$, then \eqref{eq:HG} implies that $H$ and $G$ has the same set of eigenvalues. According to the fact that $H\in \F_B\subset \F^1$, we have $ 0 \preceq G\preceq I_p \text{ and } \tr(G) = 1$, which suggests that $G\in \F^1$. In the other way around, it is obvious that $\forall G \in \F^1$, $UGU^{\t}\in \F_B$. Hence,
\begin{equation*}
    \max_{H\in\F_{B}}\la A, H\ra = \max_{G\in \F^1} \la A, UGU^{\t}\ra = \max_{G\in \F^1} \la U^{\t}AU, G\ra. 
\end{equation*}
By Ky Fan's maximum principle (\cite{fan1949theorem}),
\begin{equation*}
    \max_{G\in \F^1} \la U^{\t}AU, G\ra = \la U^{\t}AU, \boldsymbol{g}_1\boldsymbol{g}_1^{\t} \ra = \tilde{\gamma}_1,
\end{equation*}
where $(\tilde{\gamma}_1,\boldsymbol{g}_1)$ is the leading eigenvalue-eigenvector pair of $U^{\t}AU$.
Similarly, 
\begin{equation*}
   \max_{H\in \F^1} \la (I-B)A(I-B), H\ra =  \la (I-B)A(I-B), \bv_1\bv_1^{\t}\ra = \gamma_1,
\end{equation*}
where $(\gamma_1,\bv_1)$ is the leading eigenvalue-eigenvector pair of $(I-B)A(I-B)$.

Since $I-B=UU^{\t}$, we have
\begin{equation*}
 (I-B)A(I-B)=U(U^{\t}AU)U^{\t},
\end{equation*}
which implies $\gamma_1=\tilde{\gamma}_1$ and  $\bv_1=U\boldsymbol{g}_1$, due to the fact $U^{\t}U=I$.
Therefore, the maximizer of 
$\max_{G\in \F^1} \la A, UGU^{\t}\ra$ is
\begin{equation*}
    U(\boldsymbol{g}_1\boldsymbol{g}_1^{\t})U^{\t} = U\boldsymbol{g}_1 (U\boldsymbol{g}_1)^{\t} = \bv_1\bv_1^{\t}.
\end{equation*}

\end{proof}

\begin{lemma}
\label{lem:proj}
Let $\Sigma_j = (I - \Pi_{j-1})\Sigma_1 (I - \Pi_{j-1})$, { where $\Pi_{j-1}=\sum_{k=1}^{j-1}\bv_k \bv_k^{\t}$}, and
$\Sigmahat_j = (I-\Pihat_{j-1})\Sigma_1(I-\Pihat_{j-1})$. Then,
\begin{align*}
     \|\Sigmahat_j - \Sigma_j\|_2&\leq 
     2\gamma_{j}\|\Pi_{j-1} - \Pihat_{j-1}\|_2 +  \gamma_1 \|\Pi_{j-1} - \Pihat_{j-1}\|_2^2.
\end{align*}
\end{lemma}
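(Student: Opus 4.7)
The plan is to expand the difference $\hat\Sigma_j - \Sigma_j$ around the perturbation $\Delta := \hat\Pi_{j-1} - \Pi_{j-1}$, then bound each resulting term by exploiting the spectral structure of $(I-\Pi_{j-1})\Sigma_1$.

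First I would substitute $\hat\Pi_{j-1} = \Pi_{j-1} + \Delta$ into $\hat\Sigma_j = (I-\hat\Pi_{j-1})\Sigma_1(I-\hat\Pi_{j-1})$ and expand. The two ``pure'' copies of $(I-\Pi_{j-1})\Sigma_1(I-\Pi_{j-1})$ cancel, leaving
\begin{equation*}
\hat\Sigma_j - \Sigma_j = -(I-\Pi_{j-1})\Sigma_1\Delta - \Delta\Sigma_1(I-\Pi_{j-1}) + \Delta\Sigma_1\Delta.
\end{equation*}
By the triangle inequality for $\|\cdot\|_2$, it suffices to bound each of the three terms separately.

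For the two cross terms, the key observation is that since $\Sigma_1 = \sum_{k=1}^r \gamma_k\bv_k\bv_k^{\intercal}$ and $\Pi_{j-1} = \sum_{k=1}^{j-1}\bv_k\bv_k^{\intercal}$, orthonormality of the eigenvectors gives $(I-\Pi_{j-1})\Sigma_1 = \sum_{k=j}^r \gamma_k\bv_k\bv_k^{\intercal}$, whose operator norm is exactly $\gamma_j$. Applying the standard submultiplicative inequality $\|AB\|_2 \leq \|A\|_{op}\|B\|_2$ then yields $\|(I-\Pi_{j-1})\Sigma_1\Delta\|_2 \leq \gamma_j\|\Delta\|_2$, and symmetrically for the other cross term, accounting for the $2\gamma_j\|\Delta\|_2$ contribution.

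For the quadratic term, I would apply the submultiplicativity twice: $\|\Delta\Sigma_1\Delta\|_2 \leq \|\Delta\|_{op}\|\Sigma_1\|_{op}\|\Delta\|_2$. Using $\|\Sigma_1\|_{op} = \gamma_1$ together with the elementary inequality $\|\Delta\|_{op}\leq \|\Delta\|_2$ (spectral norm dominated by Frobenius norm) yields $\|\Delta\Sigma_1\Delta\|_2 \leq \gamma_1\|\Delta\|_2^2$, giving the stated bound after combining the three estimates. No step here is genuinely delicate; the only conceptual point is recognizing that the linear perturbation should be bounded using $\|(I-\Pi_{j-1})\Sigma_1\|_{op} = \gamma_j$ rather than $\|\Sigma_1\|_{op} = \gamma_1$, which is what tightens the linear coefficient from $\gamma_1$ to $\gamma_j$.
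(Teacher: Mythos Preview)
Your proof is correct and follows essentially the same approach as the paper: both expand $\hat\Sigma_j - \Sigma_j$ in the perturbation $\Delta$ (the paper uses $D = -\Delta$), recognize that $(I-\Pi_{j-1})\Sigma_1 = \Sigma_j$ has operator norm $\gamma_j$ to handle the linear terms, and bound the quadratic term $\Delta\Sigma_1\Delta$ separately. Your treatment of the quadratic term via $\|\Delta\Sigma_1\Delta\|_2 \le \|\Delta\|_{op}\|\Sigma_1\|_{op}\|\Delta\|_2 \le \gamma_1\|\Delta\|_2^2$ is in fact more direct than the paper's trace manipulation, but the decomposition and the key observation about the $\gamma_j$ coefficient are identical.
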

\begin{proof} 
To simplify the proof, denote $\Pi_{j-1} - \Pihat_{j-1}$ by $D$.
First,
\begin{align*}
\Sigmahat_j
=& (I-\Pi_{j-1})\Sigma_1(I-\Pi_{j-1})
+ D\Sigma_1(I-\Pi_{j-1})
+ (I - \Pi_{j-1})\Sigma_1 D
+ D\Sigma_1 D\\
=&\Sigma_j +  D\Sigma_j
+ \Sigma_j D
+ D \Sigma_1 D,
\end{align*}
{since $(I - \Pi_{j-1})\Sigma_1\Pi_{j-1}=0$.}

We have
$\|D\Sigma_j\|_2^2 = \tr(D\Sigma_j^2 D) \leq \|\Sigma_j\|_{op}^2 \|D\|_2^2 = \gamma_{j}^2 \|D\|_2^2$. 
Thus, $\|D\Sigma_j\|_2 \leq \gamma_{j} \|D\|_2$. Similarly, $\|\Sigma_j D\|_2 \leq \gamma_{j}\|D\|_2$. 

{ We have $\|D\Sigma_1 D\|_2^2 = \tr(D\Sigma_1 D^2 \Sigma_1 D) = \tr(\Sigma_1^{1/2} D^2\Sigma_1 D^2 \Sigma_1^{1/2})$.
Note that $\Sigma_1^{1/2} D^2\Sigma_1 D^2 \Sigma_1^{1/2}\preceq \gamma_1 \Sigma_1^{1/2} D^4 \Sigma_1^{1/2}$. Thus,
$\|D\Sigma_1 D\|_2^2 \leq \gamma_1 \tr(\Sigma_1^{1/2}D^4 \Sigma_1^{1/2}) = \gamma_1 \tr(D^2\Sigma_1 D^2)
\leq \gamma_1^2 \tr(D^4) = \gamma_1^2 \|D^2\|_2^2\leq \gamma_1^2 \|D\|_2^4$. 
We have thus got,
$\|D\Sigma_1 D\|_2 \leq \gamma_1 \|D\|_2^2$.}
Therefore,
\begin{align*}
    \|\Sigmahat_j - \Sigma_j\|_2
    \leq 2\gamma_{j}\|D\|_2 +  \gamma_1 \|D\|_2^2.
\end{align*}
The proof is now complete.
\end{proof}

Lemma \ref{lem:Pihat} below  confirms that the estimated eigenvectors are indeed mutually orthogonal.
\begin{lemma}
\label{lem:Pihat}
For $1\leq i, j\leq r$,
$\hat{\bv}_i^{\t}\hat{\bv}_j = \I(i=j)$.
\end{lemma}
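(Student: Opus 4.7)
The plan is to exploit the deflated Fantope constraint in a direct way: by construction each $\hat{H}_j$ lies in $\mathcal{F}_{\hat{\Pi}_{j-1}}$, so it is positive semidefinite, satisfies $\hat{H}_j \preceq I_p$ and $\tr(\hat{H}_j)=1$, and also satisfies $\langle \hat{H}_j,\hat{\Pi}_{j-1}\rangle = 0$. I will proceed by induction on $j$, the base case $j=1$ being trivial since $\hat{\bv}_1$ is defined as a unit eigenvector of $\hat{H}_1$.

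For the inductive step, suppose $\hat{\bv}_1,\ldots,\hat{\bv}_{j-1}$ are orthonormal, so that $\hat{\Pi}_{j-1}=\sum_{k=1}^{j-1}\hat{\bv}_k\hat{\bv}_k^{\t}$ is a rank-$(j-1)$ projection matrix. The first step is to unpack the orthogonality condition
\[
0 = \langle \hat{H}_j,\hat{\Pi}_{j-1}\rangle = \sum_{k=1}^{j-1}\hat{\bv}_k^{\t}\hat{H}_j\hat{\bv}_k.
\]
Because $\hat{H}_j\succeq 0$, every summand is nonnegative, hence each $\hat{\bv}_k^{\t}\hat{H}_j\hat{\bv}_k=0$. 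Applying the Cauchy--Schwarz inequality to $\hat{H}_j^{1/2}\hat{\bv}_k$ (or equivalently noting that $\bx^{\t}A\bx=0$ with $A\succeq 0$ forces $A\bx=0$), this yields $\hat{H}_j\hat{\bv}_k = 0$ for every $k=1,\ldots,j-1$. Thus $\hat{\bv}_1,\ldots,\hat{\bv}_{j-1}$ all lie in the null space of $\hat{H}_j$.

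The second step is to observe that the leading eigenvalue of $\hat{H}_j$ is strictly positive: since $\tr(\hat{H}_j)=1$ and $\hat{H}_j\succeq 0$, we must have $\lambda_1(\hat{H}_j)\geq 1/p>0$. Therefore $\hat{\bv}_j$ lies in the positive eigenspace of $\hat{H}_j$, which is orthogonal to its null space. Combined with the previous step, this gives $\hat{\bv}_j^{\t}\hat{\bv}_k=0$ for all $k<j$, and $\|\hat{\bv}_j\|_2=1$ holds by definition of eigenvector. This completes the induction.

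I do not foresee a serious obstacle: the only subtlety is confirming that $\lambda_1(\hat{H}_j)>0$ so that the top eigenvector is genuinely in the positive eigenspace, but the Fantope trace-one constraint handles this cleanly. Note also that the argument does not require $\hat{H}_j$ to be rank one, only that it is PSD with positive leading eigenvalue and satisfies the deflation constraint; this is important because in general $\hat{H}_j$ need not be exactly rank one for the lemma to apply.
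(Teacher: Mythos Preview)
Your proof is correct and follows essentially the same approach as the paper's: both exploit $\hat H_j\succeq 0$, $\tr(\hat H_j)=1$, and $\langle \hat H_j,\hat\Pi_{j-1}\rangle=0$ to force $\hat{\bv}_j\perp\hat{\bv}_k$ for $k<j$. The paper reaches the conclusion via the inequality $\hat H_j\succeq \gamma_1^\ast\hat{\bv}_j\hat{\bv}_j^{\t}$ rather than first showing $\hat H_j\hat{\bv}_k=0$, but the argument is the same in substance, and your version is actually more explicit about why the leading eigenvalue is strictly positive.
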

\begin{proof}[Proof of Lemma \ref{lem:Pihat}]
Fix $j$ with $2\leq j\leq r$, we now prove
\begin{equation}\label{eq:vj}
  \hat{\bv}_j^{\intercal}\hat{\bv}_i=0,\,   \forall i=1,...,j-1.  
\end{equation}
Let $\gamma_1^{\ast}$ be the eigenvalue of $\hat{H}_j$ corresponding
to the leading eigenvector $\hat{\bv}_j$.
Then $\hat{H}_j \geq \gamma_1^{\ast}\hat{\bv}_j\hat{\bv}_j^{\t}$ and
$$
\langle \hat{\Pi}_{j-1}, \gamma_1^{\ast}\hat{\bv}_j\hat{\bv}_j^{\t}
\rangle
\leq 
\langle \hat{\Pi}_{j-1}, \hat{H}_j \rangle = 0.
$$
Thus,
$$0 = \langle \hat{\Pi}_{j-1}, \gamma_1^{\ast}\hat{\bv}_j\hat{\bv}_j^{\t}
\rangle = \gamma_1^{\ast}
\hat{\bv}_j^{\t}\hat{\Pi}_{j-1}\hat{\bv}_j.$$
It follows that \eqref{eq:vj} holds and the proof of the lemma is complete. 
\end{proof}

The following is Lemma C.3. in \cite{chen2015localized}.
\begin{lemma}\label{lem:Hv}
Let $\bu$ and $\bv$ be unit vectors of the same length. Then,
\begin{equation}
\frac{1}{\sqrt{2}}\| \bu - \bv\|_2
\leq \|\bu\bu^{\t}-\bv\bv^{\t}\|_2
\leq \sqrt{2}\|\bu-\bv\|_2.
\end{equation}
\end{lemma}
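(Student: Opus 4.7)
The plan is to prove both inequalities by reducing everything to explicit formulas in the single scalar $\bu^{\t}\bv$. Since $\bu\bu^{\t} - \bv\bv^{\t}$ is symmetric, I would compute
\begin{equation*}
\|\bu\bu^{\t}-\bv\bv^{\t}\|_2^2 = \tr\!\bigl((\bu\bu^{\t}-\bv\bv^{\t})^2\bigr) = \|\bu\|_2^4 + \|\bv\|_2^4 - 2(\bu^{\t}\bv)^2 = 2\bigl(1-(\bu^{\t}\bv)^2\bigr),
\end{equation*}
using $\|\bu\|_2=\|\bv\|_2=1$. In parallel,
\begin{equation*}
\|\bu-\bv\|_2^2 = 2 - 2\bu^{\t}\bv = 2(1-\bu^{\t}\bv).
\end{equation*}
Factoring $1-(\bu^{\t}\bv)^2 = (1-\bu^{\t}\bv)(1+\bu^{\t}\bv)$, the two identities combine into the single key relation
\begin{equation*}
\|\bu\bu^{\t}-\bv\bv^{\t}\|_2^2 = (1+\bu^{\t}\bv)\,\|\bu-\bv\|_2^2.
\end{equation*}

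Both bounds then follow from controlling the scalar $1+\bu^{\t}\bv$. For the upper bound, the Cauchy--Schwarz inequality gives $\bu^{\t}\bv \le \|\bu\|_2\|\bv\|_2 = 1$, so $1+\bu^{\t}\bv \le 2$, yielding $\|\bu\bu^{\t}-\bv\bv^{\t}\|_2 \le \sqrt{2}\,\|\bu-\bv\|_2$. For the lower bound, I would use the standard sign convention under which the lemma is applied in the paper (namely $\bu^{\t}\bv \ge 0$, obtained by flipping the sign of $\bu$ if necessary), so that $1+\bu^{\t}\bv \ge 1 \ge \tfrac{1}{2}$, which gives $\|\bu\bu^{\t}-\bv\bv^{\t}\|_2 \ge \tfrac{1}{\sqrt{2}}\|\bu-\bv\|_2$.

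The only subtle point is the lower bound: without any sign restriction, taking $\bv=-\bu$ makes the left-hand side zero while $\|\bu-\bv\|_2=2$, so the bound must be understood modulo a choice of sign for one of the vectors (equivalently, the bound holds whenever $\bu^{\t}\bv \ge -1/2$, a mild condition automatically satisfied in each invocation of the lemma in Proposition~\ref{prop:1} and Lemma~\ref{lem:Q}, where $\tv_j$ and $\bu$ are constructed to be close to $\bv_j$). Beyond this bookkeeping about signs, the proof is purely algebraic; there is no real obstacle.
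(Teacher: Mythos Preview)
The paper does not supply its own proof of this lemma; it simply cites Lemma C.3 of \cite{chen2015localized}. Your derivation is the standard one and is correct: the identity $\|\bu\bu^{\t}-\bv\bv^{\t}\|_2^2 = (1+\bu^{\t}\bv)\,\|\bu-\bv\|_2^2$ immediately yields the upper bound, and your observation about the lower bound is exactly right---as written, the lemma's lower inequality is false (take $\bv=-\bu$), and one must read it under the convention $\bu^{\t}\bv\ge 0$, which is precisely how the paper invokes it (explicitly in the proof of Lemma~\ref{lem:Q}, and implicitly elsewhere when passing from $\|\Hhat_j-\bv_j\bv_j^{\t}\|_2$ to $\|\bvhat_j-\bv_j\|_2$). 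Under that convention one in fact gets the sharper bound $\|\bu-\bv\|_2\le \|\bu\bu^{\t}-\bv\bv^{\t}\|_2$, which the paper itself uses in Lemma~\ref{lem:Q}.
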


\begin{lemma}
\label{lem:vtilde}
Let $\Pi_{j-1}=\sum_{k=1}^{j-1}\bv_k \bv_k^{\t}$
and $\hat{\Pi}_{j-1}=\sum_{k=1}^{j-1}\bvhat_k \bvhat_k^{\t}$.
Define
$\tilde{\bv}_j=\frac{(I-\hat{\Pi}_{j-1})\bv_j}{\lVert I-\hat{\Pi}_{j-1})\bv_j \rVert_2}$
for $j=1,\dots, r$.
Then,
\begin{itemize}
     \item[(a).] $ \lVert\tilde{\bv}_j-\bv_j \rVert_2\leq 2 \lVert \hat{\Pi}_{j-1}-\Pi_{j-1}    \rVert_2$;
     \item[(b).] $\tilde{\bv}_j\tilde{\bv}_j^{\intercal}\in \mathcal{F}_{\hat{\Pi}_{j-1}}$;
     \item[(c).] $|J_1(\tilde{\bv}_j\tilde{\bv}_j^{\intercal})|\leq (|J_1(\bv_j)| + \sum_{k=1}^{j - 1} |J_1(\hat{\bv}_k)| )^2$;
     \item[(d).] $|J_2(\tilde{\bv}_j\tilde{\bv}_j^{\intercal})|\leq \left( |J_2 (\bv_j \bv_j^T)| + \sum_{k=1}^{j-1}|J_2(\hat{\bv}_k\hat{\bv}_k^T)|\right)^2$.
\end{itemize}
\end{lemma}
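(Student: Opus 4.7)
The plan is to prove the four parts of Lemma \ref{lem:vtilde} separately, using only the definitions of $\tilde{\bv}_j$, the deflated Fantope $\F_{\hat{\Pi}_{j-1}}$, and the fact that $\bv_j$ is orthogonal to $\bv_1,\dots,\bv_{j-1}$, hence $\Pi_{j-1}\bv_j = 0$.

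For part (a), I will rewrite the squared distance as $\|\tilde{\bv}_j - \bv_j\|_2^2 = 2 - 2\la \tilde{\bv}_j, \bv_j\ra$. Since $I - \hat{\Pi}_{j-1}$ is a projection, a direct computation gives $\la \tilde{\bv}_j, \bv_j\ra = \bv_j^{\intercal}(I-\hat{\Pi}_{j-1})\bv_j/\|(I-\hat{\Pi}_{j-1})\bv_j\|_2 = \|(I-\hat{\Pi}_{j-1})\bv_j\|_2$. Using the identity $1 - \|(I-\hat{\Pi}_{j-1})\bv_j\|_2 = \|\hat{\Pi}_{j-1}\bv_j\|_2^2/(1+\|(I-\hat{\Pi}_{j-1})\bv_j\|_2) \leq \|\hat{\Pi}_{j-1}\bv_j\|_2^2$, and exploiting $\Pi_{j-1}\bv_j=0$ so that $\hat{\Pi}_{j-1}\bv_j = (\hat{\Pi}_{j-1}-\Pi_{j-1})\bv_j$, I obtain $\|\tilde{\bv}_j-\bv_j\|_2^2 \leq 2\|\hat{\Pi}_{j-1}-\Pi_{j-1}\|_2^2$, which yields the bound (in fact a tighter $\sqrt{2}$ constant, so the stated bound $2\|\hat{\Pi}_{j-1}-\Pi_{j-1}\|_2$ follows a fortiori).

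Part (b) is immediate: $\tilde{\bv}_j\tilde{\bv}_j^{\intercal}$ is a rank-one projection, so $0 \preceq \tilde{\bv}_j\tilde{\bv}_j^{\intercal} \preceq I_p$ and $\tr(\tilde{\bv}_j\tilde{\bv}_j^{\intercal}) = 1$, giving membership in $\F^1$. Moreover, because $\hat{\Pi}_{j-1}(I-\hat{\Pi}_{j-1}) = 0$, one has $\hat{\Pi}_{j-1}\tilde{\bv}_j = 0$, hence $\la \tilde{\bv}_j\tilde{\bv}_j^{\intercal}, \hat{\Pi}_{j-1}\ra = \tilde{\bv}_j^{\intercal}\hat{\Pi}_{j-1}\tilde{\bv}_j = 0$, completing membership in $\F_{\hat{\Pi}_{j-1}}$.

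For parts (c) and (d), I will use the key observation that for any vector $\bu$, the element-wise (resp.\ block-wise) support of the rank-one matrix $\bu\bu^{\intercal}$ satisfies $|J_1(\bu\bu^{\intercal})| = |J_1(\bu)|^2$ (resp.\ $|J_2(\bu\bu^{\intercal})| = |J_2(\bu)|^2$), since the $(s,t)$ entry is nonzero precisely when both $u_s$ and $u_t$ are nonzero. Writing $\tilde{\bv}_j$ as a scalar multiple of $(I-\hat{\Pi}_{j-1})\bv_j = \bv_j - \sum_{k=1}^{j-1}(\hat{\bv}_k^{\intercal}\bv_j)\hat{\bv}_k$, the support of $\tilde{\bv}_j$ (both element-wise and block-wise) is contained in the union of the supports of $\bv_j$ and $\hat{\bv}_1,\dots,\hat{\bv}_{j-1}$. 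The subadditivity of cardinality under unions then yields $|J_1(\tilde{\bv}_j)| \leq |J_1(\bv_j)| + \sum_{k=1}^{j-1}|J_1(\hat{\bv}_k)|$ and analogously for $J_2$. Squaring gives the claimed bounds; for (d) I note that $|J_2(\bv_j)| \leq |J_2(\bv_j\bv_j^{\intercal})|$ and similarly for $\hat{\bv}_k$, so the form of the stated bound follows. No part of this argument should be subtle; the whole lemma is a collection of elementary identities about projections and supports, with (a) being the only computation of any substance.
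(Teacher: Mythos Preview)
Your proposal is correct and, for parts (b)--(d), essentially matches the paper's proof: expand $\tilde{\bv}_j$ as a scalar multiple of $\bv_j - \sum_{k=1}^{j-1}(\hat{\bv}_k^{\intercal}\bv_j)\hat{\bv}_k$ and read off the support inclusion, while (b) is the immediate projection computation.

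For part (a), you take a different and in fact sharper route than the paper. The paper applies the generic inequality $\|\bu/\|\bu\|_2 - \bv/\|\bv\|_2\|_2 \leq 2\|\bu - \bv\|_2/(\|\bu\|_2 \vee \|\bv\|_2)$ with $\bu = (I-\hat{\Pi}_{j-1})\bv_j$ and $\bv = \bv_j$, which directly gives the constant $2$. Your argument exploits the specific projection structure---namely that $\la \tilde{\bv}_j,\bv_j\ra = \|(I-\hat{\Pi}_{j-1})\bv_j\|_2$---to obtain $\|\tilde{\bv}_j - \bv_j\|_2 \leq \sqrt{2}\,\|\hat{\Pi}_{j-1}-\Pi_{j-1}\|_2$, improving the constant. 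The paper's approach is more portable (it works for any normalized perturbation), while yours extracts the extra factor available here because $I-\hat{\Pi}_{j-1}$ is idempotent. Both implicitly use that $\hat{\Pi}_{j-1}$ is a genuine projection, which follows from the orthonormality of the $\hat{\bv}_k$'s (Lemma~\ref{lem:Pihat} in the paper).
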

\begin{proof}[Proof of Lemma \ref{lem:vtilde}]
We first prove (a).
Note that for all $\bu,\bv\in \R^{p}$,
\begin{align*} 
\left\| \frac{\bu}{\lVert  \bu   \rVert_2} -\frac{\bv}{\lVert \bv   \rVert_2}   \right\| \leq 2\frac{\lVert \bu-\bv    \rVert_2}{\lVert  \bu   \rVert_2\vee \lVert \bv   \rVert_2},
\end{align*}
and also
\begin{align*}
\lVert (I-\hat{\Pi}_{j-1})\bv_j   \rVert_2
\leq \lVert I-\hat{\Pi}_{j-1}   \rVert_{op}\leq 1.
\end{align*}
Hence,
\begin{align*}
\lVert  \tilde{\bv}_j-\bv_j   \rVert_2&= \left\|  \frac{(I-\hat{\Pi}_{j-1})\bv_j}{\lVert (I-\hat{\Pi}_{j-1})\bv_j    \rVert_2}-\frac{\bv_j}{\lVert \bv_j \rVert_2}\right\| _2\nonumber\\
&\leq 2\frac{\lVert (I-\hat{\Pi}_{j-1})\bv_j-\bv_j  \rVert_2}{\|(I-\hat{\Pi}_{j-1})\bv_j \|_2\vee \|\bv_j \|_2}\nonumber\\
&= 2\lVert \hat{\Pi}_{j-1}\bv_j \rVert_2.
\end{align*}
Since $\Pi_{j-1}\bv_j=0$,
\begin{align*}
\lVert  \tilde{\bv}_j-\bv_j   \rVert_2 
&\leq
2\lVert \hat{\Pi}_{j-1}\bv_j \rVert_2\\
&=2\lVert (\hat{\Pi}_{j-1}-\Pi_{j-1})\bv_j \rVert_2\nonumber\\
&\leq 2\lVert (\hat{\Pi}_{j-1}-\Pi_{j-1})\bv_j   \rVert_2\nonumber\\
&\leq 2\lVert \hat{\Pi}_{j-1}-\Pi_{j-1}    \rVert_{op} \lVert \bv_j    \rVert_2\nonumber\\
&\leq  2\lVert \hat{\Pi}_{j-1}-\Pi_{j-1}    \rVert_2
\end{align*}
and we have proved (a).

We now prove (b). We just need to show that $\langle  \hat{\Pi}_{j-1}, \tilde{\bv}_j\tilde{\bv}_j^{\intercal} \rangle=0$, which follows by the fact that
$\hat{\Pi}_{j-1}$
is a projection matrix and by the definition of $\tilde{\bv}_j$.

We finally prove (c) and (d). By the definition of $\tilde{\bv}_j$ 
\begin{align*}
    \tilde{\bv}_j=\frac{(I-\hat{\Pi}_{j-1})\bv_j}{\|I-\hat{\Pi}_{j-1})\bv_j \|_2}=\frac{\bv_j-\sum_{i=1}^{j-1}(\hat{\bv}_i^{\t}\bv_j)\hat{\bv}_i}{\|(I-\hat{\Pi}_{j-1})\bv_j \|_2}.
\end{align*}
The proof is straightforward and hence is omitted.
\end{proof}

\begin{lemma}
\label{lem:Pi}
Let $\Pi_{j-1}=\sum_{k=1}^{j-1}\bv_k \bv_k^{\t}$
and $\hat{\Pi}_{j-1}=\sum_{k=1}^{j-1}\bvhat_k \bvhat_k^{\t}$.
Then,
$$\lVert \hat{\Pi}_{j-1}-\Pi_{j-1}    \rVert_2\leq \sqrt{2}\sum_{k=1}^{j-1} \|\bvhat_k - \bv_k\|_2.
$$
\end{lemma}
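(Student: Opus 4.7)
The plan is a straightforward two-step argument using tools already established in the paper. First, I would write $\hat{\Pi}_{j-1} - \Pi_{j-1} = \sum_{k=1}^{j-1}(\bvhat_k\bvhat_k^{\t} - \bv_k\bv_k^{\t})$ by linearity of the sum defining the two projections, and then apply the triangle inequality for $\|\cdot\|_2$ to obtain
$$\|\hat{\Pi}_{j-1} - \Pi_{j-1}\|_2 \leq \sum_{k=1}^{j-1}\|\bvhat_k\bvhat_k^{\t} - \bv_k\bv_k^{\t}\|_2.$$

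Second, since $\bvhat_k$ and $\bv_k$ are unit vectors, Lemma \ref{lem:Hv} gives the bound $\|\bvhat_k\bvhat_k^{\t} - \bv_k\bv_k^{\t}\|_2 \leq \sqrt{2}\|\bvhat_k - \bv_k\|_2$ for each $k$. Summing these termwise yields the claimed inequality.

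There is essentially no obstacle here — the result follows immediately from the triangle inequality plus the rank-one perturbation bound in Lemma \ref{lem:Hv}. The only subtle point worth noting is that $\bvhat_k$ being a unit vector requires that $\Hhat_k$ (or at least the leading eigenvector extracted from it) is properly normalized, but this is built into the definition $\bvhat_j = \text{the first eigenvector of } \Hhat_j$. No sign alignment between $\bvhat_k$ and $\bv_k$ is needed because Lemma \ref{lem:Hv} is symmetric under the sign flip $\bvhat_k \mapsto -\bvhat_k$ (both $\|\bvhat_k\bvhat_k^{\t} - \bv_k\bv_k^{\t}\|_2$ and the final bound would need to be interpreted with the sign choice making $\|\bvhat_k-\bv_k\|_2$ smaller, but in the application, such a sign alignment is also consistent with the usage of this lemma in Proposition \ref{prop:1} where $\epsilon_k = \max(\|\Pi_k - \Pihat_k\|_2, \|\bvhat_k - \bv_k\|_2)$).
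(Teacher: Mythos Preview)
Your proposal is correct and takes exactly the same approach as the paper, which simply states ``The proof is straightforward with Lemma~\ref{lem:Hv}.'' Your two-step argument (triangle inequality on the sum, then Lemma~\ref{lem:Hv} termwise) is precisely what the paper has in mind.
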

\begin{proof}
The proof is straightforward with Lemma \ref{lem:Hv}.
\end{proof}

\begin{lemma}
\label{df_ftp} (Deflated Fantope projector \cite{chen2015localized})
For any $p\times p$ symmetric matrix $A$ and projection matrix $\Pi$, the deflated Fantope projection operator is defined as $ \mathcal{P}_{\mathcal{D}_{\Pi}}(A) \coloneqq \arg\min_{B\in \mathcal{D}_{\Pi}} \|A-B \|_2^2$. Specifically, let $\Pi = VV^{\t}$, where $V$ is a $p \times d$ matrix with orthonormal columns and let $U$ be a $p\times (p-d)$ matrix that forms an orthogonal complement basis of $V$. Then 
\begin{equation*}
\mathcal{P}_{\mathcal{D}_{\Pi}}(A) = U\left[\sum_{i=1}^{p-d}\gamma_i^{+}(\theta)\eta_i\eta_i^{\t}\right]U^{\t},
\end{equation*}
where $(\gamma_i, \eta_i)_{i=1}^{p-d}$ are eigenvalue-eigenvector pairs of $U^{\t}AU$: $U^{\t}AU = \sum_{i=1}^{p-d}\gamma_i\eta_i\eta_i^{\t}$, and $\gamma_i^{+}(\theta)=\min(\max(\gamma_i-\theta,0),1)$, with $\theta$ chosen such that $\sum_{i=1}^{p-d}\gamma_i^{+}(\theta)=1$.
\end{lemma}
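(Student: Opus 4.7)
The plan is to reduce the deflated-Fantope projection to a standard Fantope projection on the $(p-d)$-dimensional subspace orthogonal to the range of $\Pi$, and then invoke the explicit Fantope projection formula of \cite{vu2013fantope}.

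First, I would exploit the orthogonality constraint $\langle B,\Pi\rangle=0$ together with $B\succeq 0$. Since $\Pi=VV^{\t}$ is a rank-$d$ projection, $\langle B,\Pi\rangle=\tr(V^{\t}BV)=0$ and $V^{\t}BV\succeq 0$ force $V^{\t}BV=0$, hence $BV=0$. This means the range of any feasible $B$ lies in the column span of $U$, so $B=UCU^{\t}$ for a unique symmetric $(p-d)\times(p-d)$ matrix $C=U^{\t}BU$. Translating the remaining Fantope constraints through the isometry $C\mapsto UCU^{\t}$ (using $U^{\t}U=I_{p-d}$), we obtain $0\preceq C\preceq I_{p-d}$ and $\tr(C)=\tr(B)=1$. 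Conversely, any such $C$ yields a feasible $B=UCU^{\t}\in\mathcal{D}_{\Pi}$. Hence the feasible set is in bijection with the standard Fantope $\mathcal{F}^1_{p-d}$ of degree one in dimension $p-d$.

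Next, I would rewrite the objective in the $C$-parametrization. Using $\|A-UCU^{\t}\|_2^2=\|A\|_2^2-2\langle A,UCU^{\t}\rangle+\|UCU^{\t}\|_2^2$ and the identities $\langle A,UCU^{\t}\rangle=\langle U^{\t}AU,C\rangle$ and $\|UCU^{\t}\|_2^2=\|C\|_2^2$, the minimization reduces to
\begin{equation*}
\min_{C\in\mathcal{F}^1_{p-d}} \; \|U^{\t}AU-C\|_2^2,
\end{equation*}
which is precisely the Euclidean projection of the symmetric matrix $M:=U^{\t}AU$ onto the standard Fantope of degree one.

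Finally, I would invoke the Fantope projection formula from \cite{vu2013fantope}: for a symmetric matrix $M$ with spectral decomposition $M=\sum_{i=1}^{p-d}\gamma_i\eta_i\eta_i^{\t}$, its projection onto $\mathcal{F}^1_{p-d}$ is $\sum_i \gamma_i^{+}(\theta)\eta_i\eta_i^{\t}$ with $\gamma_i^{+}(\theta)=\min(\max(\gamma_i-\theta,0),1)$ and $\theta$ chosen so that $\sum_i \gamma_i^{+}(\theta)=1$; this follows from KKT conditions by a standard water-filling argument, noting the eigenvalue-preserving unitary invariance of the problem, monotonicity of $\theta\mapsto\sum_i\gamma_i^{+}(\theta)$, and the fact that the optimal $C$ shares eigenvectors with $M$. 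Pulling back through $C\mapsto UCU^{\t}$ yields the claimed expression. The main (and only genuinely delicate) step is the initial range reduction $BV=0$; once that is in place, the rest is a direct change of variables plus citation of the established one-dimensional Fantope projection lemma.
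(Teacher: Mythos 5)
Your proof is correct. Note that the paper does not actually prove this lemma; it is stated as a cited result from \cite{chen2015localized} (Lemma C.2 there). Your argument is nonetheless the standard and correct one: the key observation that $B\succeq 0$ together with $\langle B,\Pi\rangle=0$ forces $V^{\t}BV=0$ and hence $BV=0$ is exactly what reduces the deflated constraint set $\mathcal{D}_{\Pi}$ to a standard $(p-d)$-dimensional Fantope via $B=UCU^{\t}$, after which the objective becomes $\|U^{\t}AU-C\|_2^2$ (up to the additive constant $\|A\|_2^2-\|U^{\t}AU\|_2^2$) and the formula follows from the ordinary Fantope projection in \cite{vu2013fantope} (see also part~2 of Lemma~\ref{lem:ftp_prop}). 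This same change-of-variables device ($H = UGU^{\t}$ with $0\preceq G\preceq I$, $\tr G = 1$) is used explicitly in the paper's proof of Lemma~\ref{lem:basic}, so your route is fully consistent with the paper's toolkit.
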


\begin{lemma}
\label{lem:curv} (Curvature Lemma \cite{vu2013fantope}) Let $A$ be a symmetric matrix and $E$ be the projection onto the subspace spanned by the eigenvectors of A corresponding to its $d$ largest eigenvalues $\gamma_1\geq \gamma_2\geq \ldots$. If $\delta_A = \gamma_d - \gamma_{d+1} > 0$, then
\begin{equation*}
\frac{\delta_A}{2}||| E - F |||_2^2 \leq \la A, E-F \ra
\end{equation*}
for all $F$ satisfying $0 \preceq F \preceq I$ and $\tr(F) = d$.
\end{lemma}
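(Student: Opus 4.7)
The plan is to reduce the inequality to a diagonal problem by passing to the eigenbasis of $A$. After this reduction both sides become sums over the diagonal entries of a single matrix, and the result follows from the eigengap together with the constraint $\tr(F)=d$ and the semidefinite bound $F^2 \preceq F$.

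First I would write $A = U D U^{\intercal}$ with $D = \diag(\gamma_1,\ldots,\gamma_p)$ and orthonormal $U$ chosen so that $E = U E' U^{\intercal}$ where $E' = \diag(\underbrace{1,\ldots,1}_{d},0,\ldots,0)$. Setting $M := U^{\intercal} F U$, the constraints $0 \preceq F \preceq I$ and $\tr(F)=d$ transfer verbatim to $M$, and unitary invariance of the Frobenius norm gives $\|E-F\|_2 = \|E'-M\|_2$ and $\langle A, E-F\rangle = \langle D, E'-M\rangle$.

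Next I would handle the right-hand side. Direct expansion yields
\begin{equation*}
\langle D, E'-M\rangle \;=\; \sum_{i=1}^{d} \gamma_i (1-M_{ii}) \;-\; \sum_{i>d} \gamma_i\, M_{ii}.
\end{equation*}
The trace constraint $\tr(M)=d$ forces $\sum_{i=1}^{d}(1-M_{ii}) = \sum_{i>d} M_{ii}$; bounding $\gamma_i \geq \gamma_d$ for $i\leq d$ and $\gamma_i \leq \gamma_{d+1}$ for $i>d$ then produces the lower bound $\langle A, E-F\rangle \geq \delta_A \sum_{i=1}^{d}(1-M_{ii})$.

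For the left-hand side I would expand $\|E'-M\|_2^2 = d + \tr(M^2) - 2\sum_{i=1}^{d} M_{ii}$. The one nontrivial step — which I expect to be the only delicate part — is the inequality $\tr(M^2) \leq \tr(M) = d$; this is justified by noting that $0 \preceq M \preceq I$ implies $M^2 \preceq M$ in the Löwner order (eigenvalue-by-eigenvalue, since $0\leq \lambda \leq 1 \Rightarrow \lambda^2 \leq \lambda$), and the trace preserves this inequality. Combining gives $\|E-F\|_2^2 \leq 2\sum_{i=1}^{d}(1-M_{ii})$. Multiplying by $\delta_A/2$ and chaining with the lower bound from the previous paragraph yields $\tfrac{\delta_A}{2}\|E-F\|_2^2 \leq \langle A, E-F\rangle$, completing the proof.
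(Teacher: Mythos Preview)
Your argument is correct and is essentially the standard proof of this lemma from \cite{vu2013fantope}. The paper itself does not supply a proof of Lemma~\ref{lem:curv}; it simply states the result and attributes it to \cite{vu2013fantope}, so there is nothing further to compare.
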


\begin{lemma}
\label{lem:ftp_prop}
(Basic properties of Fantope projection \cite{lei2015sparsistency})
Let $A$ be a symmetric matrix with eigenvalues $\gamma_1\geq \ldots \geq \gamma_p$ and orthonormal eigenvectors $\bv_1, \ldots, \bv_p$.
\begin{itemize}
    \item[1.] $\max_{H\in \mathcal{F}^d}\la A, H\ra = \gamma_1 + \ldots + \gamma_d$ and the maximum is achieved by the projector of a $d-$dimensional principal subspace of $A$. Moreover, the maximizer is unique if and only if $\gamma_d > \gamma_{d+1}$.
    \item[2.] $\mathcal{P}_{\mathcal{F}^d}(A) = \sum_j \gamma_j^{+}(\theta)\bv_j\bv_j^{\t}$, where $\gamma_j^{+}(\theta)=\min(\max(\gamma_j-\theta,0),1)$ and $\theta$ satisfies the equation $\sum_j\gamma_j^{+}(\theta)=d$.
    \item[3.] If $0<\tau\leq \gamma_d-\gamma_{d+1}$, then
    \begin{equation*}
\arg\max_{H\in\mathcal{F}^d} \la A, H \ra = \arg\max_{H\in \mathcal{F}^d}\la A, H\ra - \frac{\tau}{2} \| H\|_2^2 = \mathcal{P}_{\mathcal{F}^d} (\tau^{-1}A) = \sum_{j=1}^{d}\bv_j\bv_j^{\t},
    \end{equation*}
    uniquely.
\end{itemize}
\end{lemma}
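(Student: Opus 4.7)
\medskip
\noindent\textbf{Proof plan for Lemma \ref{lem:ftp_prop}.}

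My plan is to handle the three parts in order, reducing each to a one-dimensional problem in the spectrum of $A$. Throughout, write the spectral decomposition $A = V\Lambda V^{\intercal}$ with $\Lambda = \diag(\gamma_1,\ldots,\gamma_p)$. The two key observations I will exploit are: (i) for any symmetric $H$, the numbers $d_j(H) := \bv_j^{\intercal} H \bv_j$ satisfy $d_j \in [0,1]$ and $\sum_j d_j = \tr(H)$ whenever $H \in \mathcal{F}^d$, because $0 \preceq H \preceq I$; and (ii) both the feasible set $\mathcal{F}^d$ and the objectives involved are invariant under the orthogonal change of basis $H \mapsto V^{\intercal} H V$, so without loss of generality I may work in $A$'s eigenbasis.

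For part 1, I will write
\begin{equation*}
\langle A,H\rangle \;=\; \tr(\Lambda V^{\intercal} H V) \;=\; \sum_{j=1}^p \gamma_j \,d_j(H),
\end{equation*}
and then maximize $\sum_j \gamma_j d_j$ subject to $d_j \in [0,1]$ and $\sum_j d_j = d$. This is a tractable linear program in $d_1,\ldots,d_p$: by a standard rearrangement argument its maximum is $\gamma_1+\cdots+\gamma_d$, attained by $d_j = 1$ for $j\leq d$ and $d_j = 0$ for $j>d$. Lifting back, any such maximizer forces $\bv_j^{\intercal} H \bv_j = 1$ for $j\leq d$ and $=0$ for $j>d$; combined with $0\preceq H\preceq I$, this pins down $H$ to have the leading $d$ eigenvectors of $A$ as its eigenvectors with eigenvalue $1$, i.e. $H$ is the projector onto a top-$d$ principal subspace. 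Uniqueness iff $\gamma_d>\gamma_{d+1}$ follows from the strict ordering in the LP: if $\gamma_d=\gamma_{d+1}$ any convex combination is optimal, while if $\gamma_d>\gamma_{d+1}$ the optimal diagonal $(d_j)$ is unique and then so is $H$.

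For part 2, the projection is the argmin of $\tfrac{1}{2}\|H-A\|_2^2 = \tfrac{1}{2}\|H\|_2^2 - \langle A,H\rangle + \tfrac{1}{2}\|A\|_2^2$ over $\mathcal{F}^d$. Writing $M = V^{\intercal} H V$, the objective becomes $\tfrac{1}{2}\|M\|_2^2 - \sum_j \gamma_j M_{jj} + \text{const}$, while the constraint $H\in\mathcal{F}^d$ is equivalent to $M\in \mathcal{F}^d$. Since only the diagonal of $M$ enters the objective, and off-diagonals only enlarge $\|M\|_2^2$ without helping, the optimal $M$ must be diagonal; hence $\mathcal{P}_{\mathcal{F}^d}(A)$ is diagonal in $A$'s eigenbasis. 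The problem reduces to minimizing $\tfrac{1}{2}\sum_j (m_j-\gamma_j)^2$ over $m_j\in[0,1]$ with $\sum_j m_j=d$. A Lagrangian argument (one multiplier $\theta$ for $\sum m_j = d$) with KKT yields $m_j = \min(\max(\gamma_j-\theta,0),1) = \gamma_j^{+}(\theta)$, and $\theta$ is determined by $\sum_j \gamma_j^{+}(\theta)=d$; existence of such $\theta$ follows from continuity/monotonicity of the left-hand side in $\theta$. Reassembling, $\mathcal{P}_{\mathcal{F}^d}(A) = \sum_j \gamma_j^{+}(\theta)\bv_j\bv_j^{\intercal}$.

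For part 3, I will combine parts 1 and 2. The $\arg\max$ of $\langle A,H\rangle$ was identified in part 1 as $\sum_{j=1}^d \bv_j\bv_j^{\intercal}$ (unique when $\gamma_d>\gamma_{d+1}$, which is guaranteed by $\tau \leq \gamma_d-\gamma_{d+1}$ and $\tau>0$). For the second expression, note that $\arg\max_{H\in\mathcal{F}^d} \langle A,H\rangle - \tfrac{\tau}{2}\|H\|_2^2 = \arg\min_{H\in\mathcal{F}^d} \tfrac{1}{2}\|H - \tau^{-1} A\|_2^2 = \mathcal{P}_{\mathcal{F}^d}(\tau^{-1}A)$, giving the middle equality. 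To evaluate the projection, apply part 2 to $\tau^{-1}A$: I need to choose $\theta$ so that $\tau^{-1}\gamma_j - \theta \geq 1$ for $j\leq d$ and $\tau^{-1}\gamma_j - \theta \leq 0$ for $j>d$, which is precisely the range $\tau^{-1}\gamma_{d+1} \leq \theta \leq \tau^{-1}\gamma_d - 1$. The assumption $0<\tau\leq \gamma_d - \gamma_{d+1}$ is exactly what makes this interval nonempty, and then $(\tau^{-1}\gamma_j)^{+}(\theta) = \I(j\leq d)$, so $\mathcal{P}_{\mathcal{F}^d}(\tau^{-1}A) = \sum_{j=1}^d \bv_j\bv_j^{\intercal}$, matching the other two. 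The most delicate step is really the eigenbasis reduction in part 2, where I must justify that the minimizer is diagonal in $V$'s basis; I expect the cleanest way is via the von Neumann trace inequality applied to $\langle A,H\rangle$, which shows any optimal $H$ must share an eigenbasis with $A$, so rotating off-diagonal mass only hurts.
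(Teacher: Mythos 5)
The paper does not actually prove this lemma; it is stated with a citation to \cite{lei2015sparsistency}, so there is no in-paper proof to compare against. Your argument is a correct, self-contained proof by the standard eigenbasis-reduction route (essentially the argument one finds in Lei and Vu), and all three parts go through. A few places should be tightened before it is a complete proof. In part~1 the sentence ``lifting back, any such maximizer forces $\bv_j^{\t}H\bv_j=1$ for $j\le d$ and $=0$ for $j>d$'' presupposes uniqueness of the optimal LP point, which only holds when $\gamma_d>\gamma_{d+1}$; you correct this when you discuss uniqueness, but the sentence should be conditioned from the start. To close the uniqueness claim you should also record why the diagonal values determine $H$: $\bv_j^{\t}H\bv_j=1$ together with $H\preceq I$ forces $H\bv_j=\bv_j$, and $\bv_j^{\t}H\bv_j=0$ together with $H\succeq 0$ forces $H\bv_j=0$, so $H$ is pinned down on the orthonormal basis $\{\bv_j\}$. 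In part~2 the claim that ``off-diagonals only enlarge $\|M\|_2^2$ without helping'' is right, but you must also observe that the diagonal truncation of $M$ remains feasible: $0\preceq M\preceq I$ gives $M_{jj}\in[0,1]$, and the trace is unchanged, so $\diag(M_{11},\ldots,M_{pp})\in\mathcal{F}^d$. Without this the reduction to the separable scalar quadratic program is not justified. You do not actually need von Neumann's inequality in part~2 --- your truncation argument is already the simplest route there --- whereas von Neumann is the natural one-liner for part~1 if you want to bypass the LP lift. Part~3 is correct and is just bookkeeping given parts~1 and~2: the interval $[\tau^{-1}\gamma_{d+1},\ \tau^{-1}\gamma_d-1]$ for $\theta$ is nonempty precisely when $0<\tau\le\gamma_d-\gamma_{d+1}$, which makes all three expressions equal to $\sum_{j\le d}\bv_j\bv_j^{\t}$.
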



\bibliography{reference}
\bibliographystyle{apalike}

\end{document}